\definecolor{lightgray}{gray}{0.9}
\definecolor{dnrbl}{rgb}{0,0,0.5}
\definecolor{dnrgr}{rgb}{0,0.5,0}
\definecolor{dnrre}{rgb}{0.5,0,0}
\theoremstyle{plain}
\newtheorem{thm}{Theorem}[section]
\newtheorem{lem}[thm]{Lemma}
\newtheorem{coro}[thm]{Corollary}
\newtheorem{defi}[thm]{Definition}
\numberwithin{equation}{section}
\newcommand{\un}{\uparrow} 
\newcommand{\de}{\downarrow} 
\newcommand{\ml}{Martin-L\"{o}f }
 \newcommand{\ivr}{integer-valued random}
 \newcommand{\pivr}{partial integer-valued random}
 \newcommand{\ivrs}{integer-valued randoms}
 \newcommand{\pivrs}{partial integer-valued randoms}
 \newcommand{\anc}{array noncomputable}
\begin{document}
\title[integer-valued Randomness and Turing Degrees]{Integer 
Valued Betting strategies and Turing Degrees}
\author{George Barmpalias, Rod G.~Downey and Michael McInerney}
\address{{\bf George Barmpalias} (1) State Key Lab of Computer Science, 
Institute of Software,
Chinese Academy of Sciences,
Beijing 100190,
China and (2) School of Mathematics, Statistics and Operations Research,
Victoria University, Wellington, New Zealand}
\email{barmpalias@gmail.com}
\address{{\bf Rod G.~Downey and Michael McInerney:} 
School of Mathematics, Statistics and Operations Research,
Victoria University, Wellington, New Zealand.}
\email{\{rod.downey, michael.mcinerney\}@msor.vuw.ac.nz}

\date{\today}

\thanks{All authors are supported by the Marsden Fund of New Zealand.
Barmpalias was also supported by the 
Research Fund for International Young Scientists 
from the National Natural Science Foundation of China,
grant number 613501-10535.
Research was (partially) completed 
while the authors were 
visiting the Institute for 
Mathematical Sciences, National University of Singapore in July, 2014}

\begin{abstract} 
Betting strategies are often expressed formally as martingales. 
A martingale is called integer-valued if each bet must be an
integer value. 
Integer-valued strategies correspond to the fact
that in most betting situations, there is a minimum amount that
a player can bet. According to a well known paradigm,
algorithmic randomness can be founded on the notion of betting strategies.
A real $X$ is called integer-valued random
if no effective integer-valued martingale
succeeds on $X$.
It turns out that
this notion of randomness has interesting interactions
with genericity and the computably enumerable degrees. 
We investigate the computational power of
the integer-valued random reals in terms of standard notions from
computability theory.
\end{abstract}

\maketitle

\section{Introduction}

An interesting strategy for someone who wishes to make a profit
by betting on the outcomes of a series of unbiased coin tosses, is to
double the the amount he bets each time he places a bet.
Then, independently of whether he bets on heads or tails, if the coin is fair
(i.e.\ the sequence of binary outcomes is random) he is guaranteed to win infinitely many bets.
Furthermore, each time he wins he
recovers all previous losses, plus he wins a profit equal to the original stake.
This is a simple example from a class of betting strategies 
that originated from, and were popular in 18th century France.
 They are known as {\em martingales}. The ``success'' of this strategy is essentially 
 equivalent to the fact that a symmetric one-dimensional random walk will eventually travel an
 arbitrarily long distance to the right of the starting point (as well as an arbitrarily long distance 
 to the left of the starting point).

 So what is the catch? For such a strategy to be maintained, the player needs to be able
 to withstand arbitrarily large losses, and such a requirement is not practically feasible.
 In terms of the random walk, this corresponds to the fact that, before it travels a large distance to the right
 of the starting point, it is likely to have travelled a considerable distance to the left of it.
 
 \subsection{Martingales and randomness}
 Martingales have been reincarnated in probability theory (largely though the work of Doob), 
 as (memoryless) 
 stochastic processes $(Z_n)$ such that the conditional expectation of each $Z_{n+1}$
 given $Z_n$ remains equal to the expectation of $Z_0$.
 The above observations on a fair coin-tossing game 
 are now theorems in the theory of martingales. For example, Doob's maximal 
 martingale inequality
 says that with probability 1, a non-negative martingale is bounded. Intuitively this means that,
 if someone is not able (or willing) to take credit (so that he continues to bet after his balance is negative)
 then the probability that he makes an arbitrarily large amount in profit is 0.
 
Martingales in probability rest on a concept of randomness in order to determine (e.g.\ with high probability)
or explain the outcomes of
stochastic processes. In turns out that this methodology can be turned upside down, so that certain processes
are used in order to define or explain the concept of randomness.
 Such an approach was initiated by
 Schnorr in \cite{Schnorr}, and turned out to be
one of the standard and most intuitive methods of assigning meaning to the concept 
of randomness for an individual string or a real (i.e.\ an 
infinite binary sequence, a 
point in the Cantor space). This approach is often known as
the \emph{unpredictability paradigm}, and
it says that it should not be possible for a computable predictor
to be able to predict bit $n+1$ of a  real
 $X$ based on knowledge of 
bits $1,\dots,n$ of $X$, namely $X\upharpoonright n$. 
The unpredictability paradigm  can be formalized by using \emph{martingales},
which (for our purposes)
can be seen as betting strategies.
We may define a martingale 
 to be a 
function $f:2^{<\omega}\to {\mathbb R}^{\geqslant 0}$  
which obeys 
the following fairness condition:
\[
f(\sigma)=\frac{f(\sigma 0)+f(\sigma 1)}{2}.
\]
If $f$ is partial, but its domain is downward closed with respect to the prefix relation on
finite strings, then we say that $f$ is a {\em partial martingale}.

In probability terms, $f$ can be seen as a 
stochastic process (a series of dependent variables) $Z_s$
where $Z_s$ represents the capital of a player at the end of the $s$th bet
(where there is 50\% chance for head or tails).
Then the fairness condition says that the expectation of $f$ at stage $s+1$ is the same
as the value of $f$ at stage $s$. In other words, the fairness condition says that the expected
growth of $f$ at each stage of this game is 0. 
If we interpret $f$ as the capital of a player
who bets on the outcomes of the coin tosses, the fairness condition says that there is no bias
in this game toward the player or the house.
Moreover note that our definition of a martingale as a betting strategy requires that it is non-negative.
Recalling our previous discussion about gambling systems, this means that we do not allow the player to
have a negative balance. This choice in the definition is essential, as it 
prevents the success of a `martingale betting system' as we described it.
Continuing with our definition of martingales as betting strategies,
we say that 
$f$ succeeds on a real $X$ if
\[
\limsup_{n\to \infty}f(X\upharpoonright n) = \infty.
\]
Schnorr \cite{Schnorr} was interested in an algorithmic concept of
randomness. Incidentally, Martin-L\"of \cite{ML} had already provided a
mathematical definition of randomness based on computability theory and effective measure theory.
But Schnorr wanted to approach this challenge via the intuitive concept of betting strategies.
He
proved that a real (i.e.\ an infinite binary sequence) $X$ is 
Martin-L\"of random if and only if no effective martingale can 
succeed on it.
Here ``effective'' means 
that $f$ is computably approximable from below.
Schnorr's result is an effective version of the maximal inequality for martingales
in probability theory, which says that with probability 1 a non-negative 
martingale is bounded.
There is a huge literature about the relationship between 
martingales and effective randomness, and variations on the theme,
such as computable martingales and randomness, partial computable 
martingales, nonmonotonic martingales, polynomial time 
martingales, etc. We refer the reader to Downey and Hirschfeldt
\cite{roddenisbook} and Nies \cite{Ottobook} for some details
and further background.

\subsection{Why integer-valued martingales?}
Recall the standard criticism of martingale betting systems, i.e.\ that their success
depends on the ability of the player to sustain arbitrarily large losses. This criticism
lead (for the purpose of founding algorithmic randomness) to defining a martingale
as a function from the space of coin-tosses to the non-negative reals
(instead of all the reals) which represent the possible values of the capital available
to the player. There is another criticism 
on such betting strategies that was not taken into account in the formal definition.
Schnorr's definition of a
martingale (as a betting strategy) allows betting infinitesimal 
(i.e.\ arbitrarily small) amounts.
Clearly such an option is not available
in real gambling situations, say at a casino,
where you cannot bet arbitrarily small amounts on some outcome.
It becomes evident that restricting  the betting strategies to a discrete
range results in a more realistic concept of betting.
Such considerations led 
Bienvenu, Stephan and Teutsch 
\cite{BST}
to introduce and 
study \emph{integer-valued martingales}, and the corresponding 
randomness notions. Interestingly, it turns out
that the algorithmic randomness based on
 integer-valued martingales is quite different from
the theory of randomness based on Martin-L\"of \cite{ML} or 
Schnorr \cite{Schnorr} (as developed in the last 30 years, see 
\cite{roddenisbook, Ottobook} for an overview).
The reason for this difference is that most of the classical martingale arguments
in algorithmic randomness make substantial use of the
property of being able to bet infinitesimal amounts (thereby 
effectively avoiding bankruptcy at any finite stage of the process).

Quite aside from the motivations of examining 
the concept of integer-valued martingales 
for its own sake,
if we are to examine the randomness 
that occurs in practice, then such discretised randomness will be the 
kind we would get. The reason is evident:
we can only use a finite number of 
rationals for our bets, and these scale to give 
integer values. \emph{Additionally}, at the more speculative level,
if the universe is granular, finite,  and not a manifold, then if there is 
any randomness to be had (such as in 
quantum mechanics) it will be integer-valued for the same 
reason. 

\begin{table}
\begin{center}
\arrayrulecolor{green!50!black}
  \begin{tabular}{lll}
 \hline\hline
{\small Integer-valued} &	\hspace{0.2cm}  &  {\small Computable}	   \\[1ex]
 {\small Finite-valued}	& \hspace{0.2cm}  &  {\small Single-valued}\\[1ex]
 {\small Partial integer-valued}   & \hspace{0.2cm} & {\small Partial computable}\\[1ex]
\hline\hline
\end{tabular}
\vspace{0.2cm}
\caption{Randomness notions based on effective martingales}
\label{ta:vschelparbouatd}
\end{center}
\end{table}

\subsection{Integer randomness notions and computability}
We formally introduce and discuss the notions of integer-valued randomness
in the context of computability theory.
For the purposes of narrative flow, we will assume that the reader is 
familiar with the basics of algorithmic randomness.
Schnorr
based algorithmic randomness on the concept of effective strategies.
Along with this foundational work, he introduced and philosophically argued for
a randomness notion which is weaker than Martin-L\"of
randomness and is now known as {\em Schnorr randomness}. Further notions, like
{\em computable randomness}, are quite natural from the point of view of betting strategies
and
have been investigated extensively.
Integer-valued martingales induce randomness notions with properties
that are
quite a different in flavour from those 
of, for instance, Martin-L\"of randomness, computable randomness
and the like. 
Our goal in the present paper is to clarify the 
relationship between integer-valued randomness and 
classical degree classes which measure levels of 
computational power.

\begin{defi}[Integer-valued martingales]\label{de:intvalmar}
Given a finite set
$F\subseteq \mathbb{N}$,
we say that a martingale $f$ is $F$-valued if
$f(\sigma i)=f(\sigma)\pm k$ for some $k\in F$. 
A martingale is integer-valued if it is
$\mathbb{N}$-valued, and is 
single-valued if $F=\{a,0\}$ for some $a\ne 0$.
\end{defi}
\noindent
Note that a martingale is $F$-valued if at any stage we can only bet 
$k$ dollars for some $k\in F$ on one of the outcomes $i\in \{0,1\}$, and 
must lose $k$ dollars if $1-i$ is the next bit. 
We note that partial integer-valued martingales are defined as in Definition
\ref{de:intvalmar}, only that the martingales can be partial.
In the following we often say that, given a string $\sigma$, the string $\sigma 0$
is the {\em sibling} of $\sigma 1$ (and $\sigma 1$ is the {\em sibling} of $\sigma 0$).

If we restrict our attention to the countable class of computable or
partial computable martingales, we obtain a number of 
algorithmic randomness notions.
For example, a real is [partial] computably random if no [partial] computable
martingale succeeds on it. Similar notions are obtained if we consider
integer-valued martingales. 

\begin{defi}[Integer-valued randomness]
A real $X$ is [partial] integer-valued random  
if no [partial] computable integer-valued martingale succeeds on it.
Moreover $X$ is finitely-valued random 
if for each finite set $F\subseteq\mathbb{N}$, 
no computable $F$-valued martingale succeeds on it, and is 
single-valued random 
if no computable single-valued martingale succeeds on it.
\end{defi}
\noindent
We list these randomness notions in Table \ref{ta:vschelparbouatd}, along
with the traditional randomness notions computable and partial computable randomness.
Note that partial integer-valued randomness is stronger than integer-valued randomness,
just as partial computable randomness is stronger than computable randomness.
Bienvenu, Stephan and Teutsch \cite{BST} 
clarified the relationship between integer-valued, single-valued and 
a number of other natural randomness notions. 
Figure \ref{fig:dynam2} illustrates some of the implications that they
obtained.
We already know (see \cite{roddenisbook})  that 
 computable randomness implies 
Schnorr randomness, which in turn implies Kurtz randomness
(with no reversals)
and that  Schnorr randomness implies the law of large numbers.
Bienvenu, Stephan and Teutsch proved that 
if we add the above notions to the diagram in Figure \ref{fig:dynam2},
no other implications hold apart from the ones
in the diagram and the ones just mentioned.
In addition, we may add a node for `partial computably random'
and an arrow from it leading to the node `computably random'.
Nies showed in \cite[Theorem 7.5.7]{Ottobook} 
that the converse implication does not hold, 
i.e.\ there are computably random reals which are not partial computably
random. 
A strong version of this fact holds for integer-valued randomness.
We show that there are
integer-valued random reals which not only are not partial integer-valued random, but
they do not contain any partial integer-valued random reals in their Turing degree.

\begin{figure}
\begin{tikzpicture}[
Nnode/.style={rectangle,  minimum size=6mm, very thick,
draw=green!50!black!50, top color=white, rounded corners, bottom color=green!50!black!20, font=\small },
tnode/.style={rectangle,  minimum size=6mm, rounded corners, very thick,
draw=red!50!black!50, top color=white, bottom color=red!50!black!20, 
 font=\small},
 ttnode/.style={rectangle,  minimum size=6mm, rounded corners, very thick,
draw=yellow!50!black!50, top color=white, bottom color=yellow!50!black!20, 
 font=\small}]
 \node (cr) [Nnode, outer sep=3pt, inner sep=7pt] at (1,1.5) {\textrm{computably random}};
\node (ivr) [Nnode, outer sep=3pt, inner sep=7pt] at (5.5,1.5) {\textrm{\ integer-valued random\ \ }};
\node (fvr) [Nnode, outer sep=3pt, inner sep=7pt] at (5.5,0) {\textrm{\ finitely-valued random\ \ }};
\node (svr) [Nnode, outer sep=3pt, inner sep=7pt] at (10,0) {\textrm{\ single-valued random\ \ }};
\draw [->] (cr) -- (ivr); \draw [->] (ivr) -- (fvr); \draw [->] (fvr) -- (svr);
\node (kr) [tnode, outer sep=3pt, inner sep=7pt] at (10,1.5) {\textrm{\ Kurtz random\ \ }};
\node (bi) [ttnode, outer sep=3pt, inner sep=7pt] at (1,0) {\textrm{\ bi-immune\ \ }};
\draw [->] (kr) -- (svr); 
\draw [->] (fvr) -- (bi); 
\end{tikzpicture}
\caption{{\textrm Implications between randomness notions obtained
in \cite{BST}.}}
\label{fig:dynam2}
\end{figure}
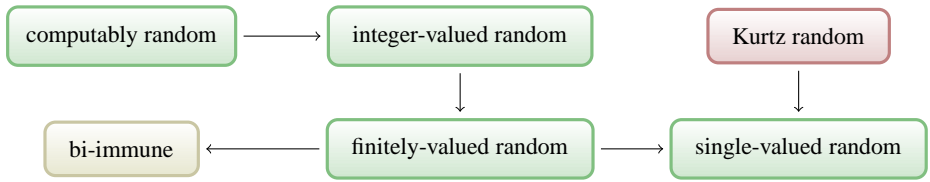

One interesting observation of Bienvenu, Stephan and Teutsch 
\cite{BST}, was that integer-valued
was a meeting point of genericity (and hence category) and 
measure since weakly $2$-generic sets are 
\ivr. Hence the \ivrs\ are co-meager as well as having
measure 1. The reader should recall that 
a subset of $\mathbb{N}$ is called $n$-generic if it 
meets or avoids all $\Sigma_n^0$ sets of strings, and is 
weakly $n$-generic if it meets all dense $\Sigma_n^0$ sets of strings.
The reason that highly generic reals are \ivr\ is that there is a finitary strategy 
to make a set \ivr, since we can force an opponent who bets to lose.
The point here is that if the minimum bet is  
one dollar and he has $k$ dollars to spend, then 
he can lose at most $k$ times, so we can use a finite strategy 
to force the opponent into a cone (in Cantor space) where he cannot win.
This finite strategy is not available if arbitrarily small bets are allowed.
Naturally the question 
arises as to what level of genericity is needed for constructing \ivrs.
Bienvenu, Stephan and Teutsch
\cite{BST} proved that it is possible to have a $1$-generic which
was \emph{not} \ivr.
So the answer they gave to this question is `somewhere between 
weak 2 and 1-genericity'.
As we see in the next section, we give a more precise answer.

\subsection{Our results, in context}
Bienvenu, Stephan and Teutsch 
\cite{BST} showed that the class of integer-valued random is co-meager, so
sufficient genericity is a guarantee for this kind of randomness. They also 
quantified this statement via the hierarchy of genericity,
showing that
the genericity required lies 
somewhere between weak 2-genericity and 1-genericity.
We show that a notion of genericity from \cite{DJS2}
which is known as $pb$-genericity, 
implies (partial) integer-valued randomness.
Recall from \cite{DJS2} that set of strings $S$ is $pb$-dense if
it contains the range of a  function $f:2^{<\omega}\to 2^{<\omega}$ with a computable approximation
$(f_s)$ such that $f(\sigma)\succcurlyeq \sigma$ for all strings $\sigma$
and $| \{ s\ |\  f_{s+1}(\sigma) \ne f_{s}(\sigma) \}| \leqslant h(\sigma)$ 
where the function $h: 2^{<\omega}\to \omega$
is primitive recursive. A real $X$ is $pb$-generic if every $pb$-dense
set of strings contains a prefix of $X$.

\begin{thm}[Genericity for integer-valued randoms]\label{wMGIVR}
Every $pb$-generic real is (partial) \ivr. 
\end{thm}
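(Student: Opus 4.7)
The plan is to fix an arbitrary partial computable integer-valued martingale $M$ and exhibit, effectively from $M$, a $pb$-dense set $S_M \subseteq 2^{<\omega}$ such that any real $X$ that has a prefix in $S_M$ is not a success of $M$. Since a $pb$-generic real meets every $pb$-dense set, the conclusion then follows by quantifying over all such $M$.

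For $S_M$ I will take the range of a function $f_M \colon 2^{<\omega} \to 2^{<\omega}$ with $f_M(\sigma) \succcurlyeq \sigma$ whose values always lie in the ``safe'' set
\[
B(M) \;=\; \bigl\{\tau : M(\tau') \leq M(\tau) \text{ for every } \tau' \succcurlyeq \tau \text{ in the domain of } M\bigr\},
\]
together with the strings at which $M$ is already undefined. The point of $B(M)$ is that $M$ is bounded above any $\tau \in B(M)$, hence $\limsup_n M(Y \restr n) < \infty$ for every real $Y \succ \tau$ on which $M$ is everywhere defined.

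The construction of $f_M$ proceeds by a greedy reassignment process. Set $f_{M,0}(\sigma) = \sigma$. At stage $s$, with current estimate $\tau = f_{M,s-1}(\sigma)$ and $k = M_s(\tau)$ (if defined), I will search for an extension $\tau^* \succcurlyeq \tau$ of length at most $s$ such that $M_s$ is defined along the entire path from $\tau$ to $\tau^*$ and $M_s(\tau^*) > k$. If such a $\tau^*$ is found, walk from $\tau$ towards $\tau^*$ until reaching the first node $\tau^\dagger$ with $M_s(\tau^\dagger) > k$, and set $f_{M,s}(\sigma)$ equal to the sibling of $\tau^\dagger$; by the martingale identity this sibling, if in the domain, has $M$-value at most $k-1$. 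Otherwise, keep $f_{M,s}(\sigma) = \tau$. The quantitative heart of the argument is the mind-change bound: each reassignment strictly decreases the $M$-value of the current estimate by at least $1$, and iterating fairness along any branch gives $M(\sigma) \leq 2^{|\sigma|} M(\epsilon)$; hence the number of mind changes at $\sigma$ is bounded by $h(\sigma) = 2^{|\sigma|} M(\epsilon)$, which is primitive recursive in $|\sigma|$ once $M$ is fixed.

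It then follows that $S_M = \operatorname{range}(f_M)$ is $pb$-dense. The approximation $f_{M,s}(\sigma)$ stabilizes precisely when no witness $\tau^*$ to unboundedness above the current $\tau$ can be found, i.e.\ precisely when $\tau \in B(M)$. A $pb$-generic $X$ therefore has some prefix $f_M(\sigma_0) \prefi X$ above which $M$ is bounded (or at which $M$ becomes undefined), so $M$ cannot succeed on $X$. I expect the main subtlety to lie in the partial case, where a sibling selected at some stage may lie outside the domain of $M$; such an outcome is automatically safe for our purposes since $M$ cannot succeed on any extension through it, but one has to thread this cleanly through the mind-change accounting so that the primitive recursive bound is preserved.
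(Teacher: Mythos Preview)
Your proof is correct and follows essentially the same approach as the paper: build, for each partial computable integer-valued martingale $M$, a $pb$-dense set witnessed by a function $f_M$ whose mind changes are bounded by $2^{|\sigma|}M(\lambda)$, with the limit landing at a string above which $M$ is bounded (or undefined). The only difference is cosmetic: the paper, at each stage, searches directly for an extension of the current estimate with \emph{smaller} $M$-value and moves there, whereas you search for an extension with \emph{larger} value and move to the sibling of the first increase point; both rules strictly decrease the $M$-value by at least $1$ per move and hence give the same primitive-recursive mind-change bound. Your version has the mild advantage that the stabilization criterion (``no extension with larger value'') is literally membership in $B(M)$, so the boundedness conclusion is immediate without a further sibling argument, and the partial case (sibling undefined) is handled automatically since you then never move again.
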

\noindent
This result might suggest that integer-valued randomness and partial
integer-valued randomness are not easily distinguishable. 
In Section \ref{subse:anivrnotpivr} we present a rather elaborate 
finite injury construction
of a real which is integer-valued random but not partial \ivr.
In Section \ref{subse:invrncomppivr} this construction is modified into a
$\mathbf{0}''$ tree argument, which proves the following degree separation
between the two randomness notions.

\begin{thm}[Degree separation of randomness notions]\label{th:pivrfromivr}
There exists an integer-valued random $X<_T \mathbf{0}'$
which does not compute any partial
integer-valued random.
\end{thm}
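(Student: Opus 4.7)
We build $X\in\Delta^0_2$ via a priority tree of strategies of complexity $\mathbf{0}''$, adapting the finite-injury construction of the previous subsection. Enumerate the computable integer-valued martingales as $(M_e)_{e\in\omega}$ and the Turing functionals as $(\Phi_e)_{e\in\omega}$, and impose, for each $e$, the requirements
\begin{align*}
P_e &: M_e \text{ does not succeed on } X,\\
N_e &: \text{if } \Phi_e^X \text{ is total, some partial computable integer-valued martingale succeeds on } \Phi_e^X,
\end{align*}
together with standard diagonalization ensuring $\mathbf{0}' \not\leq_T X$. Setting $\Phi_0=\mathrm{id}$ also ensures, via $N_0$, that $X$ itself is not partial integer-valued random.

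The $P_e$-strategy uses the same finitary device exploited in Theorem \ref{wMGIVR}: if $M_e$ bets a positive integer at a string $\sigma$ with $M_e(\sigma)=c$, then one of $M_e(\sigma 0), M_e(\sigma 1)$ is at most $c-1$. Always advancing $X$ along such a ``losing'' sibling (and along either bit when $M_e$ bets zero) bounds $M_e$ on $X$ past $\sigma$ by $c$. Combined with priority, this is routine.

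For $N_e$, at each tree node $\alpha$ at level $2e+1$ whose outcome conjectures ``$\Phi_e^X$ is total,'' we build, uniformly in $\alpha$, a partial computable integer-valued martingale $N_\alpha$. At each visit to $\alpha$, the strategy extends $N_\alpha$'s domain by one more position along the current approximation to $\Phi_e^X$, committing $N_\alpha$ to bet $1$ on the bit just observed. When, after an injury, the approximation to $\Phi_e^X$ diverges at some position from the previously-committed path, resuming $N_\alpha$ along the new approximation forces fairness-driven values at the sibling nodes, costing $N_\alpha$ at most one dollar per changed bit. Along the true path of the tree, the correct node $\alpha_e$ is injured only finitely often, so $N_{\alpha_e}$ incurs only finitely many such forced dollars; after the last injury, every further extension of $N_{\alpha_e}$ processes a bit of $\Phi_e^X$ already stable in the approximation, gaining one dollar. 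With sufficient initial capital (a computable function of $\alpha_e$'s tree position), $N_{\alpha_e}$ succeeds on $\Phi_e^X$, witnessing $N_e$.

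\textbf{Main obstacle.} The technical crux is bounding, uniformly and computably in $\alpha$, the total dollars that $N_\alpha$ can be forced to spend on wrong commitments before stabilizing on the true path. This demands careful coordination between $P$-level diagonalization moves (which reshape $X$ and hence $\Phi_e^X$) and the timing of $N_\alpha$'s domain extensions: $\alpha$ extends its martingale by at most one position per visit and defers extensions across higher-priority $P$-activity, so the number of injuries to $\alpha$ is bounded by the number of strictly higher-priority tree nodes, giving a computable bound on the initial capital required for $N_\alpha$. The remaining verification along the $\mathbf{0}''$-true path is standard.
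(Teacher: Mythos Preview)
Your proposal has a structural gap: the $N_e$-node $\alpha$ is monolithic and infinitary (it bets on every bit of $\Phi_e^X$), and you do not explain how the lower-priority $P$-requirements below $\alpha$ are still met. If $\alpha$ protects each bet by restraining $X$ up to the use $\phi_e(n)$, this restraint grows without bound and chokes every $P_{e'}$ with $e'>e$, so $X$ will not be integer-valued random. If instead $\alpha$ does not restrain, then the infinitely many weaker $P_{e'}$ strategies --- each finitary, but collectively changing $X$ infinitely often at positions below the uses of bits $N_\alpha$ has already committed to --- can repeatedly move $\Phi_e^X$ off $N_\alpha$'s path, so the losses are not bounded by any function of $\alpha$'s tree position. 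Your sentence ``the number of injuries to $\alpha$ is bounded by the number of strictly higher-priority tree nodes'' counts only actions from \emph{above}; it is the interaction with weaker $P$-strategies that is unaccounted for (and incidentally, even a single higher-priority $P_e$-node acts many times, not once).

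The paper's construction handles this with two devices you omit. First, the requirement ``$\Phi_e^A$ is not partial integer-valued random'' is split into finitary subrequirements $Q^*_{e,k}$ (win at least $\$k$), interleaved in priority with the $R$-requirements; each $Q^*_{e,k}$ imposes only a finite restraint, so weaker $R$'s can act between $Q^*_{e,k}$ and $Q^*_{e,k+1}$. Second --- and this is the essential idea carried over from Section~\ref{subse:anivrnotpivr} --- the $R_e$-strategy actively \emph{cooperates} with the partial martingale $m$: before changing $A$ to drain $n_e$, it first routes $A$ into a reserved fresh cone and defines $m$ along a string there on which $n_e$ has not grown, so that $m$'s capital is pushed above $n_e$'s and can absorb the subsequent losses. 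In the $\Phi_e^A$ setting this manoeuvre additionally uses $e$-splittings (available whenever $\Phi_e^A$ is noncomputable) to steer $\Phi_e^A$ onto a fresh branch where $m$ has not yet been committed. Your ``bet $\$1$ on the observed bit'' strategy has no analogue of this capital-building step and makes no use of splittings.
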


We are interested in classifying the computational power that is associated with
integer-valued randomness.
Computational power is often represented by properties of degrees, which
in turn define classes like the degrees which 
can solve the halting problem, or the array non-computable degrees
from \cite{DJS2}.
The reader might recall that $A$ is array noncomputable 
if and only if for all $f\le _{wtt}\emptyset'$,
 there is a function
$g\le_T A$ such that $g(n)>f(n)$ for infinitely many $n$. 
This class has turned out to be ubiquitous 
and characterized classes defined by many distinct combinatorial 
properties. 
Recall that a presentation of a real $A$ is a
 c.e.\ prefix-free set of strings  representing an open set of 
 Lebesgue measure $\alpha$.
The c.e.\ array noncomputable degrees are exactly the c.e.\ 
degrees that 
\begin{itemize}
\item[(a)] contain c.e.\ sets $A$ of infinitely often maximal (i.e.\ $2\log n$)
Kolmogorov complexity; (Kummer \cite{kummer})
\item[(b)] have effective packing dimension 1; (Downey and Greenberg \cite{DGpacking})
\item[(c)] compute left-c.e.\ reals 
$B<_T A$ such that 
every presentation of $A$ is computable from $B$; (Downey
and Greenberg \cite{DGpactransf})
\item[(d)] compute a pair of disjoint c.e.\ sets such that
 every separating set for this pair
computes the halting problem; (Downey, Jockusch, and Stob \cite{DJS1})
\item[(e)] do not have strong minimal covers; (Ishmukhametov \cite{Ishm})
\end{itemize}
Also by Cholak, Coles, Downey, Herrmann 
\cite{cohere}
the array noncomputable
c.e.
degrees form an invariant class for the lattice of $\Pi_1^0$
classes via the thin perfect classes. 

Theorem \ref{wMGIVR} can be used to 
show that a large class of degrees compute (partial)
\ivrs. By \cite{DJS2}, every \anc\ degree computes a $pb$-generic.
Therefore Theorem \ref{wMGIVR} has the following consequence.

\begin{coro}[Computing integer-valued randoms]\label{coro:anccpivr}
Every array noncomputable degree computes a (partial)
integer-valued random.
\end{coro}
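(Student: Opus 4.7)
The plan is essentially to compose the two facts named right before the corollary, so the proof should be very short. First I would invoke the result from Downey--Jockusch--Stob \cite{DJS2} which says that every array noncomputable degree $\degr{a}$ computes a $pb$-generic real $G$, i.e.\ there exists $G \leqslant_T \degr{a}$ such that $G$ meets every $pb$-dense set of strings in the sense defined in the excerpt.

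Next I would apply Theorem \ref{wMGIVR} to this $G$. Since $G$ is $pb$-generic, the theorem immediately yields that $G$ is (partial) integer-valued random. Combining this with $G \leqslant_T \degr{a}$ gives that $\degr{a}$ computes a (partial) integer-valued random, which is exactly the statement of the corollary.

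The only substantive thing to double-check is that the $pb$-generic obtained from \cite{DJS2} is genuinely Turing-below the given array noncomputable degree (as opposed to merely existing relative to it), but this is precisely the content of that reference, so there is no real obstacle. In short, the corollary is a two-line composition: array noncomputable $\Rightarrow$ computes a $pb$-generic (by \cite{DJS2}) $\Rightarrow$ computes a (partial) integer-valued random (by Theorem \ref{wMGIVR}).
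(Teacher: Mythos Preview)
Your proposal is correct and matches the paper's own argument exactly: the paper derives the corollary in one line by citing \cite{DJS2} for the fact that every array noncomputable degree computes a $pb$-generic, and then applying Theorem~\ref{wMGIVR}.
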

\noindent
Note however that an \ivr\ need not be of \anc\ degree. Indeed,
it is well known that there are array computable \ml randoms.
A converse of Corollary \ref{coro:anccpivr} can be obtained
for the c.e.\ degrees, as Theorem \ref{ancccc} shows.

While genericity is an effective tool for exhibiting integer-valued randomness
in the global structure of the degrees (as we demonstrated above)
it is incompatible with computable enumerability.
Since generic degrees (even 1-generic) are not c.e., investigating
integer-valued randomness in the c.e.\ degrees requires a different analysis.
Already the fact that randomness can be exhibited in the c.e.\ degrees is
quite a remarkable phenomenon, and restricted to weak versions of randomness.
Martin-L\"of randomness is the strongest standard randomness notion that
can be found in the c.e.\ degrees. A c.e.\ degree contains a  Martin-L\"of random set
if and only if it is complete (i.e.\ it is the degree of the halting problem).
Furthermore, the complete c.e.\ degree contains the most well-known 
random sequence---Chaitin's $\Omega$---which is
the measure of the domain of a universal prefix-free machine, the 
universal \emph{halting probability}. 
An interesting characteristic of this random number is that it is left-c.e., i.e.\ it
can be approximated by a computable increasing sequence of rationals. 
Weaker forms of randomness---like computable and Schnorr randomness---can
be found in incomplete c.e.\ degrees, even in the form of left-c.e.\ sets.
For example, Nies, Stephan and Terwijn \cite{NST} showed that the c.e.\ degrees
which contain computably and Schnorr random sets are exactly the high c.e.\ degrees.
Moreover each high c.e.\ degree contains a 
computably random left-c.e.\ set 
and a Schnorr random left-c.e.\ set.
We prove an analogous result for integer-valued randomness and partial
integer-valued randomness.

\begin{thm}[C.e.\ degrees containing \ivr\ left-c.e.\ sets] \label{cedegcontivrlceres}
A c.e.\ degree contains a (partial) left-c.e.\ \ivr\ if
and only if it is high.
\end{thm}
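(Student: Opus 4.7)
\emph{Proof plan.}

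The plan is to establish both directions of the equivalence.

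For the ``high implies contains'' direction, I will construct a left-c.e.\ \pivr\ in a given high c.e.\ degree $\mathbf{a}$; since \pivr\ implies \ivr, this handles both cases at once. The construction uses a function $d\leq_T A$ for $A\in\mathbf{a}$ dominating every partial computable function (Martin's theorem) to drive a high-permitting finite-injury argument enumerating a left-c.e.\ real $X\equiv_T A$, with requirement $R_e$: the $e$-th partial integer-valued martingale $M_e$ does not succeed on $X$. The property I exploit is that because each bet of $M_e$ is at least $1$, the current capital of $M_e$ strictly bounds the number of wrong bets it can sustain; hence finitely many left-c.e.\ enumeration moves per requirement suffice to keep $M_e$ bounded on $X$, with $d$ providing the lookahead through $M_e$'s computation.

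For the harder direction, suppose $X$ is left-c.e.\ with $X\equiv_T A$ for a c.e.\ $A$ of degree $\mathbf{a}$, $X$ is \ivr, and for contradiction $\mathbf{a}$ is not high. Let $X_s\nearrow X$ be the left-c.e.\ approximation, $\ell_X(n)$ the least $s$ with $X_t\upharpoonright n=X\upharpoonright n$ for all $t\geq s$, and $B=\{n:X(n)=1\}$; both $\ell_X$ and $B$ are $A$-computable. I will use the following \emph{one-sided lemma}: if $X_s\upharpoonright n=X\upharpoonright n$ and $X_s(n)=1$, then $X(n)=1$ (because $X_s\leq X$ as reals). Consequently any ``bet $1$ on outcome $1$ whenever the approximation shows $1$'' strategy never loses a bet along $X$.

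The martingale $f$ I plan to use is: at $\sigma$ of length $n$, bet $1$ on outcome $1$ precisely when $X_{g(n+1)}\upharpoonright n=\sigma$, $X_{g(n+1)}(n)=1$, and $f(\sigma)\geq 1$, for a suitable computable $g$; otherwise $f$ places no bet. The main obstacle is to choose $g$ so that $f$ bets (and therefore wins) infinitely often along $X$, which reduces to finding a computable $g$ with $g(n+1)>\ell_X(n+1)$ for infinitely many $n\in B$. I plan to handle this obstacle as follows. If no such $g$ exists then $\ell_X$ dominates every computable function when both are restricted to $B+1$. Since $B+1$ is $A$-computable, I enumerate $B+1=\{b(0)<b(1)<\cdots\}$ from $A$ and set $h(m):=\ell_X(b(m))$, which is $A$-computable. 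For any computable $g$ let $g^*(k)=\max_{j\leq k}g(j)$; the hypothesis gives $g^*(b(m))\leq h(m)$ for cofinitely many $m$, and since $b(m)\geq m$ and $g^*$ is non-decreasing, $g(m)\leq g^*(m)\leq g^*(b(m))\leq h(m)$ cofinitely. Thus $h$ is an $A$-computable dominating function, contradicting non-highness of $\mathbf{a}$ by Martin's theorem. This forces the required $g$ to exist, so $f$ succeeds on $X$, contradicting the integer-valued randomness of $X$.
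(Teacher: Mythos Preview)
Your ``only if'' direction is correct and is a genuinely different packaging from the paper's. Both arguments rest on the same one-sided observation for left-c.e.\ reals (betting \$1 on an observed $1$ can never lose along $X$), but the paper uses it \emph{constructively}: it builds, for each $e$, a martingale $M_e$ and a functional $\Gamma$ with $\lim_k \Gamma^{X}(e,k)=\mathrm{Tot}(e)$, arguing that if $e\in\mathrm{Tot}$ then the failure of $M_e$ to succeed forces enough $X$-changes to let $\Gamma$ correct itself. Your route is by contrapositive: a single martingale together with Martin's domination characterisation of highness. Your version is shorter and avoids the bookkeeping of use-markers; the paper's version yields the limit-lemma reduction explicitly.

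Your ``if'' direction is where you diverge more, and where there is a slip. Martin's theorem gives an $A$-computable $d$ dominating every \emph{total} computable function, not every partial one; you cannot literally use $d$ to ``look ahead through $M_e$'s computation'' when $M_e$ is partial, since the convergence-time function of a partial martingale need not be dominated by anything total. The paper sidesteps this entirely: for the \ivr\ case it simply invokes \cite{NST} (every high c.e.\ degree contains a computably random left-c.e.\ real, which is a fortiori \ivr), and for the partial-\ivr\ case it proves separately (Section~\ref{subse:delceintv2}) that any left-c.e.\ \ivr\ real is wtt-equivalent to a left-c.e.\ \emph{partial} \ivr\ real---using the \ivr-ness of the given real (not highness) as the permitting mechanism, by building total martingales on it that copy the partial opponent's wagers. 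Your direct high-permitting plan for the partial case is plausible in spirit (the finiteness of integer-valued attacks is the right lever), but as written it does not explain how domination of total functions suffices to control partial opponents while maintaining $X\equiv_T A$; that is the nontrivial content the paper's two-step route supplies.
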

This result is pleasing, but this is where
the similarities between computable randomness (based on computable betting strategies) 
and integer-valued randomness (based on integer-valued computable betting strategies) end, 
at least with respect to the c.e.\ degrees.
We note that 
\begin{equation}\label{eq:cedcompschl}
\left\{\parbox{11cm}{\ \ In the c.e.\ degrees, the following classes 
are equal to the high degrees:
\begin{itemize}
\item[(i)] degrees containing computably random sets; 
\item[(ii)] degrees containing left-c.e.\ computably random sets;
\item[(iii)] degrees computing computably random sets.
\end{itemize}}\right.
\end{equation}
This characterization follows from the following facts, where
(a) is by Schnorr  \cite{Schnorr}, (c) was first observed by Ku\v{c}era \cite{MR820784}, and
(b), (d) are from \cite{NST}.
\begin{itemize}
\item[(a)] computable randomness implies Schnorr randomness; 
\item[(b)] a Schnorr random which does not have high degree
is Martin-L\"of random;
\item[(c)] a Martin-L\"of random of c.e.\ degree is complete;
\item[(d)] every high c.e.\ degree contains a computably random left-c.e.\ set.
\end{itemize}
In the case of integer-valued randomness \eqref{eq:cedcompschl} fails significantly.
In particular, the c.e.\ degrees that compute integer-valued randoms are not the same as the
c.e.\ degrees that contain integer-valued randoms.
In fact, we provide the following characterization of the c.e.\ degrees that compute \ivrs.

\begin{thm}[C.e.\ degrees computing \ivrs]\label{ancccc}
A c.e.\ degree  
computes an \ivr\  
if and only if it is array noncomputable.
\end{thm}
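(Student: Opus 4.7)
The forward direction is immediate from Corollary~\ref{coro:anccpivr}: every array noncomputable degree, whether c.e.\ or not, computes a partial integer-valued random. For the converse I would argue by contrapositive. Given a c.e.\ set $A$ of array computable degree and any Turing functional $\Phi$ with $X=\Phi^A$ total, my goal is to construct a computable integer-valued martingale $M$ succeeding on $X$.

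The main ingredient is the standard characterization of array computability: there is a function $f\leq_{wtt}\emptyset'$ that dominates every $A$-computable function. In particular, $f$ dominates both the use function of $\Phi^A$ and the modulus $\mu$ of a fixed computable enumeration $(A_s)$ of $A$. Chaining these two domination facts yields, for almost every $n$, the identity
\[
X(n)=\Phi^{A_{f(f(n))}}(n).
\]
Since $f$ is $\omega$-c.a., so is $f\circ f$. Substituting a computable approximation of $f\circ f$ into the above formula produces a computable approximation $\hat X_s$ of $X$ whose number of mind changes on the $n$-th bit is bounded by a computable function $h(n)$.

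The martingale $M$ would then be built in rounds, round $k$ targeting one additional dollar of profit on the path $X$. Round $k$ is active on a computable block of positions whose length is determined by $h$. At each position in the block $M$ places a one-dollar bet in the direction indicated by $\hat X_{|\sigma|}$, but only on strings $\sigma$ still consistent with the approximation committed to in the round; on strings that have already deviated, $M$ stops betting, so capital on a ``dead'' branch is frozen and the integer-valued martingale stays nonnegative on every string. The block length is chosen so that on the path $X$ the bounded mind changes of $\hat X$ within the block are outweighed by correct bets, yielding a net gain of at least one dollar per round and hence $M(X\restr n)\to \infty$.

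The principal obstacle is the absence of a \emph{computable} bound on the settling time of $\hat X_s$ on a given bit: this time is only bounded by the $\omega$-c.a.\ function $f\circ f$. A single bet per round therefore cannot be relied upon to fall past stabilization, so the construction must spread several attempts across the block. Because the integer-valued restriction forbids the usual scaled hedging, the buffer capital carried into each round, the block length, and the betting stages must all be jointly calibrated in terms of $h$ so that the entire scheme remains simultaneously computable, integer-valued, and nonnegative. This calibration between the $\omega$-c.a.\ mind-change bound and the finite-dollar buffering is the delicate technical heart of the argument.
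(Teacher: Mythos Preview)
Your forward direction is fine and coincides with the paper. For the converse, however, your plan diverges from the paper's argument and leaves a genuine gap.

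First a minor point: the step ``since $f$ is $\omega$-c.a., so is $f\circ f$'' is not justified. From $f\leq_{wtt}\emptyset'$ you get a computable bound on the number of mind changes of $f_s(n)$, but no computable bound on the \emph{values} $f_s(n)$, hence no computable bound on the use of $f(f(n))$. You can sidestep this entirely by invoking the Ishmukhametov/Terwijn--Zambella characterization directly: for c.e.\ array computable $A$ and any order $h$, every $X\leq_T A$ has an $h$-bounded computable approximation. But this only gets you to the real obstacle.

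The substantive gap is the martingale construction itself, which you correctly flag as ``the delicate technical heart'' but do not carry out. Your block strategy---commit to an approximation, bet one dollar per position, freeze on deviation---does not obviously gain on $X$. If $X$ disagrees with the committed approximation at the very first position of the block, the block contributes $-1$; nothing in the $h$-bound prevents this from happening in every block. The $h$-bound controls the number of mind changes \emph{per position}, not the number of positions on which the stage-$a_k$ approximation is still wrong, and the latter is what governs your net gain. You never explain how the block length, the buffer capital, and the staging of attempts are to be calibrated so that losses from such early deviations are provably absorbed; without this, the argument is incomplete. (Note also that the statement ``every $X$ that is $h$-c.a.\ for all orders $h$ is not integer-valued random'' is essentially equivalent to the theorem you are proving, so it cannot be invoked as a black box.)

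The paper avoids this difficulty by running the argument in the opposite direction. Rather than starting from an $h$-c.a.\ approximation of $X$ and trying to build a winning martingale, it assumes $B=\Gamma^A$ is integer-valued random and constructs an explicit order $h$ together with a function $f=\Delta^A$ such that \emph{no} computable approximation of $f$ is $h$-bounded, witnessing that $A$ is array noncomputable. The integer-valued martingales $m_e$ appear not as objects that must succeed, but as pressure devices: one bets on $\Gamma^A$ with increasing wagers, and precisely because $B$ is integer-valued random the martingale cannot succeed, which forces $\Gamma^A$ (hence $A$) to change below the $\Delta$-use. Each such change is a permission to redefine $f(x)$, and the wagers are scaled so that either one sees all $h(x)$ permissions (diagonalizing against $\psi_e$) or the martingale's capital strictly increases. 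This turns the failure of the martingale into the engine of the construction, rather than requiring the martingale itself to win---which is exactly the step your approach cannot supply.
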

\noindent
In view of this result
it is tempting to think that that every c.e.\ array noncomputable
might contain an \ivr.
We will see however 
in Section \ref{subse:anrncontivr}
that there are array noncomputable c.e.\ degrees which do not contain \ivrs.
In fact, Theorem \ref{th:high2cenoivr} is an extreme version of this fact, which is tight with
respect to the jump hierarchy.

We have seen that high c.e.\ degrees are powerful enough to
contain \ivrs, even left-c.e.\ \ivrs. 
However, while we know that left-c.e.\ \ivrs\ necessarily have high degree,
the question arises as to whether a weaker jump class is sufficient to
guarantee that a c.e.\ degree contains an \ivr, if we no longer require that the set is
left-c.e. We give a negative answer to this question by the following result, which we
prove in Section \ref{subse:high2nocontivrs} by a ${\bf 0'''}$-argument. 

\begin{thm}\label{th:high2cenoivr}
There is a high$_2$ c.e.\ degree which does not contain any \ivrs.
\end{thm}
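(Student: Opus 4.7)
The plan is to build a c.e.\ set $A$ by a tree-of-strategies construction whose true path is computable from $\emptyset'''$, meeting two families of requirements. For each effective pair $(\Phi_e,\Psi_e)$ of Turing functionals, a diagonalization requirement $N_e$ asks that if $\Phi_e^A$ is total and, with $X=\Phi_e^A$, also $\Psi_e^X=A$, then a uniformly computable integer-valued martingale $M^\alpha_e$ (for $\alpha$ the node on the true path assigned to that pair) succeeds on $X$. Meeting all $N_e$ ensures that no set in the Turing degree of $A$ is integer-valued random. Coding requirements $H_n$ place the $n$-th bit of $\emptyset'''$ as a $\Sigma^0_3$-event on the tree that is recoverable from $A''$, so that $A''\equiv_T\emptyset'''$.

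The $N_e$-strategy at a node $\alpha$ produces $M^\alpha_e$ as a fixed computable algorithm that simulates the construction under the a priori assumption that $\alpha$ lies on the true path. At each $\alpha$-expansionary stage, the strategy exploits the feedback between the two reductions: enumerating a number $k$ into $A$ forces $\Psi_e^X(k)$ to flip from $0$ to $1$, hence forces $X\upharpoonright u$ to change, where $u$ is the current $\Psi_e$-use on $k$. We pick $k$ with small use (subject to priority), enumerate it, and program $M^\alpha_e$ to place unit bets at the positions below $u$ that the simulation predicts to flip. If $\alpha$ is actually on the true path, then the simulation eventually agrees with the real construction, the bets fall on the actual $X$, and $M^\alpha_e(X\upharpoonright n)$ grows without bound.

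Coding $\emptyset'''$ into $A''$ is handled by the standard $\emptyset'''$-tree method: at designated coding nodes the bit $\emptyset'''(n)$ is matched to a $\Sigma^0_3$ feature of the true path (typically, whether a designated sub-node is visited infinitely often), and this feature is decidable from $A''$. Together with the trivial bound $A''\le_T\emptyset'''$, this places $A$ in a high$_2$ degree.

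The chief obstacle is guaranteeing that each $M^\alpha_e$ is genuinely computable even though the construction producing the $A$-enumerations sits at $\emptyset'''$. The resolution is that $M^\alpha_e$ is built by a single computable simulation that does not need to know the actual true path; it merely follows the prescription of what would happen if $\alpha$ were always accessible, which is a computable process. The correctness of the bets is then an a posteriori consequence of $\alpha$ actually being on the true path. The discreteness of integer-valued bets is essential here: each forced $X$-change carries a predictable unit bet, so $M^\alpha_e$ accumulates capital in integer chunks without ever needing to track sub-unit amounts or risk bankruptcy—precisely the discretisation feature highlighted in the introduction. Coordinating the $N_e$ with the coding requirements $H_n$ and with one another along the priority tree is delicate but routine in a $\emptyset'''$-argument.
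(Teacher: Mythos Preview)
Your $N_e$ strategy has the causality backwards and, as stated, cannot produce a winning martingale. You propose to enumerate a number $k$ into $A$ in order to force $X=\Phi_e^A$ to change somewhere below the $\Psi_e$-use $u$, and then have $M^\alpha_e$ ``place unit bets at the positions below $u$ that the simulation predicts to flip.'' But a martingale wagers on one bit at a time, and to win at position $j$ you must correctly predict $X(j)$ \emph{before} it is revealed. Enumerating $k$ only tells you that $X$ will change \emph{somewhere} below $u$; it does not tell you which bit, or to what value, until $\Phi_e^A$ reconverges---and by then further enumerations (from other $N$-nodes and from the coding) may have moved $X$ again. There is no computable way to convert ``some bit below $u$ flips'' into a guaranteed integer-valued win. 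The paper does the opposite: at an expansionary stage it wagers $\$1$ that the \emph{current} approximation $\Delta_e^A[s]\upharpoonright\gamma_e(l)$ is correct, and then \emph{restrains} $A$ below $\delta_e(\gamma_e(l))$ to protect that bet. A fresh location marker $p$ is chosen first, so that if a higher-priority action later injures the restraint the $A$-change is below $p$; then the new $\Delta_e^A$ is incomparable with the string carrying the wager, the martingale was neutral there, and no capital is lost. This is why $m_\tau$ is nondecreasing along $\Delta_e^A$.

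Second, the interaction between the $N_e$ restraints and the high$_2$ coding is not ``delicate but routine''---it is the crux of the argument. The paper first explains why the construction \emph{fails} if one aims for highness: either the $N_e$ strategies restrain all coding markers $\lambda(x,s)$ forever (so $\lim_s\Lambda^A(x,s)$ is wrong when $\mathrm{Tot}(x)=1$), or one capriciously enumerates them (so the limit is wrong when $\mathrm{Tot}(x)=0$). The escape is precisely the double limit in $\lim_m\lim_t\Lambda^A(x,m,t)=\mathrm{Cof}(x)$: being wrong on finitely many $m$ is tolerable, so capricious enumeration across a link $(\tau,\sigma)$ is allowed, and a permanent link then witnesses $\Gamma_e^{\Delta_e^A}\neq A$. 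Realising this requires mother and worker nodes with $g/d$ outcomes, links, scouting reports (to let $\beta$-nodes with the $\infty$ outcome break through a link), and restarting $\mathcal{N}_e$-nodes below $\beta_{x,m}\,\hat{}\,\infty$ when $x<e$. None of this machinery is accounted for in your plan, and without it the coding and the martingale strategies destroy each other.
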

\noindent
Note that this result shows the existence of array noncomputable c.e.\ degrees
which do not contain \ivrs, in stark contrast to Theorem \ref{ancccc}.

Furthermore, the c.e.\ degrees that contain integer-valued randoms are
not the same as the c.e.\ degrees that contain left-c.e.\ integer-valued randoms.
In fact, in stark contrast with Theorem \ref{cedegcontivrlceres},
there exists a low c.e.\ degree containing an integer-valued random set.
More generally, we can find c.e.\ degrees containing integer-valued random sets
in every jump class.
Section \ref{se:jumpinvivrs} is devoted to the proof of this result.

\begin{thm}[Jump inversion for c.e.\ \ivr\ degrees] \label{th:ceaiv}
If ${\bf c}$ is c.e.\ in and above $\mathbf{0}'$ then 
there is an \ivr\ $A$ of c.e.\ degree with 
$A'\in {\bf c}$.
\end{thm}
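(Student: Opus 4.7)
The plan is to combine Sacks' jump inversion theorem for c.e.\ degrees with a diagonalization against computable integer-valued martingales. Fix $C\in \mathbf{c}$ (c.e.\ in and above $\emptyset'$) with a computable approximation $\{C_s\}$. The construction builds $A$ as a $\Delta^0_2$ set together with a c.e.\ set $W$ such that $A\equiv_T W$, via a priority tree argument with oracle $\emptyset''$ (equivalently, guided by $C$ since $C\geq_T \emptyset'$). The requirements are: randomness $R_e$ asserting that the $e$-th total computable integer-valued martingale $f_e$ does not succeed on $A$; coding $S_e$ arranging $\Phi_e^A(e)\de \iff e\in C$ to give $C\leq_T A'$; jump-control $J_e$ giving $A'\leq_T C$ via Sacks-style thresholds; and a degree-encoding giving $A\equiv_T W$.

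The key observation is that each $R_e$ is finitary. Setting $c_e=f_e(\emptyset)$, once we commit to forcing losses on $f_e$ past some stage, $f_e$ bankrupts after at most $c_e$ losing bits; since $f_e$ is a non-negative integer-valued martingale at value $0$, it must bet $0$ on every extension thereafter, so $\limsup_n f_e(A\restr n)$ is finite. Each $R_e$ therefore needs only finitely many committed bits, and its outcomes on the tree encode ``$f_e$ is not total'', ``all bets of $f_e$ are eventually $0$'', and a bounded collection of ``$i$-th forced loss'' outcomes for $i\leq c_e$.

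First I would set up the priority tree interleaving $R_e$, $S_e$, and $J_e$, and describe the modules: the Sacks coding module for $S_e$ and $J_e$ (assigning each $e$ a witness $x_e$ and using a $C$-threshold to control whether $\Phi_e^A(e)\de$); the randomness module for $R_e$ (waiting for $f_e$ to bet nonzero on $A\restr n$ and committing the losing bit); and the encoding between $A$ and $W$, built so that enumeration into $W$ records sufficient information about the $\emptyset''$-decisions of the construction that $W$ and $A$ become mutually computable. I would then define the true path via the $\emptyset''$-oracle, verify that each requirement is satisfied along it, and conclude by checking $A'\equiv_T C$.

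The chief obstacle is reconciling the c.e.\ degree requirement with integer-valued randomness: by Theorem \ref{cedegcontivrlceres}, no c.e.\ set in a non-high degree is integer-valued random, so $A$ itself cannot be c.e., and the coding realizing $W\leq_T A$ must be sparse enough to escape detection by integer-valued martingales. The resolution is to place the coding bits for $W$ into $A$ at positions that the $R_e$-modules have already committed as forced losses; because each martingale bankrupts after only $c_e$ losing steps, it cannot exploit the finitely many coding bits before going broke. This uses the finitary character of $R_e$ as a buffer absorbing the coding, and is the point at which the argument diverges from the classical Sacks construction.
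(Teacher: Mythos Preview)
Your proposal has the right overall architecture---build a $\Delta^0_2$ integer-valued random $A$ together with a c.e.\ set $W$ satisfying $A\equiv_T W$, combining finitary randomness requirements with Sacks-style jump coding on a tree---but the coding mechanism you describe does not work, and the central technical difficulty is not identified. The suggestion to ``place the coding bits for $W$ into $A$ at positions that the $R_e$-modules have already committed as forced losses'' is incoherent: a forced-loss bit at position $n$ has its value determined by the martingale's bet (you must take the losing side), so it carries no free bit of $W$-information; conversely, a genuine coding position must remain settable according to future enumeration into $W$. These roles cannot coincide. Your bound of $c_e=f_e(\emptyset)$ many losses is also only correct for the top-priority requirement; after injury at length $l$, $f_e$ may have capital as large as $2^l c_e$, so the number of subsequent losses is not bounded by $c_e$.

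The real obstacle, which your outline misses, runs in the other direction: maintaining $W\leq_T A$ via a functional $\Delta$ means that each time $W$ changes, $A$ must move to record it, and thereafter only strings $\sigma$ with $\Delta^\sigma$ consistent with the current $W$ are available---call the others \emph{forbidden}. When you then want to move $A$ to make some $f_e$ lose, it can happen that a martingale with large capital forces every non-forbidden extension to be a winning one for it, by betting heavily against the directions leading away from forbidden strings. The paper resolves this with a system of levels $l_i,d_i$ (with $\gamma(l_i)=d_i$ and $\delta(d_i)=l_{i+1}$) and an explicit calculation, via the function $n(k)=(\mu n)(k-1-2-\cdots-2^{n-1}<\tfrac12 k)$, bounding how far a martingale of given capital can push $A$ toward a forbidden string; the gaps $l_{i+1}-l_i$ are then chosen large enough that a non-forbidden losing extension always exists. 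The jump-coding markers $\lambda(e,t)$ are placed at level endpoints $d_i$, and the tree handles the interaction via $\sigma$-correct computations. Without an analogue of this level-and-distance machinery, there is no argument that the randomness and coding requirements can be simultaneously maintained.
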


There are a number of open questions and research directions pointed
by the work in this paper. For example, is there a c.e.\ degree which contains
an integer-valued random
but does not contain any partial integer-valued randoms?
More generally, which degrees contain partial computable randoms? 
Algorithmic randomness based on
partial martingales is a notion that remains to be explored on a deeper level.

\section{Genericity and partial \ivrs}
Bienvenu, Stephan, Teutsch \cite[Theorem 8]{BST} showed  
that every 
weakly 2-generic set is \ivr.
In Section \ref{subse:profthgenimppivr}
we give a proof of Theorem \ref{wMGIVR}, i.e.\ that 
$pb $-genericity is sufficient for (partial) integer-valued randomness.

Hence a certain notion of genericity ($pb$-genericity) is
a source of integer-valued randomness.
In fact, by Theorem  \ref{wMGIVR}, every $pb$-generic is not only \ivr\ but also \pivr.
We do not have concrete examples of reals that are \ivr\ but not \pivr. 
Section \ref{subse:anivrnotpivr} is dedicated to
constructing such an example. 
We give the basic construction of a $\Delta^0_2$
real which is \ivr\ but not \pivr. We give this in full detail, as it
is based on an interesting idea.

In Section \ref{subse:invrncomppivr} we provide
the necessary modification of the previous construction
in order to show that
the degrees of \ivrs\ and \pivrs\ can also be separated, even inside $\Delta^0_2$. 
In particular, we are going to prove
Theorem \ref{th:pivrfromivr}, i.e.\ 
that there is a $\Delta^0_2$ \ivr\ which does not compute any \pivrs.
These modifications are essentially the implementation of the main
argument of Section \ref{subse:anivrnotpivr} on a tree, which results from the additional requirements that introduce infinitary outcomes that need to be guessed
(i.e.\ the totality of the various functionals with oracle the constructed set). 
Given the original strategies and construction, the tree argument is
fairly standard.

\subsection{Proof of Theorem \ref{wMGIVR}}\label{subse:profthgenimppivr}
For every partial computable integer-valued martingale $m$ with
effective approximation $(m_s)$
we define a function $\hat{m}: 2^{<\omega}\to 2^{<\omega}$ with uniformly 
computable approximation $(\hat{m}_s)$. 
Let $\hat{m}_0(\sigma) = \sigma$ for all strings $\sigma$. 
Inductively in $s$, suppose we have defined $\hat{m}_s$. 
At stage $s+1$, if $m_{s+1}(\hat{m}_s(\sigma))$ is
 defined and there exists an extension $\tau$ of it of length $\leq s+1$
such that $m_{s+1}(\tau)$ is defined and
$m_{s+1}(\tau)< m_{s+1}(\hat{m}_s(\sigma))$, then we define
$\hat{m}_{s+1}(\sigma)$ to be the least such string $\tau$ (where strings are ordered first
by length and then lexicographically). Otherwise
let $\hat{m}_{s+1}(\sigma) = \hat{m}_{s}(\sigma)$.

Since $m$ is an integer-valued martingale
we have 
$| \{ s\ |\  \hat{m}_{s+1}(\sigma) \ne \hat{m}_s(\sigma) \}\ |\  \leqslant m(\sigma) +1$
and $m(\sigma)\leq 2^{|\sigma|}\cdot m(\emptyset)$.
Note that given any partial computable martingale $m$, 
the range of $\hat{m}$ is dense.
So every 
$pb$-generic intersects the range of $\hat{m}$ for every partial computable martingale $m$.
Moreover given any partial computable martingale $m$, 
the range of $\hat{m}$ is a subset of
\[
W_m = \{ \sigma\ |\  m(\sigma') \simeq m(\sigma) \text{ for all extensions } \sigma' \text{ of } \sigma \}.
\]
So every 
$pb$-generic intersects $W_m$ for each partial computable 
martingale $m$.
This means that every $pb$-generic is partial \ivr.

\subsection{An \ivr\ which is not \pivr}\label{subse:anivrnotpivr}
It suffices to construct an \ivr\ set $A\leq_T \emptyset'$ and a partial integer-valued martingale
$m$ which succeeds on $A$.
We will define a computable approximation $(A_s)$ which converges to a set $A$ which has 
the required properties.
Let $\langle n_e \rangle_{e \in \omega}$ be an effective list of all partial computable integer-valued martingales.
For each $e \geqslant 2$ we need to satisfy the requirements

\begin{description}
	\item[$R_e$]  If $n_e$ is total, then $n_e$ does not succeed on $A$.
	\item[$Q_e$]  $m$ wins  (at least)  \$$e$ on $A$.
\end{description}
\noindent
We first see how we might meet one requirement $R_0$. We begin by setting $A_0 = 1^{\omega}$ and defining $m$ to start with \$2 and wager \$1 on every initial segment of $A_0$. If we later see that $n_0$ has increased its capital along $A_0$, then we would like to move our approximation to $A$ so that $n_0$ decreases in capital. In changing $A_s$ to decrease $n_0$'s capital we may also decrease $m$'s capital along $A_s$. If $n_e$ had, say, \$10 in capital at some point, then as it loses at least a dollar every time it decreases in capital, it can lose at most 10 times. Our martingale, if it bets in \$1 wagers, can withstand losing \$10 only if its capital at that point is at least \$11. 

If $m$ does not have sufficient capital for us to start attacking immediately, we must find a way to increase its capital. We can increase $m$'s capital to \$$k$ as follows. We have not yet defined $m$ on any string extending $0$. We therefore wait until $n_0$ has halted on all strings of length $k$. If this never happens, then $n_0$ is not total, and $R_0$ is met. If $n_0$ does halt on all strings of length $k$, we pick a string $\tau$ of length $k$ extending $0$ such that $n_0(\tau) \leqslant n_0(0)$. Such a string must exist since
$n_0$ is a (partial) martingale. We are then free to define $m$ to wager \$1 on every initial segment of $\tau$ and set $A_s = \tau \ \hat{\empty} \ 1^{\omega}$. We will then have $m(\tau) > n_0(\tau)$. If $n_0$ later increases its capital along $A_s$ we will be able to change the approximation to $A$ to decrease $n_0$'s capital. As $m$ now has greater capital than $n_0$, we will be able to decrease $n_0$'s capital to \$0 while ensuring that $m$ does not run out of money.

When dealing with multiple requirements, we must take care in defining $m$ as it is a global object. We set a restraint $r_e$ for every $e \in \omega$. We arrange things so that only $R_e$ will be able to define $m$ on a string extending $A_s \upharpoonright r_{e,s} \ \hat{\empty} \ 0$. Suppose $R_e$ has not required attention since it was last injured. When we see $n_e$ increase its capital above $n_e(A_s \upharpoonright r_e)$, rather than starting to attack immediately, even if possible, we choose a string $\tau'$ extending $A_s \upharpoonright r_{e,s} \ \hat{\empty} \ 0$  
and define $m$ such that $m(\tau') - m(A_s \upharpoonright r_{e,s} \ \hat{\empty} \ 0) > n_e(A_s \upharpoonright r_{e,s} \ \hat{\empty} \ 0)$.
We injure requirements of weaker priority by lifting their restraints. We may then decrease $n_e$'s capital to \$0 and still have $m$ left with some capital. We now turn to the formal details of the construction.

We have for every requirement $R_e$ a restraint $r_e$. At every stage a requirement will either be declared to be waiting for convergence at some length, or declared to not be waiting for convergence at any length. As usual, this will stay in effect at the next stage unless otherwise mentioned. We say that $R_e$ requires attention at stage $s$ if either 

\begin{enumerate}
\item[(i)]
$R_e$ was declared to not be waiting for convergence at any length at stage $s$, and there is $l$ such that
\begin{enumerate}
 \item $l > r_{e,s}$, 
\item $n_{e,s}(\sigma)\downarrow$ for all strings $\sigma$ of length $l$, and 
\item  $n_e(A_s \upharpoonright l) > n_e(A_s \upharpoonright (l-1))$, or
\end{enumerate}
\item[(ii)] $R_e$ was declared to be waiting for convergence at length 
$h$ at stage $s$, and $n_{e,s}(\sigma) \downarrow$ for all strings $\sigma$ of length $h$.
\end{enumerate}
\noindent
In Case (i) we say that $R_e$ requires attention through $l$. We say that $Q_e$ requires attention at stage $s$ if $m(A_s \upharpoonright r_{e,s}) < e$. We order the requirements as $R_0, R_1, R_2, Q_2, R_3, Q_3, \dots$.

\subsection*{Construction}

\textit{Stage 0}: Set $A_0 = 1^{\omega}$ and $m(\lambda) = 2$, and let $m$ wager \$1 on every initial segment of $A_0$. Set $r_{e,0} = e$ for all $e \in \omega$. For all $e \in \omega$, declare that $R_e$ is not waiting for convergence at any length at stage $1$.

\textit{Stage $s$, $s \geqslant 1$}: Find the requirement of strongest priority which requires attention at stage $s$. (If no such requirement exists, go to the next stage.) There are several cases.

\textit{Case 1}: $R_e$ requires attention at stage $s$ in Case (i).
Has $R_e$ required attention since it was last injured?

\textit{Subcase 1a}: 
No.
Declare $R_e$ to be waiting for convergence at length 
$n_e((A_s \upharpoonright r_{e,s}) \ \hat{\empty} \ 0) + 1 + (r_{e,s} + 1)$ at stage $s+1$.

\textit{Subcase 1b}: 
Yes. 
Suppose $l$ is least such that $R_e$ requires attention through $l$ at stage $s$.
Let $A_{s+1} = (A_s \upharpoonright (l-1)) \ \hat{\empty} \ (1- A_s(l-1)) \ \hat{\empty} \ 1^{\omega}$. Define $m$ to wager \$1 on every initial segment of $A_{s+1}$ of length at least $l + 1$. For all $e' > e$, let $r_{e',s+1}$ be a fresh large number such that for all $e_1 < e_2$ we have $r_{e_1,s+1} < r_{e_2, s+1}$, and declare that $R_{e'}$ is not waiting for convergence at any length at stage $s+1$.

\textit{Case 2}: $R_e$ requires attention at stage $s$ in Case (ii). Suppose $R_e$ was declared to be waiting for convergence at length $h$ at stage $t$.

Choose a string $\tau$ above $A_s \upharpoonright r_{e,s} \ \hat{\empty} \ 0$ of length $h$ such that $n_e(\tau) \leqslant n_e(A_s \upharpoonright r_{e,s} \ \hat{\empty} \ 0)$. Define $m$ to wager \$1 along every initial segment of $\tau$ with length in $(r_{e,s} + 1, |\tau|]$. Set $A_{s+1} = \tau \ \hat{\empty} \ 1^{\omega}$. Define $m$ to wager \$1 on every initial segment of $A_{s+1}$ of length at least $|\tau| + 1$. For all $e' > e$, let $r_{e',s+1}$ be a fresh large number such that if $e_1 < e_2$ then $r_{e_1,s+1} < r_{e_2, s+1}$. For all $e'' \geqslant e$, declare that $R_{e''}$ is not waiting for convergence at any length at stage $s+1$.

\textit{Case 3}: $Q_e$ requires attention at stage $s$. Let $\tau$ be the least string extending $A_s \upharpoonright r_{e,s}$ for which $m(\tau) = e$. Set $A_{s+1} = \tau \ \hat{\empty} \ 1^{\omega}$ and for all $e' > e$, let $r_{e',s+1}$ be a fresh large number such that if $e_1 < e_2$ then $r_{e_1,s+1} < r_{e_2, s+1}$. For all $e'' > e$, declare that $R_{e''}$ is not waiting for convergence at any length at stage $s+1$.

\subsection*{Verification}
Before we demonstrate the satisfaction of requirements $R_e, Q_e$, we need to show that
the partial martingale $m$ is well-defined, and for all $n, s \in \omega$, 
\begin{equation}\label{eq:misweldfnz}
\textrm{if\hspace{0.3cm} $m(A_s \upharpoonright n)\de$\hspace{0.3cm} then\hspace{0.3cm} $m(A_s \upharpoonright n) \geqslant 1$.}
\end{equation}
We clarify that in this statement, $m$ denotes the state of the partial martingale at stage $s$.
We prove this statement by induction on the stages $s$.
We first claim that if $R_e$ requires attention in Case (ii) at stage $s$, then $m$ has not been defined on any string extending $A_s \upharpoonright r_{e,s} \ \hat{\empty} \ 0$. Suppose $R_e$ requires attention in Case (ii) at stage $s$. 
Let $s^* -1 < s$ be the last stage at which $R_e$ was initialised. We choose $r_{e,s^*}$ to be some fresh large number. In particular, $m$ has not been defined on any string extending $A_{s^*} \upharpoonright r_{e,s^*} \ \hat{\empty} \ 0$. 
Note that $r_{e, s} = r_{e,s^*}$ and $A_{s^*} \upharpoonright r_{e,s} = A_s \upharpoonright r_{e,s}$. If $R_k$ for $k > e$ acts at stage $t > s^*$ it may define $m$ on $A_t \upharpoonright r_{k,t} \ \hat{\empty} \ 0$, but as $r_{k,t} > r_{e,t} \geqslant r_{e,s}$, it cannot define $m$ on any string extending $A_{s^*} \upharpoonright r_{e,s} \ \hat{\empty} \ 0$. 
No $R_i$ for $i < e$ may act between stages $s^*$ and $s$ as this contradicts the choice of $s^*$. 
Therefore $m$ has not been defined on any string extending $A_s \upharpoonright r_{e,s} \ \hat{\empty} \ 0$ at stage $s$.

By the definition of $A_0$ and $m$ at stage 0, we have $m(A_0 \upharpoonright n) \geqslant 2$ for all $n$. 
Furthermore, $m$ is at least 1 on the sibling of any initial segment of $A_0$ (recall the definition of the {\em sibling} of a string, just after 
Definition \ref{de:intvalmar}).
Suppose that $m(A_s \upharpoonright n) \geqslant 1$ for all $n$ and $m$ is at least 1 for any sibling of an initial segment of $A_s$.
If we act for $Q_e$ at stage $s$, then it is easy to see that $m(A_{s+1} \upharpoonright n) \geqslant 1$ for all $n$.
So suppose that we act for $R_e$ at stage $s$. If we act in Case 2 for $R_e$ at stage $s$, then we will choose a string $\tau$ extending $A_s \upharpoonright r_{e,s} \ \hat{\empty} \ 0$ of some length $h$. By assumption, $m(A_s \upharpoonright r_{e,s} \ \hat{\empty} \ 0) \geqslant 1$.
We then let $A_{s+1}$ extend $\tau$ and define $m$ such that $m(\tau) - m(A_s \upharpoonright r_{e,s} \ \hat{\empty} \ 0) > n_e(A_s \upharpoonright r_{e,s} \ \hat{\empty} \ 0)$. Then $m(A_{s+1} \upharpoonright n) \geqslant 1$ for all $n$. 

Now suppose that we act for $R_e$ in Subcase 1b through length $l$ at stage $s$. Our martingale $m$ wagers at most \$1 at a time, and so loses at most \$1 at a time. We decrease $n_e$ by at least \$1 while decreasing $m$ by at most \$1. As $m(A_s \upharpoonright l) > n_e(A_s \upharpoonright (l-1))$, we may reduce $n_e$'s capital to \$0 while $m$ has capital remaining. Now requirements of stronger priority than $R_e$ may start to act. Suppose that $R_{e'}$ with $e' < e$ requires attention. If $R_{e'}$ requires attention in Case (ii) then we will act as in the previous paragraph and $m$ will still have capital left. Otherwise, $R_{e'}$ may act in Subcase 1b at stage $t$ after having acted in Case 2 \textit{before} stage $s$. However, in this case, we would have increased $m$'s capital by $n_{e'}(A_s \upharpoonright r_{e',s} \ \hat{\empty} \ 0)$ previously. Therefore, after having reduced $n_e$'s capital to 0, we may then reduce $n_{e'}$'s capital to 0 as well, while
ensuring that  $m$ still has capital remaining.
This concludes the induction on the stages and the proof of \eqref{eq:misweldfnz}.

Note that we have not yet shown that the approximation $(A_s)$ converges to a set $A$.
This is a consequence of the use of restraints in the construction, and the following lemma
which says that each requirement $R_e$ receives attention only finitely often.

\begin{lem}
For all $e \in \omega$, $R_e$ receives attention only finitely often, and is met.
\end{lem}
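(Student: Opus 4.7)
The plan is induction along the priority ordering $R_0,R_1,R_2,Q_2,R_3,Q_3,\ldots$. First I would observe that each $Q_{e'}$ is easy: once $r_{e'}$ and $A_s\restr r_{e'}$ have stabilized (which follows from the induction hypothesis), Case~3 can receive attention at most once, because after it fires $m(A_s\restr r_{e'})\ge e'$ and this value depends only on the stabilized prefix and is therefore preserved. Thus the substantive work is bounding the attention received by $R_e$.

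Assume inductively that every stronger-priority requirement receives attention only finitely often, and fix a stage $s_0$ past which none of them acts. Then $r_e$ is fixed at some value $r$ and $A_s\restr r$ is fixed for $s\ge s_0$. I split into two cases. If $n_e$ is not total, pick a string on which $n_e$ diverges; since the domain of a partial martingale is downward closed, the convergence hypotheses of Cases~(i)(b) and~(ii) can hold only for lengths below that of this string. Consequently $R_e$ receives attention only finitely often and is met vacuously, as $n_e$ does not succeed on $A$.

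If $n_e$ is total, Subcase~1a receives attention at most once after $s_0$, since it declares $R_e$ waiting and only Case~2 un-declares it; after Case~2 fires, any further Case~(i) attention routes to Subcase~1b because $R_e$ has required attention since it was last injured. Case~2 itself fires at most once, at some stage $s_1$, extending $A$ through some $\tau$ with $n_e(\tau)\le n_e((A\restr r)\ \hat{\empty}\ 0)$ and (by the waiting length chosen in Subcase~1a together with the invariant $m\ge 1$ from~\eqref{eq:misweldfnz}) guaranteeing $m(\tau)\ge n_e(\tau)+2$.

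The main obstacle is bounding the Subcase~1b actions after $s_1$. For this I would strengthen the invariant used for~\eqref{eq:misweldfnz} to $m(A_s\restr k)\ge n_e(A_s\restr k)+1$ at every $k$ where $m$ is defined along $A_s$, for $s\ge s_1$. Each Subcase~1b attack at position $l$ drops $m(A_{s+1}\restr l)$ by exactly $2$ relative to $m(A_s\restr l)$ (since $m$ had wagered \$1 in the old direction at position $l$), while the martingale equation together with the integer-valued growth $n_e(A_s\restr l)>n_e(A_s\restr(l-1))$ forces $n_e(A_{s+1}\restr l)\le n_e(A_s\restr l)-2$; hence the strengthened invariant is preserved. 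Since each attack also strictly lowers $n_e$ along the current path at the attacked position while $n_e\ge 0$, a potential-function argument (formalizing the ``can lose at most finitely many times'' heuristic from the introduction) bounds the total number of attacks by a finite quantity determined by $m(\tau)$ and $n_e(\tau)$. Finally, $R_e$ is met: once Subcase~1b has stabilized, any further growth of $n_e$ along the final $A$ would, by totality of $n_e$, trigger Subcase~1b again, contradicting the bound just established.
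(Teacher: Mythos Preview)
Your overall architecture matches the paper's: induction on priority, then for $R_e$ the sequence Subcase~1a (once) $\to$ Case~2 (once) $\to$ Subcase~1b (finitely often), with the last step bounded because $n_e$ loses at least one dollar at each attack. That is exactly how the paper argues.

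There is, however, a genuine gap in your Subcase~1b analysis. The strengthened invariant $m(A_s\restr k)\ge n_e(A_s\restr k)+1$ \emph{for every $k$ at which $m$ is defined along $A_s$} is false. After any Subcase~1b action at position $l$ (and already after Case~2 itself), the construction extends $m$ along the new tail $1^\omega$ by \$1 wagers, so $m(A_{s+1}\restr k)=m(A_{s+1}\restr l)+(k-l)$ grows linearly in $k$, while $n_e(A_{s+1}\restr k)$ is unconstrained and can grow geometrically. Thus for large $k$ the inequality fails. Fortunately this invariant is not needed: the well-definedness of $m$ is already handled by~\eqref{eq:misweldfnz}, and the bound on the number of Subcase~1b actions comes purely from $n_e$.

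What your potential-function sketch is missing is the glue between ``$n_e$ drops at the attacked position'' and ``finitely many attacks''. You need two facts: (a) the attack positions strictly increase, $l_1<l_2<\cdots$, because after the flip at $l_i$ the value $n_e$ is non-increasing along the new path through position $l_i$, and weaker-priority requirements only touch $A$ above the freshly raised $r_{e+1}$; and (b) by minimality of $l_{i+1}$, $n_e$ is non-increasing from position $l_i$ to $l_{i+1}-1$ along $A_{s_{i+1}}$. With these, $V_i:=n_e(A_{s_i+1}\restr l_i)$ satisfies $V_{i+1}\le n_e(A_{s_{i+1}}\restr(l_{i+1}-1))-1\le n_e(A_{s_{i+1}}\restr l_i)-1=V_i-1$, so the number of attacks is at most $n_e(A\restr r_e)$. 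This is precisely the ``can lose at most $n_e(\tau)$ times'' that the paper states informally; your drop-by-$2$ observations are correct but do not by themselves yield the bound without (a) and (b).
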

\begin{proof}
Suppose by induction that $s^*-1$ is the last stage at which $R_e$ is injured
(i.e.\ the least stage after which no requirement of stronger priority than $R_e$ receives attention).
If $R_e$ never requires attention at some later stage in Case (i), then either $n_e$ is not total, or $n_e(A) \leqslant n_e(A_{s^*} \upharpoonright r_{e,s^*})$. In either case $R_e$ is met.
Therefore suppose that $R_e$ requires attention through $l$
at some stage $s' \geqslant s^*$. 
We will act in Subcase 1a and declare $R_e$ to be waiting for convergence at length $n_e((A_{s'} \upharpoonright r_{e,s'}) \ \hat{\empty} \ 0) + 1 + (r_{e,s'} + 1) =: h_0$.
As no requirement $R_{e'}$ for $e' < e$ receives attention after stage $s^*$, $R_e$ will be waiting for convergence at length $h_0$ until, if ever, $R_e$ requires attention in Case (ii). If $R_e$ never requires attention after stage $s'$ then $n_e$ is not total. So suppose that $R_e$ requires attention in Case (ii) at stage $t'$. We choose a string $\tau$ of length $h_0$ above $A_{t'} \upharpoonright r_{e} \ \hat{\empty} \ 0$ such that $n_e(\tau) \leqslant n_e(A_{t'} \upharpoonright r_e \ \hat{\empty} \ 0)$. Since $n_e$ is a (partial) martingale,  $n_e(\sigma) \leqslant n_e(A_{t'} \upharpoonright r_e \ \hat{\empty} \ 0)$ for at least one string $\sigma$ of length $h_0$ above $A_{t'} \upharpoonright r_{e} \ \hat{\empty} \ 0$. Therefore such a string must exist.

We set $A_{t'+1} = \tau \ \hat{\empty} \ 1^{\omega}$ and define $m$ so that $m(\tau) > n_e(\tau)$. If $R_e$ receives attention after stage $t'$ then it must do so in Subcase 1b. Our martingale $m$ wagers at most \$1 at a time, and so loses at most \$1 at a time. If $R_e$ requires attention through some $l > h_0$ at a stage $t'' > t'$ then we will again act in Subcase 1b and force $n_e$ to lose at least \$1 while $m$ loses at most \$1. This can happen at most $n_e(\tau)$ many times before $n_e$ loses all its capital and can no longer bet. Thus the induction can continue, and $R_e$ is met.
\end{proof}
It remains to show that $m$ succeeds on $A$. For this, it suffices to show that all 
requirements $Q_e$, $e \geqslant 2$ are met.

\begin{lem}
For all $e \geqslant 2$, $Q_e$ receives attention only finitely often, and is met.
\end{lem}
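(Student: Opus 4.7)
The proof proceeds by induction on the position of $Q_e$ in the priority ordering $R_0, R_1, R_2, Q_2, R_3, Q_3, \dots$. Applying the previous lemma together with the inductive hypothesis to the finitely many requirements ranked above $Q_e$, there is a stage $s^*$ after which no such requirement receives attention. Consequently the restraint $r_{e,s}$ is no longer lifted past $s^*$ and stabilises at some value $r_e$ (until, if ever, $Q_e$ itself acts).

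The crux is to show that $Q_e$ requires attention at most once past $s^*$. Suppose that $Q_e$ requires attention at some stage $s > s^*$, so $m(A_s \upharpoonright r_e) < e$. From the invariant established in the verification of \eqref{eq:misweldfnz}, $m$ is defined on every initial segment of $A_s$ and $m(A_s \upharpoonright r_e) \geq 1$. Since every case of the construction extends the current approximation by a $1^{\omega}$ tail along which $m$ wagers \$1 at each new bit, the map $n \mapsto m(A_s \upharpoonright n)$ eventually grows by exactly \$1 per step. Hence there is some $n > r_e$ with $m(A_s \upharpoonright n) = e$, and therefore the least string $\tau$ extending $A_s \upharpoonright r_e$ with $m(\tau) = e$ exists. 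The action of $Q_e$ sets $A_{s+1} = \tau \ \hat{\empty}\ 1^{\omega}$, yielding $m(A_{s+1} \upharpoonright |\tau|) = e$.

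After this single action the prefix $\tau$ of $A_{s+1}$ is never again disturbed: no requirement of stronger priority acts past $s^*$, and any future weaker-priority action modifies $A$ only beyond $\tau$ (since after acting, $Q_e$ effectively guards $\tau$ in the same way that the freshly chosen restraints guard prefixes for lower-priority requirements). Therefore $\tau \prefi A$, which both meets $Q_e$ (as $m(A \upharpoonright |\tau|) = e$) and ensures that $Q_e$ never again requires attention.

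The chief obstacle is establishing the existence of the witness string $\tau$ with $m(\tau) = e$. This rests on two invariants maintained by the construction: that $m$ is defined on every initial segment of the current $A_s$, and that $m$'s capital along $A_s$ grows by exactly \$1 per bit past any stabilised restraint. Both follow from the construction's uniform rule of extending by $1^{\omega}$ with \$1 wagers together with the inductive argument already given for \eqref{eq:misweldfnz}; granting them, the existence of $\tau$ and the satisfaction of $Q_e$ follow as above.
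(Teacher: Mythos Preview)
Your argument is correct and follows essentially the same route as the paper's proof: set up the induction to obtain $s^*$, show the witness $\tau$ with $m(\tau)=e$ exists, and argue $\tau\prec A$ because nothing of stronger priority ever acts again while weaker requirements only touch $A$ beyond $\tau$. The one small difference is in how the existence of $\tau$ is justified: the paper invokes the satisfaction of $Q_{e-1}$ to pin down $m(A_s\upharpoonright r_{e,s})=e-1$, whereas you argue more directly from the invariant $m(A_s\upharpoonright r_e)\geqslant 1$ together with the fact that every action leaves $A_s$ with a $1^\omega$ tail on which $m$ gains \$1 per bit, so the capital must pass through the value $e$ somewhere past $r_e$. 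Both arguments are fine and lead to the same conclusion.
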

\begin{proof}
Suppose by induction that $s^*$ is the least stage after which no requirement of stronger priority than $Q_e$ receives attention. As no requirement of stronger priority than $Q_e$ receives attention after stage $s^*$, the restraint $r_{e,s^*}$ will never again be increased unless $Q_e$ acts. If $Q_e$ requires attention at stage $s > s^*$ then we must have $m(A_s \upharpoonright r_{e,s}) < e$. As no requirement of stronger priority receives attention after stage $s^*$, we must have that $Q_{e-1}$ is satisfied, and so $m(A_s \upharpoonright r_{e,s}) = e-1$. We have defined $m$ to wager \$1 on all initial segments of $A_s$ and so there is $\tau$ such that $m(\tau) =e$. We let $A_{s+1} = \tau \ \hat{\empty} \ 1^{\omega}$ and increase the restraints $r_{e',s+1}$ for all $e' \geqslant e$. We then have that $\tau \prec A$ and $Q_e$ is satisfied. 
\end{proof}

\subsection{Integer-valued randoms not computing \pivrs}\label{subse:invrncomppivr}
The construction of Section \ref{subse:anivrnotpivr}, non-trivial as it is, admits some modifications. For example, it is not
hard to add the requirement that $A$ is 1-generic and still successfully perform the argument. This requirement can be
canonically split into an infinite sequence of conditions, with corresponding strategies in the constructions which
will occasionally change the approximation to $A$. Since the 1-genericity sub-requirements are finitary, their effect will
be similar to (in fact, more benign than) the $R_e$ requirements. 
\[
\textrm{\em There is a 1-generic \ivr\ which is not \pivr.}
\]
Note that 1-generics are generalized low, so since $A$ is 
$\Delta^0_2$, it follows that 
\[
\textrm{\em There is a low \ivr\ which is not \pivr.}
\]
The next modification of the construction of Section \ref{subse:anivrnotpivr}
results in the proof of Theorem \ref{th:pivrfromivr} and 
requires more explanation.
We can replace the requirements $Q_e$ with
\[
\textrm{$Q^{\ast}_{e,k}$:\ \ If $\Phi^A_e$ is total and non-computable then $m$ wins (at least) \$k on $\Phi^A_e$.}
\]
Note that now $m$ will bet on $\Phi^A_e$ rather than $A$. For this reason, the family
of requirements $Q^{\ast}_{e,k}$ need to act under the hypothesis that  $\Phi^A_e$ is total.
This means that we need to implement the argument of Section \ref{subse:anivrnotpivr}
on a tree, where the family
of requirements $Q^{\ast}_{e,k}$ lies below a `mother-node' $Q^{\ast}_e$ which has two outcomes, 
a $\Pi^0_2$ outcome $i$ and a  $\Sigma^0_2$ outcome $f$.
The  outcome $i$ corresponds to the fact that $\Phi_e^A$ has infinitely many expansionary stages
(i.e.\ stages where the least $n$ such that $\Phi_e^A(n)$ is undefined is larger than every before) while
outcome $f$ corresponds to the negation of this statement. Moreover the construction 
guarantees that if $i$ is a true outcome, then $\Phi_e^A$ is total. 
Requirements $Q^{\ast}_{e,k}$ act as  the $Q_k$ of Section
\ref{subse:anivrnotpivr} while $R_e$ are the same in the two constructions. Moreover these two requirements
have a single outcome in the tree argument. 
The crucial point here is that if $\Phi^A_e$ is total and non-computable then $\Phi_e$ will have splitting
along $A$, i.e.\ for each prefix $\tau$ of $A$ there will be two finite extensions $\tau_i$ of $\tau$ and an argument $x$
such that $\Phi_e^{\tau_0}(x)\neq\Phi_e^{\tau_1}(x)$. This means that before strategy $Q^{\ast}_{e,k}$ starts operating,
it can secure a  splitting which it can use to move away from versions of $\Phi^A_e$ on which $m$ has not bet appropriately.
In the construction of Section \ref{subse:anivrnotpivr} this happened automatically as $m$ bet on the real itself, and not its image
under a Turing functional. Other than these points, the construction and verification is entirely similar to that of Section \ref{subse:anivrnotpivr}.
Since this there is no novelty in this extension of the argument
of Section \ref{subse:anivrnotpivr} (given the standard machinery for tree arguments and the above remarks)
we leave the remaining details to the motivated reader.

\section{Computable enumerability and integer-valued randomness}
Nies, Stephan and Terwijn \cite{NST} showed that
 a c.e.\ degree is high if and only if it contains a computably random
 c.e.\ real. Moreover an analogous statement holds for partial computable random c.e.\ reals.
 In Section \ref{subse:delceintvr} we show that the same is true for integer-valued random c.e.\ reals.
 In other words, we give the proof of the first part of Theorem \ref{cedegcontivrlceres} that we discussed in the introduction.
The proof of the remaining part of Theorem \ref{cedegcontivrlceres} (regarding partial integer-valued randoms)
is deferred to  Section \ref{subse:delceintv2}, since the required machinery is similar to the one we use for the jump inversion 
  theorems.
In Section \ref{subse:ancdegncomivr} we give the proof of Theorem \ref{ancccc}. Note that by
Corollary \ref{coro:anccpivr}, for this proof it suffices to show that array computable c.e.\ degrees are not
\ivr.

\subsection{Degrees of left-c.e.\ integer-valued randoms}\label{subse:delceintvr}
In this section we prove the  
first part of Theorem \ref{cedegcontivrlceres}, i.e.\ that the high c.e.\ degrees are exactly those c.e.\
degrees which contain integer-valued random left-c.e.\ reals.
The `if' direction is a consequence of \cite{NST}.
For the `only if' direction it suffices to show that every
integer-valued random left-c.e.\ real has high degree.
Let $\alpha$ be an integer-valued random left-c.e.\ real, and $\langle \alpha_s \rangle_{s < \omega}$ 
a computable increasing sequence of rationals converging to $\alpha$. 
We may assume that $\alpha$ has infinitely many 1s, and so by speeding 
up the enumeration we may ensure that $\alpha_s$ has at least $s$ 1's. 
Let Tot $= \{ e\ |\  \varphi_{e} \ \mbox{is total} \}$ be the canonical $\Pi^0_2$ complete set. 
We build a Turing functional $\Gamma$ such that  for all $e$, $\lim_{k} \Gamma^{\alpha}(e,k) = $ Tot$(e)$. 
Then $\emptyset'' \leqslant_{T} \alpha'$ and so $\alpha$ is high. We also construct for each $e \in \omega$ a computable integer-valued martingale $M_e$. 
Let 
\begin{eqnarray*}
d_{e,s}=d_e[s]&=&\max \{ k\ |\  (\forall \sigma \in 2^k) (M_e(\sigma)[s]\downarrow) \}\\
l_{e,s} &=& \max \{ k\ |\  (\forall j < k)(\varphi_e(j)[s]\downarrow) \}. 
\end{eqnarray*}
We proceed in stages $s$, each consisting of two steps.

\ \paragraph{\bf Construction at stage $s+1$}
For each $\langle e,k \rangle \leq s$ do the following
\begin{itemize}
\item[(a)] If $\Gamma^{\alpha}(e,k)[s]\un$, define it as follows. 
If $l_{e,s+1} \geqslant k$ let $\Gamma^{\alpha}(e,k)[s+1] = 1$  with use $\gamma(e,k)[s+1] = 0$; 
otherwise let $\Gamma^{\alpha}(e,k)[s+1] = 0$ 
with use the maximum of $\gamma(e,k)[s]$, $\gamma(e,k-1)[s+1]$, and $h$, where $h$ is the position of the first 1 of $\alpha_s$ after $\alpha_s \upharpoonright d_{e,s}$.

\item[(b)]  If $l_{e,s+1} > l_{e,s}$,
Define $M_e$ to wager 1 dollar on $(\alpha_s \upharpoonright h) \ \hat{\empty} \ 1$, 
and bet neutrally on all other strings with length in $(d_{e,s}, h+1]$, where $h$ is the position of the first 1 of $\alpha_s$ after $\alpha_s \upharpoonright d_{e,s}$.

\end{itemize}

\ \paragraph{\bf Verification}
Since $\alpha$ is a left-c.e.\ real, it follows 
from the construction that $\Gamma$ is well defined, i.e.\ it is consistent.
We show that $\Gamma^{\alpha}$ is total by showing that $\lim_s \gamma(e,k)[s]$ exists for all pairs $(e,k)$, and that $\lim_{k} \Gamma^{\alpha}(e,k) = $ Tot$(e)$. We say that a stage $t$ is \emph{$e$-expansionary} if $l_{e,t} > l_{e,t-1}$. 

First suppose that $e \not\in $ Tot. Then $\lim_s l_{e,s}$ and $\lim_s d_{e,s}$ both exist. Let $\lim_s l_{e,s} = l$ and $\lim_s d_{e,s} = d$, and suppose these limits are reached by stage $s_0$. Let $s_1 \geqslant s_0$ be the least stage where $\alpha_{s_1} \upharpoonright d = \alpha \upharpoonright d$. Then for all $k$ and all stages $s$ where 
$\langle e,k \rangle \geqslant s \geqslant s_1$, 
$\Gamma^\alpha(e,k)[s]$ is set to 0 and $\gamma(e,k)[s]$ is set to be the position of the first 1 after $\alpha_s \upharpoonright d$. As $\alpha$ is left-c.e., the position of the first 1 of $\alpha_s$ after $\alpha_s \upharpoonright d$ at any stage $s \geqslant s_1$ is at most the position of the first 1 of $\alpha_{s_1}$ after $\alpha_{s_1} \upharpoonright d$. Therefore $\lim_s \gamma(e,k)[s]$ exists. For all $k$ such that $\langle e,k \rangle < s_1$, $\lim_s \gamma(e,k)[s]$ is at most $\max_{s < s_1} h_{e,s}$ where $h_{e,s}$ is the position of the first 1 of $\alpha_s$ after $\alpha_s \upharpoonright d_{e,s}$.

Now suppose that $e \in$ Tot. Then there is a sequence of stages $\langle s_i \rangle$ and a sequence $\langle h_i \rangle$ such that we define $M_e$ to wager 1 dollar on $(\alpha_{s_i} \upharpoonright h_i) \ \hat{\empty} \ 1$ at stage $s_i$. Note that $s_i$ is least such that $l_{e,s_i} = i$. 
The real $\alpha$ is left-c.e., so  $\alpha_s \upharpoonright (h_i +1)$ can only move 
lexicographically to the left as $s$ increases.
Moreover, the approximation to $\alpha$ will never extend $(\alpha_{s_i} \upharpoonright h_i) \ \hat{\empty} \ 0$, and so $M_e$ cannot lose capital along $\alpha$. As $\alpha$ is \ivr, $M_e$ does not succeed on $\alpha$. If $\alpha \upharpoonright (h_i + 1) = \alpha_{s_i} \upharpoonright (h_i + 1)$ then $M_e$ increases in capital by 1
dollar.  Therefore there are only finitely many $h_i$ for which $\alpha \upharpoonright (h_i + 1) = \alpha_{s_i} \upharpoonright (h_i +1)$. Let $i_0$ be least such that $\alpha\upharpoonright (h_j + 1) \ne \alpha_{s_j} \upharpoonright (h_j +1)$ for all $j \geqslant i_0$.
We show that $\Gamma^\alpha(e,k) = 1$ for all 
$k \geqslant i_0$, 
thus concluding the proof. 

Suppose by induction that $\Gamma^{\alpha}(e,i) = 1$ for all $i_0 \leqslant i < k$. At stage $s_{k-1}$ we have $l_{e,s_{k-1}} = k-1$. For any stage $t$ with $s_{k-1} \leqslant t < s_k$, if $\Gamma^\alpha(e,k)$ becomes undefined we set $\Gamma^\alpha(e,k)[t] = 0$ and set $\gamma(e,k)[t]$ to be the maximum of $\gamma(e,k)[t-1]$, $\gamma(e,k-1)[t]$, and the position of the first 1 of $\alpha_t$ after $\alpha_t \upharpoonright d_{e,t}$. Let $h$ be the position of the first 1 in $\alpha_{s_k}$ after $\alpha_{s_k} \upharpoonright d_{e,s_{k-1}}$. At stage $s_k$ we see $l_{e,s_k} = k$ and define $M_e$ to wager 1 dollar on $(\alpha_{t'} \upharpoonright h) \ \hat{\empty} \ 1$. 
If $\alpha$ changes below $\gamma(e,k)[s_k-1]$ at stage $s_k$ then $\gamma(e,k)[s_k]$ will be set to at least $h$. Otherwise, $\gamma(e,k)[s_k-1] \geqslant h$. Then as $k \geqslant i_0$, $\alpha$ changes below $h$ at some stage $t' > s_k$. At stage $t'$, $\Gamma^\alpha(e,k)$ will become undefined. At the next $e$-expansionary stage we set $\Gamma^{\alpha}(e,k) = 1$ with use 0.
This concludes the verification and the proof of Theorem \ref{cedegcontivrlceres}.

\subsection{Array computable c.e.\ degrees do not compute \ivrs}\label{subse:ancdegncomivr}
A natural class of c.e.\ degrees that do not contain \ivrs\ is the class of
array computable degrees. In this section we sketch the proof of this fact, which along with
Corollary \ref{coro:anccpivr} gives
Theorem \ref{ancccc} that was presented in the introduction.

By \cite{Ishm, Terwijn.Zambella:01} (also see \cite[Proposition 2.23.12]{roddenisbook}) if $A$ is array computable and c.e., $h$ is
a nondecreasing unbounded function and
$f\leq_T A$, then there exists a computable approximation $(f[s])$ of $f$ such that 
\begin{equation}\label{eq:hca}
| \{ s\ |\ f(x)[s]\neq f(n)[s+1] \} |\leq h(x) \ \ \textrm{for all $x$.}
\end{equation}
Hence given an \ivr\ $B$ and a c.e.\ set $A$ such that $B\leq_T A$, it suffices to define an order function $h$ and a function
$f\leq_T A$ such that any computable approximation $(f[s])$ to $f$ does not satisfy \eqref{eq:hca}.
Let $B$ be \ivr\ and suppose $A$ is c.e. and $\Gamma^A = B$. We assume that at stage $s$, $\Gamma$ has computed $s$ many bits of $\Gamma^A[s]$. We define an order function $h$ and a Turing functional $\Delta$ such that the function $f = \Delta^A$ does
 not satisfy \eqref{eq:hca} for any computable approximation $(f[s])$ of it. Let $\langle \psi_e \rangle$ be an effective list of all binary partial computable functions. We meet the requirements

\begin{description}
\item[$\mathcal{R}_e$] $(\exists x)(f(x) \ne \lim_s \psi_e(x,s) \vee |\{s\ |\ \psi_e(x,s) \ne \psi_e(x,s+1)  \} | \geqslant h(x)). $
\end{description}
\noindent
We define for each $e \in \omega$ an integer-valued martingale $m_e$. 
First, let us describe the strategy
for $\mathcal{R}_0$.
We will have $h(0) = 1$. At stage 1 we define $f(0) = 1$ with use $\delta_1(0) = \gamma_1(0)$. 
We wait until a stage $s$ where we see $\psi_0(0,s) = f(0)[s] = 1$. If this happens, we will want to define $m_0$ to put pressure on $A$ to change so that we may redefine $f(0)$. We define $m_0$ to start with \$1 in capital and wager \$1 on $\Gamma^A[s] \upharpoonright 1$. 
If $\Gamma^A \upharpoonright 1$ changes then we get a change in $A$ below $\gamma_1(0)$, and so a change in $A$ below $\delta_1(0)$. We may therefore redefine $f(0)$ and so meet $\mathcal{R}_0$. 
We assume that there will be no change in $\Gamma^A \upharpoonright 1$, and so we immediately 
look to see whether we can 
start attacking $\mathcal{R}_0$ again by trying to redefine $f(1)$. 
At stage 2 we define $f(1) = 2$ with use $\delta_2(1) = \gamma_2(1)$. 
If we see no change in $\Gamma^A \upharpoonright 1$, then the martingale $m_0$ has \$2 on $\Gamma^A \upharpoonright 1$. 
We will have $h(1) = 1$.
We wait until a stage $s'$ where we see $\psi_0(0,s') = f(0)[s']$ and $\psi_0(1,s') = f(1)[s']$. If this happens, we define $m_0$ to wager \$1 on $\Gamma^A[s'] \upharpoonright 2$. If $A$ changes below $\gamma_{s'}(1)$ then we may redefine $f(1)$ and meet $\mathcal{R}_0$. 

We would like $m_0$ to be total. Therefore whenever we let $m_0$ wager some of its capital on a string $\sigma$, we extend $m_0$ by letting it bet neutrally on all other strings of length at most $|\sigma|$.
Now suppose that none of our previous attempts to redefine $f(0), \ldots,  f(x-1)$ have been successful. 
We wait until a stage $s$ where we have $\psi_e(y,s) = f(y)[s]$ for all $y \leqslant x$. 
The use $\delta_s(x)$ will be equal to $\gamma_s(l)$ for some $l$. 
Suppose we have defined $m_0$ up to strings of length $l-1$ and that $m_0$ has \$$k$ on $\Gamma^A \upharpoonright (l-1)$. 
Suppose $h(x) = n$. Then we require $n$ changes in $A$ to redefine $f(x)$ as many times as we would like. 
If we let $m_0$ wager \$1 on $\Gamma^A \upharpoonright l$ and see $A$ change below $\gamma_s(l)$, we can redefine $f(x)$ once. Suppose that we see this change at stage $t$. We lift $\delta_t(x) = \gamma_t(l+1)$. 
The martingale $m_0$ has been defined up to strings of length $l$, and we have $m_0(\Gamma^A_t \upharpoonright l) = k-1$. 
We again wait until a stage $t'$ where $\psi_e(y,t') = f(y)[t']$ for all $y \leqslant x$. If this occurs, 
we now define $m_0$ to wager \$2 on $\Gamma^A_{t'} \upharpoonright (l+1)$. We do this so that if we do not see a change in this instance, $m_0$'s capital becomes \$$k+1$. 
When we set $\delta_t(x) = \gamma_t(l+1)$ this caused $f(x')$ to become undefined for all $x' > x$. At stage $t' + 1$ we define $f(x+1) = x+2$ with use $\delta_{t'+1}(x+1) = \gamma_{t'+1}(l+2)$. Therefore, if necessary we may start attacking $\mathcal{R}_0$ by trying to redefine $f(x+1)$. 
If every time we see a change for $f(x)$ we increase our wager by \$1, after $n-1$ many changes we are left with \$$k - (1 + 2  + \ldots + n-1) =$ \$$k-\frac{1}{2}(n-1)n$. In attempting to get the $n$th change, we wager all remaining capital and require that if we do not see another change, then we end up with more than \$$k$. So we want $2(k - \frac{1}{2}(n-1)n) > k$. That is, $k > (n-1)n$. We therefore set $h(0) = 1$ and let $h(n)$ be the greatest $m$ such that $(m-1)m < h(n-1) + 1$. If we define $m_0$ as above then either we see all required changes, or $m_0$'s capital increases to at least $k+1$. As $B$ is \ivr, we eventually do see all changes to redefine some $f(x)$, and satisfy $\mathcal{R}_0$.

\subsubsection{Multiple requirements and interactions}
In order to to deal with multiple requirements, we proceed as follows.
The function $f = \Delta^A$ is a global object which must be defined on all inputs. As in the strategy above, the values $f(x)$ are changed by the action of the requirements. Suppose we satisfy $\mathcal{R}_0$ by redefining $f(0)$ once. We could attempt to satisfy $\mathcal{R}_1$ by further redefining $f(0)$, but at some point we must stop. We choose a fresh large number $x_1$, and have the strategy for $\mathcal{R}_1$ try to redefine $f(x_1)$ as many times as necessary. As we saw above, the strategy for $\mathcal{R}_1$ may at any one time be wanting 
to redefine $f(y)$ for possibly many $y$. We formalise this by associating to each requirement $\mathcal{R}_e$ at stage $s$ an interval $I_{e,s}$ of natural numbers, so that $\mathcal{R}_e$ at stage $s$ is wanting to redefine $f(x)$ for $x \in I_{e,s}$. 
When we are successful in redefining $x \in I_{e,s}$, we remove all $y > x$ from $I_{e,s}$. If we have not already satisfied $\mathcal{R}_e$ at some later stage $s'$ and we see $\psi_e(z,s') = f(z)[s']$ for all $z \leqslant x+1$, then we add $x+1$ to $I_{e,s'}$ and attempt to redefine $f(x+1)$ as well. 

Consider the requirements $\mathcal{R}_e$ and $\mathcal{R}_f$, with $\mathcal{R}_e$ of stronger priority than $\mathcal{R}_f$. We are defining martingales $m_e$ for $\mathcal{R}_e$ and $m_f$ for $\mathcal{R}_f$. It is possible that when $\Gamma^A$ moves and we redefine some $f(k)$ for the sake of $\mathcal{R}_e$ that the martingale $m_f$ also loses capital, even though we do not redefine some $f(j)$ for the sake of $\mathcal{R}_f$. We will therefore want to start a new version of $m_f$ every time a requirement of stronger priority than $\mathcal{R}_f$ acts. 
We say that $\mathcal{R}_e$ requires attention at stage $s$ if one of the following holds:

\begin{enumerate}
\item $I_{e,s} = \emptyset$.
\item for all $x \leqslant \max I_{e,s}$ we have $\psi_e(x,s) = f(x)[s]$ and $|\{ t < s : f(x)[t] \ne f(x)[t+1]  \} | < h(x)$, and 
$A_s(z) \ne A_{s-1}(z)$ for some $z \in (\delta_s(\min I_{e,s}-1),\delta_s(\max I_{e,s})]$.
\item for all $x \leqslant \max I_{e,s}$ we have $\psi_e(x,s) = f(x)[s]$ and $|\{ t < s : f(x)[t] \ne f(x)[t+1]  \} | < h(x)$, and 
$\psi_e(\max I_{e,s} +1,s) = f(\max I_{e,s} +1)[s]$.
\end{enumerate}
We are ready to produce the construction.

\subsubsection{Construction}
At stage 0,
define $m_e(\lambda) = 1$ for all $e \in \omega$. 
Let $f(x)[0] = \Delta^A(x)[0] = 1$ with use $\delta_0(x) = x$ for all $x \in \omega$.
Let $I_{e,1} = \emptyset$ for all $e \in \omega$. 
Each stage of the construction
 after stage 0 
consists of three steps.
At stage $s, s \geqslant 1$ proceed as follows: 

\textit{Step 1}:
For all $e \leqslant s$, if a requirement of stronger priority than $\mathcal{R}_e$ has acted since $\mathcal{R}_e$ last acted, we start a new version of $m_e$, and define $m_e(\lambda) = 1$. Otherwise, we continue with the previous version of $m_e$. Let $d_{e,s}$ denote the length of the longest string for which the current version of $m_e$ is defined. 

\textit{Step 2}: 
Let $x$ be least such that $f(x)$ is undefined at the beginning of stage $s$. (If there is no such $x$, proceed to the next step.) 
Let $l = \max_{e \leqslant s} d_{e,s}$. Define $f(x)[s] = s+1$ with use $\delta_s(x) = \gamma_s(l+1)$. 

\textit{Step 3}:
Let $\mathcal{R}_e$ be the requirement of strongest priority which requires attention at stage $s$. 
Choose the first case by which $\mathcal{R}_e$ requires attention.

If case 1 holds, choose a fresh large number $x_e$ and let $I_{e,s+1} = \{ x_e \}$. 

If case 2 holds, then let $x \in I_{e,s}$ be least such that $A_s(z) \ne A_{s-1}(z)$ for some $z \in (\delta_s(x-1),\delta_s(x)]$. 
Let $f(x) = s+1$ with use $\delta_{s+1}(x) = \gamma_{s+1}(d_{e,s}+1)$. 
We have that $m_e(\Gamma^A[s] \upharpoonright d_{e,s}) < m_e(\Gamma^A[s] \upharpoonright (d_{e,s} - 1))$. 
If $m_e(\Gamma^A[s] \upharpoonright d_{e,s}) \ne 0$, let $n_e = \max_{i \leqslant d_{e,s}} h(m_e(\Gamma^A[s] \upharpoonright i)$). 
Suppose that $j$ is such that $n_e = h(m_e(\Gamma^A[s] \upharpoonright j)$ and let $n'_e = | \{ m_e(\Gamma^A[s] \upharpoonright i) : j \leqslant i \leqslant d_{e,s} \} |$. Then we have received $n'_e$ of the $n_e$ permissions required to redefine $f(x)$ at least $h(x)$ many times. If $n'_e = n_e - 1$, then define $m_e$ to wager $\Gamma^A[s] \upharpoonright d_{e,s}$ dollars on $\Gamma^A[s] \upharpoonright (d_{e,s} + 1)$. Otherwise let $w = m_e(\Gamma^A[s] \upharpoonright (d_{e,s}-1)) - m_e(\Gamma^A[s] \upharpoonright d_{e,s})$ and define $m_e$ to wager \$$(w+1)$ on $\Gamma^A[s] \upharpoonright (d_{e,s} + 1)$. 
If $m_e(\Gamma^A[s] \upharpoonright d_{e,s}) = 0$, let $m_e$ bet neutrally on all other strings of length $d_{e,s} + 1$. Let $I_{e,s+1} = [ \min I_{e,s}, x]$.

If case 3 holds, then let $I_{e,s+1} = I_{e,s} \cup \{ \max I_{e,s} + 1 \}$. Define $m_e$ to wager \$1 on $\Gamma^A[s] \upharpoonright (d_{e,s} + 1)$.

In any case, let $I_{e',s+1} = \emptyset$ for all $e' > e$. 
\subsubsection{Verification}
We need to show that
for all $e \in \omega$, $\mathcal{R}_e$ is satisfied.
Assume by induction that stage $s^*$ is the last stage at which a requirement of stronger priority than $\mathcal{R}_e$ acts.
Assume for all $s \geqslant s^*$ that $\psi_e(x,s) = f(x,s)$ for all $x \leqslant \max I_{e,s}$. 
At stage $s^*+1$ we will define a new version of $m_e$, which will be the final version. 
At every stage after $s^*+1$, we define more of $m_e$. Therefore $m_e$ is total. As $\Gamma^A$ is \ivr, $m_e(\Gamma^A) = \sup \, \{ m_e(\Gamma^A \upharpoonright i) : i \in \omega \} < \infty$. Let $ \sup \{ m_e(\Gamma^A \upharpoonright i) : i \in \omega \} = k$ and $i_0$ be such that $m_e(\Gamma^A \upharpoonright i_0) = k$. Suppose $s_0$ is least such that $s_0 \geqslant s^*+1$ and $\Gamma^A[s_0] \upharpoonright i_0 = \Gamma^A \upharpoonright i_0$, and $x$ is such that $\delta_{s_0}(x) = \gamma_{s_0}(d_{e,s_0}+1)$. Then $f(x)$ is redefined $h(x)$ many times and $\mathcal{R}_e$ is satisfied.

\section{Jump inversion for \ivrs}\label{se:jumpinvivrs}
Jump inversion for \ml randoms was discovered in \cite{MR820784, MR2170569} and
was generalized in \cite{BDNGP}. Every
degree which is c.e.\ in and above $\mathbf{0}'$ contains the jump of some \ml random
$\Delta^0_2$ set. Hence the same holds for the \ivrs. However in this case we can obtain a stronger
jump inversion theorem by requiring that the `inverted' degrees are c.e. Note
that this stronger theorem does not hold for \ml randoms since 
$\mathbf{0}'$ is the only c.e.\ degree containing a \ml random. Moreover it does not hold
for computable randomness or Schnorr randomness, since by  \cite{NST} the only c.e.\ degrees
that contain such randoms are high. integer-valued randomness is the strongest known randomness notion
for which jump inversion holds with c.e.\ degrees.

Since the argument is somewhat involved, we present it in two steps.
In Section \ref{subse:lowcecontivr} we discuss the strategy for controlling the jump of an \ivr \ of c.e.\ degree.
This argument gives a low c.e.\ degree which contains an \ivr. 
It is a finite injury construction, and the hardest of the two steps.
Our argument actually shows the stronger result that there is a low c.e.\ weak
truth table degree which contains an \ivr.
We can then add coding requirements in order to prove the full jump inversion theorem, which we present in full detail in
Section  \ref{subse:fulljumpcecontivr}. This construction is a tree argument which uses
the strategies of Section \ref{subse:lowcecontivr} for ensuring that the jump of the constructed set is below the given
$\Sigma^0_2$ set, combined with standard coding requirements which deal with the remaining requirements.

\subsection{A low c.e.\ degree containing an \ivr}\label{subse:lowcecontivr}
We build an integer-valued random $A$ of low c.e.\ degree.
In fact, we build an integer-valued random $A$ and a c.e.\ set $B$ such that  $A \equiv_{wtt} B$.
Let $\langle m_e \rangle$ be an effective list of all partial integer-valued martingales. 
In order to ensure that $A$ is \ivr\ it suffices to satisfy the 
following requirements:

\begin{description}
	\item[$R_e$\ ] \hspace{0.3cm}if $m_e$ is total, then $m_e$ does not succeed on $A$;
	\item[$N_e$\ ] \hspace{0.3cm}$(\exists^{\infty}s)\ (\Phi_e^A(e)[s] \downarrow) 
	\implies \Phi_e^A(e)\downarrow$.
\end{description}
\noindent
We order the requirements as $R_0 > N_0 > R_1 > N_1 > \ldots$ and 
begin by setting $A_1 = 1^{\omega}$. To meet $R_0$, we observe the values of the martingale $m_0$. If $m_0$ increases its capital along $A$, we change $A$ to force $m_0$ to lose capital. 
As $m_0$ is integer-valued, if it loses capital, it must lose at least \$1. Thus if we can force $m_0$ to lose capital every time we act, we need only act for $R_0$ finitely many times. 
As we are building reductions $\Gamma$ and $\Delta$ such that $\Gamma^B = A$ and $\Delta^A = B$, to change $A$ we will need to change $B$. Once we have changed $B$, we will then need to change $A$ again to record this fact.
To satisfy the requirement $N_e$ we use the usual strategy of preserving the restraint $\varphi_e^A(e)[s]$ at all but finitely many stages. As the strategy for an $R$-requirement is finitary, this can be done easily.

\subsubsection{The finite injury construction}\label{subsubse:basfininconslev}
In order to help with the definition of the reductions, we make use of 
\textit{levels} $\langle l_i \rangle_{i < \omega}$ and $\langle d_i \rangle_{i < \omega}$. 
We calculate the size of the levels below. We set $\gamma(l_i) = d_i$ and $\delta(d_i) = l_{i+1}$. We say that we act for requirement $R_e$ {\em at level} $l_{i+1}$ at stage $s$ if we change $A$ to decrease $m_e$'s capital from $A \upharpoonright l_i$ to $A \upharpoonright l_{i+1}$. That is,  $m_e(A_s \upharpoonright l_{i+1}) > m_e(A_s \upharpoonright l_i)$ and $m_e(A_{s+1} \upharpoonright l_{i+1}) < m_e(A_{s+1} \upharpoonright l_i)$. We act at level $l_{i+1}$ only for the sake of the requirements $R_0, \ldots, R_i$. Once we have acted at level $l_{i+1}$, we enumerate an element from $[d_i, d_{i+1})$ into $B$. 
To record this change in $B$, we let $A$ extend a string of length $l_{i+2}$ which has not yet been visited.
So that the reduction $\Delta$ is consistent, we must not let $A$ extend a string which is {\em forbidden}, that is, a string $\sigma$ such that $\Delta^\sigma \not\prec B_{s+1}$. We carefully define the levels $l_i$ so that the action from the requirements never forces us to extend a forbidden string.


Before we define $\langle l_i \rangle_{i < \omega}$ and $\langle d_i \rangle_{i < \omega}$,
we lay out the construction (in terms of these unspecified parameters and a function $d$ defined below). In Section \ref{subsubse:towardeflev}
we discuss the  various properties that these parameters need to satisfy in order that the construction
is successful (i.e.\ produces sets $A, B$ which satisfy the requirements $R_e, N_e$).


We have for every $e \in \omega$ and every stage $s$ a 
restraint $r_{e,s}$. We say that $R_e$ requires attention at level $l_{i+1}$ at stage $s$ if

\begin{enumerate}
\item $m_e(\sigma)[s] \downarrow$ for all strings $\sigma$ of length $\leq l_{i+1}$,
\item $l_{i+1} \geqslant r_{e,s}$
\item $m_e(A_s \upharpoonright l_{i+1}) > m_e(A_s \upharpoonright l_i)$.
\end{enumerate}
\noindent
We say that $R_e$ requires attention at stage $s$ if it does so at some level. 
We say that $N_e$ requires attention at stage $s$ if $\Phi_e^A(e)[s] \downarrow$.
Recall the definition of the {\em sibling} of a string, just after 
Definition \ref{de:intvalmar}.

\ \paragraph{{\em Construction}}
Let $\gamma(l_i) = d_i$ and $\delta(d_i) = l_{i+1}$.
At stage 0, let $A_1 = 1^{\omega}$ and $r_{e,1} = l_{e}$ for all $e$.

\noindent \textit{Stage $s, s \geqslant1$}: Find the requirement of 
strongest priority which requires attention at stage $s$. (If there is no such requirement, proceed to the next stage.) 

\textit{Case 1}: If this is $R_e$, let $l_{i+1}$ be least such that $R_e$ requires attention at level $l_{i+1}$ at stage $s$. 
Let $l \in (l_i, l_{i+1}]$ be least such that $m_e(A_s \upharpoonright l) > m_e(A_s \upharpoonright l_i)$. 
Choose a string $\tau$ of length $l_{i+1}$ which extends the sibling of $A_s \upharpoonright (l-1)$ such that 
the minimum of all $d(\tau,\mu)$, where $\mu$ is any forbidden 
string of length $l_{i+1}$ extending $A_s \upharpoonright l_i$, is as large as possible.
Enumerate an element of $[d_i,d_{i+1})$ into $B$. 
Choose a string $\rho$ of length $l_{i+2}$ extending $\tau$ such that 
$\rho \not\prec A_t$ for all $t < s$, and 
the minimum of all $d(\rho,\mu)$, where the minimum is taken over forbidden strings $\mu$ of length $l_{i+2}$ extending $\tau$, 
is as large as possible.
Set $A_{s+1} = \rho1^{\omega}$. For all $e' \geqslant e$ with $r_{e',s} \leqslant l_{i+1}$, let $r_{e',s+1} = l_{i+1}$.

\textit{Case 2}: If this is $N_e$, for all $e' > e$ with $r_{e',s} \leqslant \varphi_e^A(e)[s]$, let $r_{e',s+1} = \varphi_e^A(e)[s]$.

In the following section we give the remaining specifications and analysis of the construction, as well as
the verification.

\subsubsection{The calculation of the levels \texorpdfstring{$l_i$}{l_i} and
\texorpdfstring{$d_i$}{d_i} for a successful construction}\label{subsubse:towardeflev}
In the following we calculate the levels
 $l_i$, $d_i$, and depict this process in Figure \ref{fig:calclevelli}.
Suppose that we act at level $l_{i+1}$ for $R_e$ and naively let $A$ extend a string $\tau$ of length
 $l_{i+1}$ whose sibling is forbidden. Consider the situation where $m_0$ increases its capital on the 
 very last bit of $A_s \upharpoonright l_{i+1}$, loses capital on $\tau$, and is neutral on all other strings of length
 $l_{i+1}$. We will not be able to change $A$ to extend $\tau$'s sibling, as this string is forbidden. However, 
 we do not want to change $A$ so that $m_0$ is neutral, as we would like the action for $R_0$ to be finitary.
To avoid such a situation we must be more careful in how we change $A$. In particular, we must 
ensure that $A$ is kept in some sense ``far away'' from forbidden strings. This is made precise below. 

We first calculate an upper bound on the number of forbidden strings of length $l_{i+1}$ 
which can occur above a nonforbidden string of length $l_{i}$. Our upper bound will not be strict. 
A string $\sigma$ of length $l_{i+1}$ becomes forbidden if $\Delta^\sigma$ is no longer giving correct $B$-information. As $\delta(d_i) = l_{i+1}$, $\Delta^\sigma$ will be incorrect 
only if we enumerate an element into $B$ below $d_i$, which occurs only when we act for a
 requirement at some level $\leqslant l_i$.  We will act for $R_e$ at level $l_{i+1}$ only when we
  see $m_e$ halt on all strings of length $l_{i+1}$, and so if $A$ no longer changes below $l_i$, we will act for $R_e$ at level $l_{i+1}$ at most once.
As we act at level $l_{i+1}$ only for the sake of requirements $R_0, \ldots, R_i$, if $A$ no longer 
changes below $l_i$, we can act at level $l_{i+1}$ at most $i+1$ times. After acting at a level $l_j$ 
for some $j \leqslant i$, we allow $R_0, \ldots, R_{i}$ to act at level $l_{i+1}$ again. We begin with
 $A_1 = 1^{\omega}$. Suppose we act $i+1$ times at level $l_{i+1}$. We then act at level $l_i$. 
 We act another $i+1$ times at level $l_{i+1}$ before we again act at level $l_i$. We can act at level 
 $l_i$ at most $i$ times. This can continue until we get to level $l_1$, where we can change $A$ once 
 below $l_1$ for the sake of requirement $\mathcal{R}_0$. Therefore we act at a level $\leqslant l_i$ $2.3.4. \ldots \cdot (i+2) = (i+2)!$ many times, and there are at most 
\[
\parbox{4cm}{$f_{i+1} = \sum_{j=0}^i \, (j+1)!$}
\]
many forbidden strings of length $l_{i+1}$.
Note that for any $k \in \omega$ we may enumerate all partial integer-valued martingales with initial capital $k$. 
We therefore may assume that our list $\langle m_e \rangle$ of all partial integer-valued martingales comes with
 a computable intial capital, $m_e(\lambda)$. As a martingale may at most double its capital in a single bet, the 
 upper bound on $m_e$'s capital at a string of length $n$ is $2^nm_e(\lambda)$. 

We now show how a martingale can force us ``closer'' to a forbidden string. Suppose at stage $s$ that $A_s$ 
extends the string $\nu$ of length $l_{i+1}$, and there is a forbidden string $\mu$ of length $l_{i+2}$ above. 
For simplicity, suppose that $A_s$ extends the leftmost string of length $l_{i+2}$ which extends $\nu$, and that $\mu$ is the rightmost string of length $l_{i+2}$ which extends $\nu$.
If $\mathcal{R}_j$ requires attention at level $l_{i+2}$, we would like to choose a string $\tau$ of length $l_{i+2}$ with $m_j(\tau) < m_j(\nu)$. 
The problem is the following. Suppose that $m_j$ increases its capital on all string of length $l_{i+2}$ 
which extend $\nu0$. By Kolmogorov's inequality, this must mean that $m_j(\nu0) > m_j(\nu)$, and so 
$m_j(\nu1) < m_j(\nu)$. If $m_j$ has sufficient capital at $\nu1$, it may then increase its capital above 
$m_j(\nu)$ on all strings of length $l_{i+2}$ which extend $\nu10$. Again by Kolmogorov's inequality 
we have $m_j(\nu10) > m_j(\nu1)$ and so $m_j(\nu11) < m_j(\nu1) < m_j(\nu)$. Now $m_j$ is an 
integer-valued martingale, and so after doing this finitely many times, say $n$ times, we have
 $m_j(\nu1^n) < \frac{1}{2}m_j(\nu)$ and so $m_j$ cannot increase its capital above $m_j(\nu)$
  on all strings of length $l_{i+2}$ which extend $\nu1^n0$. If $l_{i+2} \geqslant l_{i+1} + n + 1$, 
  then we can pick a string $\tau$ extending $\nu1^n0$ with $m_j(\tau) < m_j(\nu)$ and which is 
  not forbidden. For two strings $\alpha$ and $\beta$ of length $l$, let $d(\alpha,\beta)$ be $l-b$, 
  where $b$ is the length of the longest common initial segment. Then in this situation, we have 
  $d(A_s,\mu) = l_{i+2} - l_{i+1}$ and $d(\tau,\mu) \leqslant l_{i+2} - l_{i+1} - n$. Therefore $m_j$ 
  has forced $A$ distance $n$ closer to the forbidden string $\mu$. Now $\mathcal{R}_j$ might not 
  be the only requirement which can act at level $l_{i+2}$. We will then need to calculate the distance 
  that the other martingales may move $A$, and ensure that $l_{i+2}$ is high enough. 

We calculate a bound on how far an integer-valued martingale $m$ may move $A$. If $m(\nu) = k$ 
and $m$ increases its capital to $k+1$ on all strings extending $\nu0$, then $m(\nu1) = k-1$. If 
$m(\nu1) \geqslant 2$ then $m$ can increase its capital to $k+1$ on all strings extending $\nu10$. 
Then $m(\nu11) = k - 1 - 2$. If $m(\nu1) \geqslant 4$ then $m$ can increase its capital to $k+1$ on
 all strings extending $\nu110$. Then $m(\nu111) = k - 1 - 2 - 4$. When $m(\nu1^n) = k - 1 - 2 - \ldots - 2^{n-1} < \frac{1}{2}k$,
  $m$ is not able to increase its capital to $k+1$ on all strings extending $\nu1^n0$. We let 
  $n(k) = (\mu n)(k - 1 - 2 - \ldots - 2^{n-1} < \frac{1}{2}k)$. Then $m$ can move $A$ at most a 
  distance $n(k)$. In the case where $A_s$ is not the leftmost string extending $\nu$ and $\mu$ 
  is not the rightmost string extending $\nu$, a similar argument shows that $m_e$ can still move 
  $A$ a distance of at most $n(k)$. The only difference is that $m_e$ would then need to bet against 
  the initial segments of $\mu$ which are of length greater than $l_{i+2}-l_{i+1}-d(A_s,\mu)$. 

Suppose we act at level $l_{i+1}$ at stage $s$ and let $A$ extend the string $\nu$ of length $l_{i+1}$
 which has forbidden strings of length $l_{i+2}$ above. We enumerate an element of $[d_i,d_{i+1})$
  into $B$. To record this change in $B$, we choose a string $\rho$ of length $l_{i+2}$ which has not 
  been visited before, and which is as far from any forbidden string as possible. We have that there are 
  at most $f_{i+2}$ many forbidden strings of length $l_{i+2}$ above a nonforbidden string of length $l_{i+1}$. 
Let $x = (\mu x)(2^x \geqslant f_{i+2})$. By the counting argument for $f_{i+1}$ given above, if 
$l_{i+2} - l_{i+1} = h > x$, then there is a string of length $l_{i+2}$ which has not been visited yet,
 and which is at least distance $h-x$ from a forbidden string. 
Now suppose that $\mathcal{R}_j$ requires attention at level $l_{i+2}$. We know that there are no
 forbidden strings above $\rho \upharpoonright l_{i+1} + x + 1$, and so if we can reduce $m_j$ by
  moving to a string which is still above $\rho \upharpoonright l_{i+1} + x + 1$, we will do so. Otherwise,
   $m_j$ will move us closer to a forbidden string. The bound on the capital of $m_j$ at 
   $A_s \upharpoonright (l_{i+1} + x + 1)$ is $2^{l_{i+1} + x + 1}m_j(\lambda)$. So we know that
    $m_j$ may move us a distance at most $n(2^{l_{i+1} + x + 1}m_j(\lambda))$ towards a forbidden string.
     If $l_{i+2} - l_{i+1} > x + n(2^{l_{i+1} + x + 1}m_j(\lambda))$ then we will be able to choose a nonforbidden string $\rho'$ which decreases $m_j$. Suppose that $\mathcal{R}_k$, which is of stronger priority than $\mathcal{R}_j$, requires attention at level $l_{i+2}$ at stage $s'$. 
Our reasoning is similar to the previous case. Let $n_0 = n(2^{l_{i+1} + x + 1}m_j(\lambda))$. We know that there are no forbidden strings above $\rho' \upharpoonright l_{i+1} + x + n_0 + 1$, and so if we can reduce $m_k$ by moving to a string which is still above $\rho' \upharpoonright l_{i+1} + x +  n_0 + 1$, we will do so. Otherwise, $m_k$ will move us closer to a forbidden string. The bound on the capital of $m_k$ at $A_{s'} \upharpoonright (l_{i+1} + x + n_0 +1)$ is $2^{l_{i+1} + x + n_0 + 1}m_k(\lambda)$. So we know that $m_k$ may move us a distance at most $n(2^{l_{i+1} + x + n_0 + 1}m_k(\lambda))$ towards a forbidden string. If $l_{i+2} - l_{i+1} > x + n_0 + n(2^{l_{i+1} + x + n_0 + 1}m_k(\lambda))$ then we will be able to choose a nonforbidden string $\rho''$ which decreases $m_k$. We will need $l_{i+2}$ to be large enough so that we can always move in this way for any requirement which might act at level $l_{i+2}$. 

The requirements $\mathcal{R}_0, \ldots, \mathcal{R}_i$ can act at level $l_{i+1}$. We do not know the order in which the requirements may act, so we will have to take the maximum of the capitals of the martingales $m_0, \ldots, m_i$ in our calculation. We illustrate this definition in Figure \ref{fig:calclevelli}
.
We set $l_0 = 0$. Given $l_i$, we set 
$l_{i+1,0} = l_i + (\mu x)(2^x \geqslant f_{i+1})$ and for $0 \leqslant j \leqslant i$, 
\[ 
l_{i+1,j+1} = l_{i+1,j} + \max_{k \leqslant i} \ n(2^{l_{i+1,j} + 1}m_k(\lambda)) 
\]
\noindent 
and let $l_{i+1} = l_{i+1,i+1} + 1$.
The levels $d_i$ are chosen so that we can enumerate an element into $[d_i,d_{i+1})$ every time 
we act at level $l_{i+1}$. This calculation is the same as that of $f_{i+1}$. Let $d_0 = 0$. Given $d_i$, let $d_{i+1} = d_i + \sum_{j=0}^i (j+1)!$.

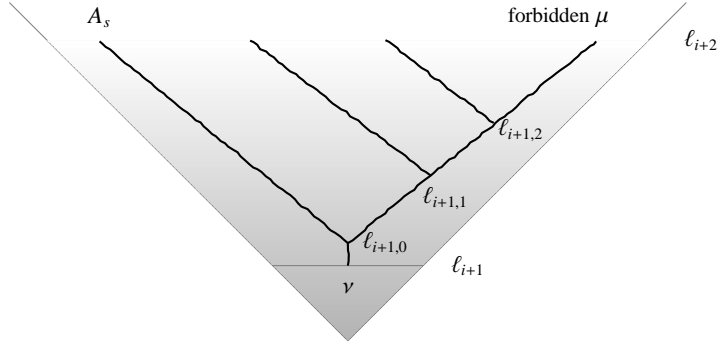
\begin{figure}
\begin{tikzpicture}
\draw [gray] (2,0) -- (-2.5,4.5); 
\draw [gray] (2,0) -- (6.5,4.5);
\fill[bottom color = black!30!white, top color = white] (2,0) -- (-2,4) -- (6,4);
\draw [gray] (1,1) -- (3,1);
\draw [decorate, decoration={random steps,segment length=2pt,amplitude=0.4pt}, thick] (2,1) -- (2,1.3); 
\draw [decorate, decoration={random steps,segment length=2pt,amplitude=0.4pt}, thick] (2,1.3) -- (-1.3,4); 
\draw [decorate, decoration={random steps,segment length=2pt,amplitude=0.4pt}, thick] (3.1,2.2) -- (0.7,4); 
\draw [decorate, decoration={random steps,segment length=2pt,amplitude=0.4pt}, thick] (3.94,2.9) -- (2.5,4); 
\draw [decorate, decoration={random steps,segment length=2pt,amplitude=0.4pt}, thick] (2,1.3) -- (5.3,4); 
\node at (2,0.7) {\small $\nu$};
\node at (3.6,1.0) {\small $\ell_{i+1}$};
\node at (6.7,4) {\small $\ell_{i+2}$};
\node at (-1.3,4.3) {\small $A_s$};
\node at (2.5,1.3) {\small $\ell_{i+1,0}$};
\node at (3.3,1.9) {\small $\ell_{i+1,1}$};
\node at (4.3,2.8) {\small $\ell_{i+1,2}$};
\node at (4.8,4.3) {\small {\footnotesize forbidden} $\mu$};
\end{tikzpicture}
\caption{Calculating the levels $l_i$ and avoiding the forbidden strings.}
\label{fig:calclevelli}
\end{figure}
\subsubsection{Verification of the finite injury construction}
First, we show that $A \equiv_{wtt} B$.
As in the calculation of $f_{i+1}$, we can act at level $l_{i+1}$ at most $\sum_{j=0}^i (j+1)!$ many times. 
We have that $d_{i+1} = d_i + \sum_{j=0}^i (j+1)!$ for all $i$. 
Every time we act at level $l_{i+1}$ and change $A$ below $l_{i+1}$, we enumerate an element from $[d_i,d_{i+1})$ into $B$. 
The uses $\gamma(l_i)$ are clearly computable, and so we have $\Gamma^B = A$ via the weak truth-table functional $\Gamma$.
For the other reduction, 
note that the consistency of $\Delta$ is 
a consequence of $A$ never extending a 
forbidden string. Again the uses $\delta(d_i)$ are computable 
and so $\Delta^A = B$ via the weak truth-table functional $\Delta$. 

Next we argue that all $N_e$ requirements are met. Suppose inductively that all requirements of stronger priority than $N_e$ do not act
after a certain stage $t$. If at some stage $s$ after stage $t$ the computation in requirement $N_e$ halts, then a restraint $r_{e,s}$ will be erected so that
the use of the computation is protected from further enumerations into $A$. Therefore in that case the computation actually halts. Therefore $N_e$ is met, and this concludes
the induction step.

It remains to show that for every $e \in \omega$, $\mathcal{R}_e$ is satisfied.
Suppose by induction that all requirements of stronger priority than $\mathcal{R}_e$ do not act after stage $s^*$. Let $i_0$ be least such that $l_{i_0} \geqslant r_{e,s^*}$. We show that $m_e(A) \leqslant m_e(A_{s^*} \upharpoonright l_{i_0})$. 
Suppose at stage $s \geqslant s^*$ we see $m_e$ increase its capital beyond $m_e(A_{s^*} \upharpoonright l_{i_0})$. Then $\mathcal{R}_e$ will require attention at stage $s$. Suppose that $\mathcal{R}_e$ requires attention at level $l_{i+1}$ at stage $s$. 
Let $t < s$ be the last stage at which we acted for some requirement at a level below $l_{i+1}$. Suppose we acted at level $l_{j+1}$. At stage $t$ we chose some string $\rho$ of length $l_{j+2}$ and let $A_t = \rho1^{\omega}$. Then $A_{s-1} \upharpoonright l_i = A_t \upharpoonright l_i$. If $j = i-1$ then we chose $\rho$ which would have been at least distance $l_{i+1} - l_i - x$, where $x = (\mu x)(2^x \geqslant f_{i+1})$, from any forbidden string of length $l_{i+1}$ (that is, $\rho$ and any forbidden string have a common initial segment of length at most $l_i + x$). Otherwise $j < i - 1$ and there is no forbidden string of length $l_{i+1}$ above $A_s \upharpoonright l_i$. Suppose that between stages $t$ and $s-1$ inclusive we acted at level $l_{i+1}$ $k$ many times. We have that $k < i+1$. 
Then $A_s \upharpoonright l_i$ and any forbidden string have a common initial segment of length at most $l_{i+1,k}$. Let $l \in (l_i,l_{i+1}]$ be least such that $m_e(A_s \upharpoonright l) > m_e(A_s \upharpoonright l_i)$. If $l > l_{i+1,k} + 1$, then there is a string $\tau$ above $A_s \upharpoonright l_{i+1,k} + 1$ with $m_e(\tau) < m_e(A_s \upharpoonright l)$ and which is not forbidden. Otherwise $m_e$ can move us at most distance $n(2^{l_{i+1,k}+1}m_e(\lambda))$ closer to a forbidden string. We have that $l_{i+1} > l_{i+1,k} + n(2^{l_{i+1,k}+1}m_e(\lambda))$, so we can find a nonforbidden string $\tau$ with $m_e(\tau) < m_e(A_s \upharpoonright l)$. Restraints are then imposed so that $\mathcal{R}_e$ and no other requirement of weaker priority may act at level $l_{i+1}$ after stage $s$. Therefore $A \upharpoonright l_{i+1} = A_{s+1} \upharpoonright l_{i+1}$ and so $m_e(A) < m_e(A_{s^*} \upharpoonright l_{i_0})$. 

\subsection{The full jump inversion theorem for \ivrs}\label{subse:fulljumpcecontivr}
Given  a set $S\geq_T \mathbf{0}'$ which is c.e.\ in $\mathbf{0}'$ we show how to construct 
an integer-valued random set $A$ of c.e.\ degree such that $A' \equiv_{T} S$. 
Along with $A$, we build a c.e.\ set $B$ such that $A \equiv_T B$, and show that $B' \equiv_T S$.
Let $\langle m_e \rangle$ be an effective enumeration of all 
partial computable integer-valued martingales. So that $A$ is integer-valued random, we meet the requirements

\begin{description}
	\item[$\mathcal{R}_e$] If $m_e$ is total, then $m_e$ does not succeed on $A$.
\end{description}

\noindent 
We also build wtt-reductions $\Gamma$ and $\Delta$ such that $\Gamma^B = A$ and $\Delta^A = B$. 
For the requirement 
$S \leqslant_T B'$, we build a functional $\Lambda$ and meet the requirements 

\begin{description}
	\item[$\mathcal{P}_e$] $\lim_t \Lambda^B(e,t) = S(e)$. 
\end{description}

\noindent The basic strategy for a $\mathcal{P}$-requirement is as follows. As $S$ is c.e.\ in and above $\mathbf{0}'$, 
we know that $S$ is $\Sigma^0_2$. Therefore there is some computable approximation $\{ S_i \}_{i \in \omega}$ such that $n \in S$ if and only if there is an $s$ such that $n \in S_t$ for all $t > s$. We define $\Lambda^B(e,s) = 1$ for larger and larger $s$ with some large use $\lambda(e,s)$. If we see at some stage $u$ that $e \not\in S_u$ and $u > t$, then we enumerate $\lambda(e,t)$ into $B$ and redefine $\Lambda^B(e,t) = 0$ with use $-1$, i.e. the axiom defining $\Lambda^B(e,t) = 0$ does not depend on $B$. 

For the requirement 
$B' \leqslant_T S$ we \textit{attempt} to meet the requirements

\begin{description}
	\item[$\mathcal{N}_e$] $(\exists^{\infty}s)(\Phi_e^B(e)\downarrow) \implies \Phi_e^B(e)\downarrow$.
\end{description}

\noindent We attempt to meet these as usual by restraining $B$ below the use $\varphi_e^B(e)[s]$ whenever we see $\Phi_e^B(e)[s]\downarrow$. Although we will not actually meet these requirements (doing so would mean that $B' \equiv_T \emptyset'$), trying to meet the requirements will allow us to show that $B' \leqslant_T S$.

\subsubsection{The priority tree}
The construction will use a tree of strategies. 
To define the tree, we specify recursively the association of nodes to requirements, and specify the outcomes of nodes working for particular requirements. To specify the priority ordering of nodes, we specify the ordering between outcomes of any node.
We order the requirements as 
\[
\mathcal{R}_0 > \mathcal{P}_0 > \mathcal{N}_0 > \mathcal{R}_1 > \mathcal{P}_1 > \mathcal{N}_1 > \cdots
\]
and specify that all nodes of length $k$ work for the $k^{th}$ requirement on the list. 
We will have nodes dedicated to $\mathcal{R}$-, $\mathcal{P}$-, and $\mathcal{N}$-requirements. A node dedicated to a $\mathcal{P}$-requirement will have the $\infty$ outcome, corresponding to enumerating infinitely many markers $\lambda(e,s)$, and the $f$ outcome, for when only finitely many markers are enumerated. Suppose that the node $\alpha$ works for $\mathcal{P}_e$ and $\beta$ works for $\mathcal{N}_f$ with $\alpha \prec \beta$. If $f$ is the true outcome of $\alpha$ and $\alpha \ \hat{\empty} \ f \preccurlyeq \beta$, then only finitely many markers are enumerated, and $\beta$ does not need to worry about the computations it sees being destroyed infinitely many times. Now suppose that $\infty$ is the true outcome of $\alpha$ and $\alpha \ \hat{\empty} \ \infty \preccurlyeq \beta$. Then $\beta$ will be guessing that $\mathcal{P}_e$ will enumerate all its unrestrained markers $\lambda(e,s)$ into $B$. It will then not believe a computation $\Phi_e^B(e)[s]$ until it sees that all unrestrained markers below the use $\varphi_e^B(e)[s]$ have been enumerated. This is formalized with the definition of a $\beta$-correct computation below. The outcomes of $\mathcal{R}$- and $\mathcal{N}$-nodes are $\ldots < n < \ldots < 1 < 0$, corresponding to the restraint they impose on $B$.

\subsubsection{Making the sets \texorpdfstring{$A,B$}{A,B} of the same degree}
In Section \ref{subse:lowcecontivr} we discussed the proof that 
there is a low c.e.\ degree containing an \ivr\ set. As in that argument, 
we make use of {\em levels} in the definition of the reductions $\Gamma$ and $\Delta$. We slightly adjust the definition of the levels because we now must also enumerate the markers $\lambda(e,s)$ into $B$. We increase the size of each interval $[d_i, d_{i+1})$ to accommodate a coding marker. The coding markers will be chosen to be $d_i$ for some $i \in \omega$. Now that we are also enumerating coding markers into $B$, we also adjust the definition of the levels $l_i$. Enumerating the coding markers will cause more strings to become {\em forbidden}. We recalculate the upper bound on the number of forbidden strings of length $l_{i+1}$. 
As before, the requirements $R_0, \ldots, R_i$ may act at level $l_{i+1}$. We calculated in 
Section \ref{subse:lowcecontivr} 
that we may act at most $(i+2)!$ many times at level $l_{i+1}$. When the coding marker $d_i$ is enumerated, the requirements $R_0, \ldots, R_i$ may act again at level $l_{i+1}$. Therefore we may act at most $2(i+2)! + 1$ many times at level $l_{i+1}$. So there are at most $ \sum_{j=0}^i (2(j+1)! + 1)$ many forbidden strings of length $l_{i+1}$. Letting $f'_{i+1} = \sum_{j=0}^i (2(j+1)! + 1)$, we calculate the levels $l_i$ as before. 
We let $l_0 = 0$, and given $l_i$, we let $l_{i+1,0} = l_i + (\mu n)(2^n \geqslant f'_{i+1})$ and for $0 \leqslant j \leqslant i$ 
we let
\[ 
l_{i+1,j+1} = l_{i+1,j} + \max_{k \leqslant i} \ n(2^{l_{i+1,j} + 1}m_k(\lambda)) 
\]
\noindent 
and  $l_{i+1} = l_{i+1,i+1} + 1$. 
Set $d_0 = 0$. Given $d_i$, let $d_{i+1} = d_i + 2(i+1)! + 1 + 1$. 

\subsubsection{Coordination and restraint on the tree}
The action for $\mathcal{R}$-requirements will otherwise be identical with 
the construction of Section \ref{subse:lowcecontivr}.
Nodes working for $\mathcal{R}$-requirements will also have to be wary of coding done by $\mathcal{P}$-nodes above. Suppose $\beta$, working for $\mathcal{R}_f$, is below the $\infty$ outcome of $\alpha$, which is working for $\mathcal{P}_e$. If we see the martingale $m_f$ increase its capital along $A_s$ and wish to enumerate an element of $[d_i+1, d_{i+1})$ into $B$ to change $A$, we will wait until all unrestrained markers $\lambda(e,s)$ below $d_i$ have been enumerated into $B$ before changing $A$ for the sake of $\mathcal{R}_f$.
If $\sigma$ is accessible at stage $s$, we let 
\[ 
r(\sigma,s) = \max \{ \sigma(|\alpha|)\ |\  \alpha \prec \sigma \mbox{ is an } 
\mathcal{N}\mbox{-node or an } \mathcal{R}\mbox{-node} \}.
\]
We say that a computation $\Phi_e^B(e)[s]$ is $\sigma$-correct 
if for every $\mathcal{P}$-node $\alpha$ such that 
$\alpha \ \hat{\empty} \ \infty \preccurlyeq \sigma$, 
$r(\alpha,s) < \lambda(e,t) < \varphi_e^B(e)[s]$ implies $\lambda(e,t) \in B_s$.
 Recall the definition of the {\em sibling} of a string, just after 
Definition \ref{de:intvalmar}.

\subsubsection{Construction of the sets \texorpdfstring{$A, B$}{A,B}}
At stage 0 we set $A_1 = 1^{\omega}$, $B_1 = \emptyset$, 
and let $r_{e,1} = l_e$ for all $e$. Moreover we 
set $\Lambda^B(0,0) = 0$ with use $d_1$.
Each stage $s \geqslant 1$ is conducted in two steps:

\textit{Step 1}: For $e,t \leqslant s$, if $\Lambda^B(e,t)$ is undefined at stage $s$, then let $\Lambda^B(e,t) = 1$ with some fresh large use $\lambda(e,t)$ equal to $d_i$ for some $i \in \omega$.

\textit{Step 2}: We let the collection of accessible nodes $\delta_s$ be an initial segment of the tree of strategies.
Let $\sigma$ be a node which is accessible at stage $s$. We describe the action that $\sigma$ takes, and if $|\sigma| < s$, then we specify which immediate successor of $\sigma$ is also accessible at stage $s$; otherwise, we proceed to the next stage. 

Suppose first that $\sigma$ works for $\mathcal{R}_e$. 
Let $k$ be least such that $d_k \geqslant r(\sigma,s)$. If 
\begin{enumerate}
\item
for all $\mathcal{P}$-nodes $\alpha$ such that 
$\alpha \ \hat{\empty} \ \infty \preccurlyeq \sigma$, 
$r(\alpha,s) < \lambda(e,t) < d_k \implies \lambda(e,t) \in B_s$, and 
\item
there is $l > l_{k+1}, r_{e,s}$ such that 

\begin{enumerate}
\item 
$m_{e,s}(\sigma)\downarrow$ for all strings $\sigma$ of length $l$, and 
\item
$m_e(A_s \upharpoonright l) > m_e(A_s \upharpoonright (l-1))$, 
\end{enumerate}

\end{enumerate}
then let $i$ be such that $l \in (l_i, l_{i+1}]$ (if there is more than one such $l$, we choose the least). 
Choose a string $\tau$ of length $l_{i+1}$ which extends the sibling of $A_s \upharpoonright (l-1)$ such that 
the minimum of all $d(\tau,\mu)$, where $\mu$ is any forbidden string of length 
$l_{i+1}$ extending $A_s \upharpoonright l_i$, is as large as possible.
Enumerate an element of $[d_i + 1,d_{i+1})$ into $B$. 
Choose a string $\rho$ of length $l_{i+2}$ extending $\tau$ such that 
$\rho \not\prec A_t$ for all $t < s$, and 
the minimum of all $d(\rho,\mu)$, where the minimum is taken over forbidden strings $\mu$ of length $l_{i+2}$ extending $\tau$, 
is as large as possible.
Set $A_{s+1} = \rho1^{\omega}$. For all $e' \geqslant e$ with $r_{e',s} \leqslant l_{i+1}$, let $r_{e',s+1} = l_{i+1}$.
The string $\sigma \ \hat{\empty} \ d_{i+1}$ is accessible at stage $s$.

Now suppose that $\sigma$ works for $\mathcal{P}_e$. For all $e,t \leqslant s$, if $\lambda(e,t) > r(\sigma,s)$, $e \not\in S_s$ and we have $\Lambda^B(e,t)[s] = 1$, then enumerate $\lambda(e,t)$ into $B$ and define $\Lambda^B(e,t) = 0$ with use $-1$. Suppose that $\lambda(e,t) = d_{i}$. 
Choose a string $\rho$ of length $l_{i+1}$ extending $A_s \upharpoonright l_i$ such that 
$\rho \not\prec A_t$ for all $t < s$, and 
the minimum of all $d(\rho,\mu)$, where the minimum is taken over forbidden strings $\mu$ of length $l_{i+1}$ extending 
$A_s \upharpoonright l_i$, is as large as possible.
Set $A_{s+1} = \rho \ \hat{\empty} \ 1^{\omega}$. If a marker was enumerated, let $\sigma \ \hat{\empty} \ \infty$ be accessible at stage $s$. Otherwise let $\sigma \ \hat{\empty} \ f$ be accessible at stage $s$.

Finally suppose that $\sigma$ works for $\mathcal{N}_e$. If $\Phi_e^B(e)[s]\downarrow$ via a $\sigma$-correct computation, then let $\sigma \ \hat{\empty} \ \varphi_e^B(e)[s]$ be accessible at stage $s$. Otherwise, let $\sigma \ \hat{\empty} \ 0$ be accessible at stage $s$.

\subsubsection{Verification of the construction}
By the construction, the set $B$ is c.e. We verify that
$A \equiv_T B$, that $A$ is ivr, and that $A'\equiv_T S$. First, we establish the existence of 
a `true path'.
\begin{equation}\label{eq:leftmverpathjinv}
\parbox{10cm}{The leftmost path which is visited infinitely often exists.}
\end{equation}

As there are only finitely many outcomes of a $\mathcal{P}$-node, we need to verify that the restraint imposed by an $\mathcal{N}$- or $\mathcal{R}$-node comes to a limit. Let $\sigma$ work for $\mathcal{R}_e$, and suppose by induction that no node to the left of $\sigma$ is visited after stage $s_0$, and that $\lim_s r(\sigma,s)$ exists. We must have $r(\sigma,s) = r(\sigma,s_0)$ and $r_{e,s} = r_{e,s_0}$ for all $s \geqslant s_0$. Let $k$ be such that $d_k \geqslant r(\sigma,s_0)$. Whenever we act for $\mathcal{R}_e$ at level $l_i$ for $i \geqslant k+1$, $m_e$'s capital decreases by at least \$1 and we increase restrains for all $\mathcal{R}$-requirements of weaker priority. The only elements which may be enumerated below the restraints $\mathcal{R}_e$ places on $B$ are coding markers belonging to $\mathcal{P}$-requirements stronger than $\mathcal{R}_e$. However, if $\mathcal{R}_e$ is below the $\infty$ outcome of $\mathcal{P}_e$, then $\mathcal{R}_e$ waits until all unrestrained markers below $d_i$ enter $B$ before acting at level $l_{i+1}$. Therefore the only markers which enter below $\mathcal{R}_e$'s restraint belong to those $\mathcal{P}$-requirements with $\sigma$ below the $f$ outcome. By induction, we do not visit any node to the left of $\sigma$ after stage $s_0$, and so no such strategy may act after stage $s_0$ and enumerate a coding marker below $\mathcal{R}_e$'s restraint. Therefore we act for $\mathcal{R}_e$ only finitely many times after stage $s_0$.
Similarly, as we require the computations $\mathcal{N}$-nodes observe to be $\sigma$-correct, if $\sigma$ works for $\mathcal{N}_e$ and is on the true path, it will increase its restraints only finitely many times.
This concludes the proof of \eqref{eq:leftmverpathjinv}.

Let the true path, TP, be the leftmost path visited infinitely often. 
The proof of \eqref{eq:leftmverpathjinv} 
shows that we act only finitely often for any $\mathcal{R}$-requirement. 
Therefore $m_e(A) < \infty$ and $\mathcal{R}_e$ is satisfied for all $e \in \omega$.
Therefore
\begin{equation}\label{eq:Aisivrathjinv}
\parbox{10cm}{the set $A$ is \ivr.}
\end{equation}
The use of the systems of levels $l_i, d_i$
in the construction define Turing reductions
$A\leq_T B$ and $B\leq_T A$ 
respectively, with computable use.
So that by an induction on the stages of the construction we have
\begin{equation}\label{eq:ABsameturngdegr}
\parbox{10cm}{the sets $A$ and $B$ are in the same weak truth table degree.}
\end{equation}
It remains to show that $S \equiv_T B'$.
First, we show that $S \leqslant_T B'$.
Let $\sigma \prec$ TP be devoted to $\mathcal{P}_e$, and suppose that no node to the left of $\sigma$ is visited after stage $s_0$. As the restraints set by $\mathcal{R}$- and $\mathcal{N}$-requirements are finite, $r(\sigma,s_0)$ is finite and $r(\sigma,s) = r(\sigma,s_0)$ for all $s \geqslant s_0$.  Therefore $\sigma$ may enumerate all but finitely many markers if it wishes. Therefore for all $e \in \omega$, 
$\lim_t \Lambda^B(e,t) = S(e)$.
It remains to show that
$B' \leqslant_T S$.
We have 
\begin{equation}
e \in B' \Leftrightarrow \Phi^B_e(e)\downarrow \Leftrightarrow (\exists t)(\Phi^B_e(e)[t]\downarrow \wedge B \upharpoonright \varphi_e^B(e)[t] = B_t \upharpoonright \varphi_e^B(e)[t]).
\end{equation}
First use $S$ to compute $S \upharpoonright e$. For $i < e$, if $S(i) = 1$ then we will want to eventually stop enumerating markers $\lambda(i,s)$. If $S(i) = 0$, then we will want to enumerate all unrestrained markers. Suppose we see a computation $\Phi^B_e(e)[t] \downarrow$. We find the markers $\lambda(i,s)$ below $\varphi^B_e(e)[t]$ for $i < e$. As we know the fate of every marker below the use, we can computably determine whether this computation is $B$-correct, that is, whether $B$ will change below the use after stage $t$. Therefore equation (1) is $\Sigma^0_1$, and can be decided by $\emptyset'$. As $\emptyset' \leqslant_T S$, we have $B' \leqslant_T S$. 

\subsection{Degrees of left-c.e.\ partial integer-valued randoms}\label{subse:delceintv2}
Here we show that every left-c.e.\ real that is integer-valued random is Turing (and in fact, weak truth table) 
equivalent to a partial integer-valued random  left-c.e.\ real. Hence along with the argument of the previous section, it
proves  Theorem \ref{cedegcontivrlceres}. 
In order to make the argument more concise, we will often refer to the argument of Section  \ref{subse:lowcecontivr}, which uses
a similar machinery.
Given a  left-c.e. \ivr\ set $A$ we will construct a partial \ivr\ set $B$ such that $A \equiv_{\textrm{wtt}} B$.
Suppose we are given $A$ with a computable approximation $\langle A_s \rangle$. Let $\langle \varphi_e \rangle$ be an effective enumeration of all partial computable integer-valued martingales. We build a set $B$ and weak truth-table reductions $\Gamma$ and $\Delta$ such that $\Gamma^A = B$ and $\Delta^B = A$ to meet the requirements 

\begin{description}
	\item[$\mathcal{R}_e$] $\varphi_e$ does not succeed on $B$.
\end{description}
\noindent
We also build for each $e \in \omega$ a partial integer-valued martingale $m_e$. In the case that $\varphi_e$ is defined along $B$, $m_e$ will be total.

\subsubsection{Strategy for the single requirement $\mathcal{R}_0$.} 
Let $\gamma(n)[s]$ be the use of computing $\Gamma^A(n)[s]$ and $\delta(n)[s]$ the use of computing $\Delta^B(n)[s]$. We begin by setting $\gamma(n)[0] = n$ and $\delta(n)[0] = n$ for all $n$. We observe the values of $\varphi_0$ along $B$. First we wait to see $\varphi_0(\lambda)$. If we later see $\varphi_0$ increase its capital along $B$, then we will wish to change $B$ to force $\varphi_0$ to lose capital. We will need permission from $A$ to do so. We put pressure on $A$ to change by defining the martingale $m_0$. If $\varphi_0$ increases its capital on $B_s$ by betting on $B_s \upharpoonright n$, then we define $m_0$ to start with capital $\varphi_0(\lambda)$, place the same bets as $\varphi_0[s]$ along $B_s \upharpoonright n$, and bet neutrally on all other strings up to length $n$. We repeat this every time we see $\varphi_0$ increase its capital along $B$ until we see a change in $A$. As $A$ is \ivr, $m_0(A) < \infty$ and so $A$ must eventually move. This gives us a permission to change $B$. 

Suppose that at stage $s$ we have defined $m_0$ up to length $d$, and $A$ changes below $d$. Let $m$ be least such that $A_{s-1}(m) \ne A_s(m)$. We have defined $\delta(m)[s-1] = m$ and so we must change $B$ on its $m$th bit. We let $B_s = B_{s-1} \upharpoonright m \ \hat{\empty} \ (1 - B_{s-1}(m)) \ \hat{\empty} \ 0^{\omega}$. The partial martingale $\varphi_0$ might not be defined on any string extending $B_s \upharpoonright m$, whereas $m_0$ has been defined to be neutral on all initial segments of $B_s$ of length between $m$ and $d$. Later $\varphi_0$ might increase its capital along these strings, and we would not be able to define $m_0$ to directly copy its bets. We can however raise the use $\gamma(m)[s]$ to be large. At stage $s$ we let $\gamma(m)[s] = 2d$. If $\varphi_0$ then bets along initial segments of $B$ of length between $m$ and $d$, we copy the wagers that $\varphi_0$ makes on these strings by placing the same wagers along the initial segments of $A$ of length between $d$ and $2d$. We are then still putting pressure on $A$ to change. If $A$ changes below $2d$ we can then change $B$ below $d$ and force $\varphi_0$ to lose money.

\subsubsection{Multiple requirements}
The interaction between multiple requirements will cause difficulty in coding. We use levels $\langle l_i \rangle$ and $\langle d_i \rangle$ in order to facilitate the coding. We set $\gamma(l_i) = d_i$ and $\delta(d_i) = l_{i+1}$ and let $l_1 = 5$ (the choice of $l_1$ is not significant). We attempt to change $B$ above a string of length $l$ for $l \in (l_i,l_{i+1}]$ only for the sake of the requirements $\mathcal{R}_0, \ldots, \mathcal{R}_i$. 

We will attempt to change $B$ below $l_1$ only for the sake of decreasing $\varphi_0$'s capital along $B \upharpoonright l_1$. We would like $\gamma(l_1)$ to be large enough so that we can copy all the wagers that $\varphi_0$ may place along strings of length $l_1$. There are $2^{l_1}$ many such strings, and so if we set $d_1 = \gamma(l_1) = 2^{l_1}.l_1$, this will certainly be large enough. As $A$ is left-c.e., $A$ can change below $d_1$ at most $2^{d_1}$ many times. Therefore there are at most $2^{d_1}$ many forbidden strings. To calculate $l_2$, we begin by setting $l_{2,0} = l_1 + (\mu x)(2^x \geqslant 2^{d_1}) = l_1 + d_1$. We act at level $l_2$ for the sake of $\mathcal{R}_0$ and $\mathcal{R}_1$. The action for these requirements can again move us closer to forbidden strings. The distance we can be moved, is given in terms of the function $d$ which is introduced in the
argument of Section \ref{subse:lowcecontivr}.
Therefore we define $l_{2,1}, l_{2,2}$ and $l_2$ as before.

Suppose $A$ changes below $d_1$ at stage $s$. Then we are free to change $B$ below $l_1$. We choose a string $\tau$ of length $l_1$ which minimises $m_0$; if there is no reason to move, we do not move. In either case, we then choose some string $\rho$ of length $l_2$ extending the current version of $B \upharpoonright l_1$ which has not been visited before, and let $B_{s+1} = \rho1^{\omega}$. 

We change $B$ below $l_2$ for the sake of requirements $\mathcal{R}_0$ and $\mathcal{R}_1$. We define the total martingale $m_0$ to copy the wagers that $\varphi_0$ places on $B \upharpoonright l_2$, and we define the total martingale $m_1$ to copy the wagers that $\varphi_1$ places on $B \upharpoonright l_2$. We require $d_2$ to be large enough so that $m_0$ can copy all the wagers that $\varphi_0$ may place along strings of length $l_2$ beneath $d_2$, and $m_1$ can copy all the wagers that $\varphi_1$ may place along strings of length $l_2$ beneath $d_2$. Therefore, by the same reasoning as the calculation of $d_1$, we would like $d_2$ to be at least $2^{l_2}.l_2$. We then define $l_3$ similarly, with $l_{3,0} = l_2 + d_2$, and $l_{3,1}, l_{3,2}, l_{3,3}$ and $l_3$ as 
in the argument of Section \ref{subse:lowcecontivr}.

Now suppose that $A$ changes between $d_1$ and $d_2$ at stage $s'$. That is, $A_{s'} \upharpoonright d_1 = A_{s'-1} \upharpoonright d_1$, but there is $m \in [d_1,d_2)$ with $A_{s'}(m) \ne A_{s'-1}(m)$. We cannot change $B$ below $l_1$, but we can change $B$ above $l_1$. We therefore choose a string $\tau$ of length $l_2$ which minimises the martingales $\varphi_0$ and $\varphi_1$. As $\mathcal{R}_0$ has stronger priority than $\mathcal{R}_1$, we first look to minimise $\varphi_0$. If we can, we change $B$ to minimise $\varphi_0$, and if we cannot, we then look to minimise $\varphi_1$. If we can, we change $B$ to minimise $\varphi_1$, and if we cannot, we do not change $B$. In any case, we then choose some string $\rho$ of length $l_3$ extending the current version of $B \upharpoonright l_2$ which has not been visited before, and let $B_{s'+1} = \rho1^{\omega}$. 

\subsubsection{Construction}
For every $e \in \omega$ and at every stage $s$ we have a restraint $r_{e,s}$. During the construction we will say that \textit{``$m_e$ has copied $\varphi_e$'s wager on $\sigma$''} for some $e$ and string $\sigma$. Let $d_{e,s}$ denote the length of longest string for which we have defined $m_e$ by the beginning of stage $s$. 
Set $l_0 = 0$, $d_0 = 0$ and $l_1 = 5$. Given $l_i$, we set $d_i = 2^{l_i}.l_i$, and then given $d_i$, we set $l_{i+1, 0} = l_i + d_i$ and for $0 \leqslant j \leqslant i$,
\[ l_{i+1,j+1} = l_{i+1,j} + \max_{k \leqslant i} \ n(2^{l_{i+1,j} + 1}m_k(\lambda)), \]
and let $l_{i+1} = l_{i+1,i+1} + 1$. Set $\gamma(l_i) = d_i$ and $\delta(d_i) = l_{i+1}$ for all $i$. At stage 0, we set $B_1 = 1^{\omega}$, $m_e(\lambda) = \varphi_e(\lambda)$ for all $e$, and $r_{e,1} = l_e$ for all $e$. At stage $s>0$ do the following: 

\textit{Case 1}: there is $l \geqslant r_{e,s}$ such that $\varphi_e(B_s \upharpoonright (l+1)) > \varphi_e(B_s \upharpoonright l)$, $m_e$ has copied $\varphi_e$'s wagers on $B_s \upharpoonright 1, \ldots, B_s \upharpoonright l$, and $m_e$ has not already copied $\varphi_e$'s wager on $B_s \upharpoonright (l+1)$. Let $e$ be the least applicable, and $l$ the least applicable for this $e$. Define $m_e$ to wager $\varphi_e(B_s \upharpoonright (l+1)) - \varphi_e(B_s \upharpoonright l)$ on $A_s \upharpoonright (d_{e,s} + 1)$ and wager 0 on all other strings of length $d_{e,s} + 1$. We say that $m_e$ has copied $\varphi_e$'s wager on $B_s \upharpoonright (l+1)$. Let $i$ be such that $l \in (l_i,l_{i+1}]$. For all $e' > e$ with $r_{e',s} \leqslant l_{i+1}$, let $r_{e',s+1} = l_{i+2}$. Proceed to the next stage.

\textit{Case 2}: $A_s(m) \ne A_{s+1}(m)$. Let $m$ be the least applicable, and let $i$ be such that $m \in (d_i,d_{i+1}]$. 
Choose the least $e$ with $r_{e,s} < l_{i+1}$ such that there is $\tau$ of length $l_{i+1}$ extending $B_s \upharpoonright l_i$ with $\max_{j \leqslant l_{i+1}} \varphi_{e,s+1}(\tau \upharpoonright j) < \max_{j \leqslant l_{i+1}}\varphi_{e,s+1}(B_s \upharpoonright j)$. For this $e$, choose an applicable string $\tau$ with $\min_\mu d(\tau,\mu)$ as large as possible, where the minimum is taken over all forbidden strings of length $l_{i+1}$ extending $B_s \upharpoonright l_i$. If there is no such $e$, then let $\tau = B_s \upharpoonright l_{i+1}$.
Choose a string $\rho$ of length $l_{i+2}$ extending $\tau$ such that $\rho \not\prec B_t$ for all $t \leqslant s$. Let $B_{s+1} = \rho1^{\omega}$. Proceed to the next stage.

If neither case applies, proceed to the next stage.

\subsubsection{Verification of the construction}
First, we observe that
$A \equiv_{wtt} B$. Indeed, the uses $\gamma(l_i)$ are clearly computable, 
and so we have $\Gamma^A = B$ via the weak truth-table functional $\Gamma$.
The consistency of $\Delta$ is a consequence of $B$ never extending a forbidden string. 
Again the uses $\delta(d_i)$ are computable and so $\Delta^B = A$ via the weak truth-table functional $\Delta$. 

It remains to show that
for all $e \in \omega$, $\mathcal{R}_e$ is satisfied. 
Suppose by induction that all requirements of stronger priority than $\mathcal{R}_e$ do not act after stage $s^*$. We show that if $\varphi_e$ succeeds on $B$, then $m_e$ succeeds on $A$, which is a contradiction to $A$ being \ivr. 

Let $i_0$ be least such that $l_{i_0} \geqslant r_{e,s^*}$. 
By the restraints imposed, we cannot change $B$ below $l_{i_0}$ for the sake of $\mathcal{R}_e$. 
Now suppose that at stage $s \geqslant s^*$ we see $\varphi_{e,s}(B_s \upharpoonright l) > \varphi_{e,s}(B_s \upharpoonright l_{i_0})$. Let $l$ be the least applicable, and suppose $i$ is such that $l \in (l_i,l_{i+1}]$. 
From stage $s$ we have $m_e$ copy $\varphi_e$'s wagers, and at some stage $s' \geqslant s$ we define $m_e$ to wager $\varphi_e(B_{s'} \upharpoonright l) - \varphi_e(B_{s'} \upharpoonright (l-1))$ on $A_{s'} \upharpoonright d_{e,s'} + 1$. 
If $m_e$ copies all of the wagers that $\varphi_e$ makes on strings of length less than or equal to $l-1$, then $m_e$ is defined on strings of length at most $2^{l-1}\cdot (l-1)$. Therefore $d_{e,s'} \leqslant 2^{l-1}\cdot (l-1) < d_{i+1}$. 

Suppose that at stage $t > s'$ we see $A$ change below $d_{i+1}$. Then we will choose a string $\tau$ of length $l_{i+1}$ extending $B_t \upharpoonright l_i$ with $\max_{j \leqslant l_{i+1}} \varphi_{e,t}(\tau \upharpoonright j) < \varphi_{e,t}(B_s \upharpoonright l)$. 
Taking the contrapositive, we see that if $\varphi_e$ makes capital on $B$ past $l_{i_0}$, 
then $m_e$ makes capital on $A$. Therefore if $\varphi_e$ succeeds on $B$ then $m_e$ succeeds on $A$.

\section{Computably enumerable degrees not containing \ivrs}
It is hardly surprising that there are c.e.\ degrees which do not contain \ivrs. After all, computable
enumerability hinders randomness, and indeed  with respect to a sufficient level
of randomness, c.e.\ sets are not random. 
However integer-valued randomness is sufficiently weak so that it has interesting interactions with
computable enumerability.
In this section we look at the question of which c.e.\ degrees
do not contain \ivrs.
The first example of such degrees was given in Section
\ref{subse:ancdegncomivr}
where we showed
Theorem  \ref{ancccc}. Perhaps more surprising is the fact that there are c.e.\ degrees
which are not array computable (so, by Corollary \ref{coro:anccpivr} they compute an \ivr)
yet they do not contain \ivrs. We prove this in Section \ref{subse:anrncontivr}, and extend it to
a much stronger result (namely Theorem \ref{th:high2cenoivr}) 
in Section \ref{subse:high2nocontivrs}.

\subsection{C.e.\ array noncomputable degrees not containing \ivrs}\label{subse:anrncontivr}
We wish to construct an array noncomputable c.e.\ degree not containing an \ivr\ set.
Let $(\Gamma_e, \Delta_e)_{e \in \omega}$ be an effective listing of all pairs of Turing functionals, and let $\langle D_n \rangle$ be the very strong array with $D_0 = \{ 0 \}, D_1 = \{ 1,2 \}, D_2 = \{ 3, 4, 5 \}, \ldots$. We build a c.e.\ set $B$ to satisfy the requirements

\begin{description}
	\item[$R_e$] $(\exists n)( W_e \cap D_n = B \cap D_n $)
	\item[$N_e$]  $\Delta_e^B = A_e   \wedge \Gamma_e^{A_e} = B \implies A_e$ is not \ivr.
\end{description}
We build for each $e \in \omega$ an integer-valued martingale $m_e$, and replace the requirement $N_e$ with the following requirements $N_{e,k}$ for all $k \geqslant 2$:

\begin{description}
	\item[$N_{e,k}$]  $\Delta_e^B = A_e   \wedge \Gamma_e^{A_e} = B \implies m_e$ wins at least $k$ dollars on $A_e$.
\end{description}
We effectively order the requirements, making sure that if $k < k'$, then $N_{e,k}$ has stronger priority than $N_{e,k'}$.
We say that $R_e$ requires attention at stage $s$ if 

\begin{enumerate}
\item
$R_e$ has no follower at stage $s$, or
\item
$R_e$ has follower $i$ at stage $s$ and $W_{e,s} \cap D_i \ne B_s \cap D_i$.
\end{enumerate}
For any $e \in \omega$, let $d_{e,s}$ be the length of the longest string $\sigma$ for which $m_e(\sigma)$ is defined by stage $s$. 
We have for every $e \in \omega$ a restraint $r_e$. 
Let $r_{e,k,s} = \max D_i$ where $i$ is the follower at stage $s$ of any $R$-requirement of stronger priority than $N_{e,k}$.
Let $l(e,s)$ be the length of agreement between $\Gamma_e^{\Delta_e^B}$ and $B$ at stage $s$, 
\[ 
l(e,s) = \max \{ x\ |\  (\forall y < x)(\Gamma_e^{\Delta_e^B}(y)[s] = B(y)[s] \}. 
\]
We say that $N_{e,k}$ requires attention at stage $s$ if 

\begin{enumerate}
\item
$m_e(A_{e,s} \upharpoonright d_{e,s}) = k-1$,
\item

$l(e,s) > r_{e,k,s}$,

\item

$l(e,s) > \delta_e(\gamma_e(r_{e,k,s}))[s]$

\item

$\gamma_e(l(e,s))[s] > d_{e,s}$

\end{enumerate}
Figure \ref{fig:high2neviz} illustrates the reductions involved in requirement $N_e$.

\begin{figure}
\begin{tikzpicture}
\draw[->,thick]  (0,0) -- (10,0);
\draw[->,thick] (0,2) -- (10,2);

\draw[<-,thick]  (2,0) -- (3,2); 	
\draw[<-,thick]  (3,2) -- (4,0); 
\draw[<-,thick]  (6,0) -- (7,2);
\draw[<-,thick]  (7,2) -- (8,0);

\filldraw [ball color=black] (2,0) circle (0.6pt);
\filldraw [ball color=black] (6,0) circle (0.6pt);
\node at (2,1) {\footnotesize $\Gamma_e$};
\node at (4,1) {\footnotesize $\Delta_e$};
\node at (6,1) {\footnotesize $\Gamma_e$};
\node at (8,1) {\footnotesize $\Delta_e$};
\node at (2,-0.5) {\footnotesize $r_{e,k,s}$};
\node at (3,2.5) {\footnotesize $\gamma_e(r_{e,k,s})[s]$};
\node at (4, -0.5) {\footnotesize $\delta_e(\gamma_e(r_{e,k,s}))[s]$};
\node at (6,-0.5) {\footnotesize $l(e,s)$};
\node at (7,2.5) {\footnotesize $\gamma_e(l(e,s))[s]$};
\node at (8,-0.5) {\footnotesize $\delta_e(\gamma_e(l(e,s)))[s]$};
\node at (10.5,0) {\footnotesize $B$};
\node at (10.5, 2) {\footnotesize $A_e$};
\end{tikzpicture} 
\caption{A visualization of the reductions in requirement $N_e$}
\label{fig:high2neviz}
\end{figure}
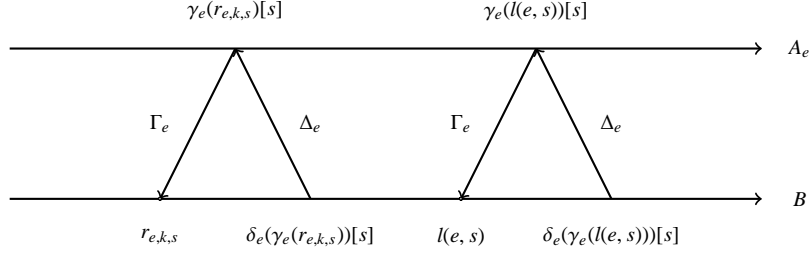

\ \paragraph{{\bf Construction}}
At stage 0, let $B_0 = \emptyset$. Let $m_{e}(\lambda) = 1$ for all $e \in \omega$.
At stage $s$, $s \geqslant 1$, find the requirement of strongest priority which requires attention at stage $s$. 

\textit{Case 1}: this is $R_e$. If $R_e$ has follower $i$, enumerate $W_{e,s} \cap D_i$ into $B$. If $R_e$ does not have a follower, appoint a fresh large follower for $R_e$.

\textit{Case 2}: this is $N_{e,k}$. Let $\tau = A_e[s] \upharpoonright \gamma_e(l(e,s))[s]$. Define $m_e$ to wager \$1 on $\tau$, and bet neutrally on all other strings with length in $(d_{e,s}, |\tau|]$.
Remove the followers of $R$-requirements of weaker priority than $N_{e,k}$.

\ \paragraph{{\bf Verification}}
It remains to show that
each requirement requires attention only finitely often, and is met.
Assume by induction that all requirements of stronger priority than $R_e$ do not require attention after stage $s$. If $R_e$ does not have a follower at stage $s$ then it will be appointed one. This follower cannot be cancelled as requirements of stronger priority can no longer act. Suppose that $R_e$ has follower $i$ at stage $s$. If we ever see that $B_t \cap D_i \ne W_{e,t} \cap D_i$ then we will enumerate $W_{e,t} \cap D_i$ into $B$. As $|D_i| = i+1$, $R_e$ can require attention at most $i+1$ many times after stage $s$. Then we will have $B \cap D_i = W_{e} \cap D_i$ and $R_e$ is satisfied.

We claim that for all $e$, if $\Delta_e^B = A_e$ and  $\Gamma_e^{A_e} = B$, then $m_e$ is nondecreasing along $A_e$. Suppose we have $m_e(\sigma) = k-1$ for some string $\sigma$. Suppose $N_{e,k}$ requires attention at stage $t$ and  we define $m_e(\tau) = k$. We remove the followers for $R$-requirements of weaker priority, and so only requirements of stronger priority than $N_{e,k}$ can enumerate elements into $B$ that are below $\delta_e(\gamma_e(l(e,t))[t]$. Suppose $R_j$ has stronger priority than $N_{e,k}$. Then $R_j$ can either enumerate elements below $r_{e,k,t}$, or if it is later injured, enumerate elements larger than $\delta_e(\gamma_e(l(e,t))[t]$ into $B$. Suppose that $R_j$ enumerates an element below $r_{e,k,t}$ into $B$.
If $\Gamma_e$ and $\Delta_e$ later recover at stage $t'$, $\tau' := A_e[t'] \upharpoonright \gamma_e(r_{e,k,t})[t]$ must be incomparable with $\tau = A_e[t] \upharpoonright \gamma_e(r_{e,k,t})[t]$; otherwise $\Delta_e$ will not have recorded the $B$-change and we could not have $\Delta_e^B = A_e$ and  $\Gamma_e^{A_e} = B$. In particular, $\tau'$ must not extend either $\tau$ or its sibling
(recall the definition of the {\em sibling} of a string, just after 
Definition \ref{de:intvalmar}). As $m_e$ is defined to bet neutrally on strings of length $|\tau|$ that are not either $\tau$ or its sibling, we will have $m_e(A_e[t'] \upharpoonright |\tau|) \geqslant k-1$. By induction, this holds for all $k$. This establishes the claim.

Now assume by induction that all requirements of stronger priority than $N_{e,k}$ do not require attention after some stage $s$. As $N_{e,j}$ for any $j < k$ does not require attention, we must have $m_e(A_{e,s} \upharpoonright d_{e,s}) = k-1$. If $N_{e,k}$ does not require attention at any stage $t > s$ then the hypothesis of the requirement does not hold. Therefore $N_{e,k}$ is satisfied vacuously. If $N_{e,k}$ does require attention at stage $t > s$ then we define $m_e(\tau) = k$ for $\tau = A_e[t] \upharpoonright \gamma_e(l(e,t))[t]$. $R$-requirements of stronger priority have finished acting, and so no numbers less than $r_{e,k,t}$ enter $B$ after stage $t$. We remove the followers for $R$-requirements of weaker priority. When they are appointed new followers they will choose fresh numbers, and so all enumerations into $B$ after stage $t$ will be larger than $\delta_e(\gamma_e(l(e,t))[t]$. As $B$ cannot change below $\delta_e(\gamma_e(l(e,t)))[t]$, $A_e = \Delta_e^B$ cannot change below $\gamma_e(l(e,t))[t]$. Therefore $\tau \prec A_e$ and $m_e(A_e \upharpoonright \gamma_e(l(e,t))[t] ) = k$. By the previous claim, $m_e(A_e) \geqslant k$, and $N_{e,k}$ is satisfied.

\subsection{A \texorpdfstring{high$_2$}{high-2} c.e.\ degree not containing \ivrs}\label{subse:high2nocontivrs}
In this section we prove Theorem \ref{th:high2cenoivr}.
Nies, Stephan and Terwijn showed that every high degree contains a computably random set, and so a fortiori, an \ivr\ set. It is instructive though to see why we cannot build a high degree that does not contain an \ivr\ set. This will give us some insight as to why the construction works when we only require our set to be high$_2$.
So that the degree of $A$ does not contain an \ivr\ set, we meet the requirements 

\begin{description}
	\item[$\mathcal{N}_e$]  $\Gamma_e^{\Delta_e^{A}} = A$ total $\implies \Delta_e^A$ is not \ivr
\end{description}

\noindent where $(\Gamma_e,\Delta_e)$ is an enumeration of pairs of Turing functionals. We break the requirement $\mathcal{N}_e$ into the following subrequirements $\mathcal{N}_{e,k}$. 

\begin{description}
	\item[$\mathcal{N}_{e,k}$]  $\Gamma_e^{\Delta_e^A} = A$ total $\implies m_{e}$ wins at least $k$ dollars on $\Delta_e^A$
\end{description}

\noindent where $m_e$ is an integer-valued martingale we build for the sake of $\mathcal{N}_e$. Suppose that the martingale $m_e$ has won \$$k-1$ on $\Delta_e^A \upharpoonright n$. The basic strategy to win another dollar is to first pick a large location marker $p$. Let $l(e)[s]$ be the length of agreement between $A$ and $\Gamma_e^{\Delta_e^A}$ at stage $s$. If we later see that $l(e)[s] > p$ and $l(e)[s] > \delta_e(\gamma_e(p))[s]$, then we define $m_e$ to wager \$1 on $\Delta_e^A[s] \upharpoonright \gamma_e(l(e))[s]$ and bet neutrally elsewhere, and freeze $A$ below $\delta_e(\gamma_e(l(e))[s]$. If we are successful in freezing $A$, then $m_e$ wins \$1 on $\Delta_e^A$. If $A$ changes below $p$ at stage $s' >s$, then if we are to have $\Gamma_e^{\Delta_e^A} = A$, $\Delta_e^A[s'] \upharpoonright \gamma_e(p)[s]$ is incomparable with $\Delta_e^A[s] \upharpoonright \gamma_e(p)[s]$ and $m_e$ does not lose any capital along $\Delta_e^A$. We can then try the basic strategy again.

\begin{figure}
\begin{tikzpicture}
\draw[->,thick]  (0,0) -- (10,0);
\draw[->,thick]  (0,2) -- (10,2);
\draw[<-,thick]  (2,0) -- (3,2);
\draw[<-,thick]   (3,2) -- (4,0);
\draw[<-,thick]  (6,0) -- (7,2);
\draw[<-,thick]  (7,2) -- (8,0);

\node at (2,1) {\footnotesize $\Gamma_e$};
\node at (4,1) {\footnotesize $\Delta_e$};
\node at (6,1) {\footnotesize $\Gamma_e$};
\node at (8,1) {\footnotesize $\Delta_e$};
\node at (2,-0.5) {\footnotesize $p$};
\node at (3,2.5) {\footnotesize $\gamma_e(p)[s]$};
\node at (4, -0.5) {\footnotesize $\delta_e(\gamma_e(p))	[s]$};
\node at (6,-0.5) {\footnotesize $l(e,s)$};
\node at (7,2.5) {\footnotesize $\gamma_e(l(e))[s]$};
\node at (8,-0.5) {\footnotesize $\delta_e(\gamma_e(l(e)))[s]$};
\node at (10.5,0) {\footnotesize $A$};
\node at (10.5, 2) {\footnotesize $\Delta_e^A$};
\end{tikzpicture}
\caption{Diagram with the uses and where we bet, for the proof of Theorem \ref{th:high2cenoivr}.}
\label{fig:diaghighforma}
\end{figure}
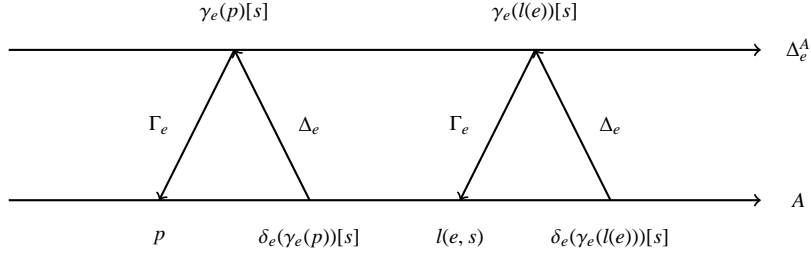

To make $A$ high we would define a functional $\Lambda$ such that $\lim_k \Lambda^A(x,k) =$ Tot$(x)$, where Tot is the canonical $\Pi^0_2$-complete set. The basic strategy for the highness requirement is to define $\Lambda^A(x,s) = 0$ for larger and larger $s$ with some big use $\lambda(x,s)$. When we see $\varphi_x(s') \downarrow$ for all $s' \leqslant s$, then for each $s' \leqslant s$ we enumerate the current use $\lambda(x,s')$ into $A$ (if currently $\Lambda^A(x,s') = 0$) and redefine $\Lambda^A(x,s') = 1$ with use $-1$, i.e. the axiom defining $\Lambda^A(x,s') = 0$ does not depend on $A$. This strategy will succeed as long as we are prevented from redefining $\Lambda^A(x,s)$ from 0 to 1 at most finitely often.

Let us see how these strategies might interact. Suppose at stage $s$ we saw the $l(e)$ computations converge and defined more of $m_e$. The highness requirement, if unrestrained, can destroy the $l(e)$ computations and cause $m_e$ to lose capital if it enumerates a marker between $p$ and $\delta_e(\gamma_e(l(e))[s]$. Therefore if $m_e$ is to ever win money along $\Delta_e^A$ we must impose restraint on $A$.
The problem is that the strategies for $\mathcal{N}_e$ may gang up and impose restraint on \emph{all} markers $\lambda(x,s)$; every time a marker is defined we may define $m_e$ and impose restraint, and never allow the marker to be enumerated. If Tot$(x) = 1$ we will never be able to correct $\Lambda^A(x,s)$ to be 1 and the limit will be incorrect.

Another approach we might take would be to capriciously enumerate the markers which occur below $\delta_e(\gamma_e(l(e))[s]$. If we do this and always have some marker below the use, we will be able to conclude that $\Gamma_e^{\Delta_e^A} \ne A$ (this argument is given in full in the verification below). However, the problem now is that we might not have wanted to enumerate the markers. If Tot$(x) = 0$ and we capriciously enumerate all markers $\lambda(x,s)$ and redefine $\Lambda^A(x,s)$ to be 1, the limit will be incorrect.

To make $A$ high$_2$, we need to instead define a functional $\Lambda$ such that 
\[
\lim_m \lim_t \Lambda^A(x,m,t) = \textrm{Cof}(x)
\] 
where Cof $ = \{ x\ |\  W_x \mbox{ is cofinite} \}$ is the canonical $\Sigma^0_3$-complete set. The double limit means that we may be wrong on a finite number of the $m$ while still satisfying the requirement. This will allow us to employ the capricious enumeration strategy successfully.
We have the requirements 

\begin{description}
	\item[$\mathcal{H}_x$]  $\lim_m \lim_t \Lambda^A(x,m,t) = $ Cof$(x)$
\end{description}

\noindent as well as the $\mathcal{N}_e$ from above. The construction will use a priority tree. For each global requirement $\mathcal{N}_e$ we have several nodes devoted to meeting $\mathcal{N}_e$, each equipped with a guess as to the outcomes of stronger priority requirements. Such a node will be called a {\em mother node} devoted to $\mathcal{N}_e$. Each such node $\tau$ builds its own martingale $m_\tau$. We argue in the verification that for every $e \in \omega$ there is some node $\tau$ such that $m_\tau$ succeeds on $\Delta_e^A$. For each subrequirement $\mathcal{N}_{e,k}$ we have several nodes devoted to meeting $\mathcal{N}_{e,k}$. Such a node will be called a {\em worker node} devoted to $\mathcal{N}_{e,k}$, and will occur below a mother node $\tau$ devoted to $\mathcal{N}_e$.
For the longest such $\tau$, we say that $\sigma$'s mother node is $\tau$. When we reach a node $\sigma$ devoted to $\mathcal{N}_{e,k}$, we 
choose a fresh location marker $p$ for $\sigma$, and
place a {\em link} from $\sigma$ back up to $\tau$. 
The link can be seen as testing the hypothesis of $\mathcal{N}_e$.
The length of agreement between $A$ and $\Gamma_e^{\Delta_e^A}$ will be measured at $\tau$. When we next arrive at $\tau$ and see the length of agreement computations converge, this further confirms the hypothesis of $\mathcal{N}_e$.
We travel the link to $\sigma$, define more of the martingale $m_\tau$, and then remove the link.

The requirement $\mathcal{H}_x$ will have nodes $\beta_{x,m}$ for $m \in \omega$. The node $\beta_{x,m}$ tests whether $[m, \infty) \subseteq W_x$. Note that this is a $\Pi_2$ test. Such a node will have outcomes $\infty$, which corresponds to the $\Pi_2$ test infinitely often looking correct, and $f$, corresponding to the finite outcome. The $\beta_{x,m}$ nodes will be responsible for defining $\Lambda^A(x,m,t)$ for each $t$. 
As we do not know the true path in advance, each path through the priority tree will have a $\beta_{x,m}$ node, which collectively will define $\Lambda^A(x,m,t)$ for all $t$. 

The basic strategy for $\beta_{x,m}$ is to define $\Lambda^A(x,m,t) = 0$ for larger and larger $t$ with some big use $\lambda(x,m,t)$. If we see $[m,s] \subseteq W_x$ then for each $s' \leqslant s$ we enumerate the current use $\lambda(x,m,s')$ into $A$ (if currently $\Lambda^A(x,m,s') = 0$) and redefine $\Lambda^A(x,m,s') = 1$ with use $-1$, i.e. the axiom defining $\Lambda^A(x,m,s') = 0$ does not depend on $A$. It is important to note that the markers $\lambda(x,m,s)$ are {\em shared} by all the $\beta_{x,m}$ nodes. Whether we succeed in enumerating the markers and updating $\Lambda$ as needed will depend on how the construction proceeds.



We will now describe how these requirements interact and what modifications we need to make to the priority tree as a consequence. First we consider the situation where we have $\mathcal{H}_x$ of lower priority than $\mathcal{N}_e$ (which is associated with the mother node $\tau$). The problem is the following. Suppose we have a situation with nodes $\tau \prec \beta_{x,m} \prec \sigma$ where $\sigma$ is devoted to $\mathcal{N}_{e,k}$ and $\tau$ is $\sigma$'s mother node. That is, while $\sigma$ has higher \emph{global} priority than $\beta_{x,m}$, its \emph{local} priority is lower. Suppose at some stage $\sigma$ picks a location marker $p(\sigma)$ and creates a link back to $\tau$ at stage $s_0$. At a later stage $s_1$ we get to $\tau$, see the necessary computations converge, and would like to travel the link and define the martingale. This causes no problem if $\sigma \succcurlyeq \beta_{x,m} \ \hat{\empty} \ f$, but there are problems if $\sigma \succcurlyeq \beta_{x,m} \ \hat{\empty} \ \infty$. We will require the computations $l(\tau)[s]$ to be $\tau$-correct; that is, all guesses $\tau$ makes about the enumeration of markers below $\delta_e(\gamma_e(l(e)))[s]$ have already occurred.


The trouble is that at stage $s_1$ there now might be some marker $\lambda = \lambda(x,m,q)$ which is greater than $p(\sigma)$, but below the use of $l(\tau)[s_1]$. We may not yet want to enumerate $\lambda$ into $A$ because the $\Sigma_2$ outcome may now be looking correct at $\beta_{x,m}$ (that is, we might think Cof$(x) = 0$). If we did define the martingale, since $\beta_{x,m}$ has higher priority than $\sigma$, any restraint imposed at $s_1$ may not be successful since $\beta_{x,m}$ might later put $\lambda$ into $A$. This could potentially cause our martingale $m_\tau$ to lose all its capital, and it could never bet again.


The solution to this problem is as follows. When we hit $\tau$, if there is some link to a node $\sigma$ and there is some $\lambda$ as above, we immediately enumerate any $\lambda$ below the use of $l(\tau)[s]$ into $A$, but we do \emph{not} define the martingale. This means that $\beta_{x,m}$ cannot later use $\lambda$ to make $m_\tau$ lose capital. If there is no such $\lambda$ then we do define the martingale, since we can be sure that $\sigma$ is satisfied provided that it is on the true path. This is the situation we would like, but failing that, we would like to get a global win on $\mathcal{N}_e$. In the case that such a $\lambda$ exists, we travel the link from $\tau$ to $\sigma$, enumerate all applicable markers, but we do \emph{not} delete the link. Because of this, we need to add a new outcome to $\sigma$. Therefore $\sigma$ will have outcomes $g$ and $d$. The outcome $g$ will be played when we perform the capricious enumeration of $\lambda$ as above. The outcome $d$ will be played when we define the martingale. 
Suppose we have some worker node $\sigma$ with location marker $p(\sigma)$ which always has some $\lambda$ below the use of $l(\tau)[s]$. We will then define $\Lambda^A(x,m,t)$ to have limit 1 for any pair $(x,m)$ such that $\tau \prec \beta_{x,m} \prec \sigma$; note that there are only finitely many pairs $(x,m)$. We will also have a link from $\tau$ to $\sigma$ for almost all stages. This corresponds, however, to a global win on $\mathcal{N}_e$, since $p(\sigma)$ is a witness to the fact that 
$\delta_e(\gamma_e(l(e))$ does not exist and so $\Gamma_e^{\Delta_e^A} \ne A$. 
The permanent link may cause us to skip over other mother nodes, which would mean we cannot meet their requirements. We therefore {\em restart} all $\mathcal{N}$-requirements of weaker priority than $\mathcal{N}_e$ under the $g$ outcome of $\sigma$. We do this by assigning the requirements $\mathcal{N}_{e'}$ for $e' >e$, as well as their subrequirements $\mathcal{N}_{e',k}$ for $k \in \omega$, to nodes below $\sigma \ \hat{\empty} \ g$ in some fair way.
We do not restart any $\beta$ nodes, since $\Lambda^A(x,m,t)$ will be defined to be 1 for the finitely many $x$ and $m$ with $\tau \prec \beta_{x,m} \prec \sigma$. This will mean that for a finite number of $m$, $\lim_t \Lambda^A(x,m,t)$ may be incorrectly outputting 1 instead of 0. This is fine though, since we will only lose on a finite number of the $m$ and still can satisfy $\mathcal{H}_x$. The sacrifice of losing on an $m$ will only be made when we can ensure a global win on a $\tau$ node of stronger priority.

We now come to the situation where we have an $\mathcal{H}_x$ of \emph{higher} global priority than the $\mathcal{N}_e$ associated with $\tau$. We now cannot allow $\tau$ to capriciously enumerate all the markers belonging to $\beta_{x,m}$ if $f$ is $\beta_{x,m}$'s true outcome. We now describe our solution to this problem.

First suppose that $\beta_{x,m}$ and $\beta_{x,n}$ are worker nodes devoted to $\mathcal{H}_x$ with $m < n$. If $\beta_{x,n}$ occurs below $\beta_{x,m} \ \hat{\empty} \ \infty$, then $\beta_{x,n}$ is guessing that $[m,\infty) \subseteq W_x$. Therefore $\beta_{x,n}$ must also be guessing that $[n,\infty) \subseteq W_x$ as $n > m$, and so $\beta_{x,n}$ will have only the $\infty$ outcome.

Suppose that $\tau$ is below the $f$ outcome of any $\beta_{x,m}$ node with $m< n$. We will restart $\tau$ below $\beta_{x,n} \ \hat{\empty} \ \infty$. Consider the situation $\beta_{x,n} \ \hat{\empty} \ \infty \preccurlyeq \tau \prec \beta_{x,n'} \prec \sigma$ with $n' > n$. If there is a link from $\tau$ to $\sigma$ then capriciously enumerating the markers $\lambda(x,n',t)$ into $A$ will not injure $\beta_{x,n'}$ since this is what $\beta_{x,n'}$ would like to do anyway. Therefore $\mathcal{H}_x$ cannot be injured in this situation. We show in Lemma \ref{finiteinjury} that such a $\tau$ can be restarted only finitely many times.

There is one last problem. Suppose we have $\tau \prec \beta_{x,m} \prec \sigma$ with a link $(\tau,\sigma)$. As $\sigma$'s mother is above $\beta_{x,m}$, we must have $\beta_{x,m} \ \hat{\empty} \ f \preccurlyeq \sigma$. If the link is permanent then $\beta_{x,m}$ will not be able to enumerate its markers. As $\mathcal{H}_x$ has higher global priority than $\mathcal{N}_e$, this is not a situation we want. We employ the following technique from Downey-Stob \cite{Downey.Stob:scout}. When we hit $\tau$, we realize that if there is a link from $\tau$ down then this may be a potentially permanent link. We first perform a {\em scouting report} to see where we would go if there were no link around. If we were to go to a node $\gamma$ to the left of $\sigma$ then we will erase the link and actually go to $\gamma$ instead. This ensures that if $\beta_{x,m} \ \hat{\empty} \ \infty$ is $\beta_{x,m}$'s true outcome, then it will be able to enumerate its markers.

\subsection*{The Priority Tree}
Our priority tree, PT, will have three types of nodes. The first type are \emph{mother} nodes $\tau$, which have outcomes $\infty$ and $f$, and will be assigned to some global requirement $\mathcal{N}_e$. We write $e(\tau) = e$. The next type are \emph{worker} nodes $\sigma$ which are devoted to a subrequirement of some $\mathcal{N}_e$, and hence will be assigned some $e,k$. We write $e(\sigma) = e, k(\sigma) = k$. We form the tree so that such $\sigma$ occur below some $\tau$ with $e(\tau) = e$. For the longest such $\tau$ with $e(\tau) = e$, we will write $\tau(\sigma) = \tau$. This is to indicate that $\tau$ is $\sigma$'s mother. $\sigma$ has outcomes $g$ and $d$ with $g < d$. Finally we have nodes $\beta$ which are devoted to some $\mathcal{H}_x$, and hence will be assigned some $x,m$. We write $x(\beta) = x, m(\beta) = m$, or simply $\beta_{x,m}$. $\beta$ will have outcomes $\infty$ and $f$, with $\infty < f$, unless $\beta$ occurs below $\beta' \ \hat{\empty} \ \infty$ for some $\beta'$ with $x(\beta') = x(\beta)$, in which case $\beta$ has only the single outcome $\infty$.

We now assign requirements and subrequirements to nodes on the tree. In a basic infinite injury argument we would have all nodes of the same level working for the same requirement. However in our case, as we must restart $\tau$ nodes, it is more complicated. We use {\em lists} of, for example, Soare \cite[Chapter XIV]{MR882921} for this. We will have three lists, $L_0, L_1$, and $L_2$, which keep track of indices for $\tau, \sigma$ and $\beta$ nodes, respectively.

\smallskip

$n = 0$. Let $\lambda$ be devoted to $\mathcal{N}_0$, and let $L_0(\lambda) = L_1(\lambda) = L_2(\lambda) = \omega$.

For $n > 0$, let $\gamma \in PT$ be of the form $\delta \ \hat{\empty} \ a$. Adopt the first case below to pertain, letting $L_i(\gamma) = L_i(\delta)$ unless otherwise mentioned.

\emph{Case 1}. $\delta$ is devoted to $\mathcal{N}_e$.

\emph{Case 1a}. $a = f$. $L_0(\gamma) = (L_0(\delta) - \{ e \}) \cup \{ e'\ |\  e' > e \}$
\[
L_1(\gamma) = (L_1(\delta) - \{ \langle e,k \rangle\ |\  k \in \omega \}) \cup \{ \langle e', k \rangle\ |\  e' > e, k \in \omega \}.
\]

\emph{Case 1b}. $a = \infty$. $L_0(\gamma) = L_0(\delta) - \{ e \}$. 

\emph{Case 2}. $\delta$ is devoted to $\mathcal{N}_{e,k}$. 

\emph{Case 2a}. $a = g$. Define the lists as in Case 1a.


\emph{Case 2b}. $a = d$. Let $L_1(\gamma) = L_1(\delta) - \{ \langle e,k \rangle \}$.

\emph{Case 3}. $\delta$ is devoted to $\mathcal{H}_x$ with $m(\delta) = m$. 

\emph{Case 3a}. $a = f$. $L_2(\gamma) = L_2(\delta) - \{ \langle x,m \rangle \}.$

\emph{Case 3b}. $a = \infty$. 
Let $L_0(\gamma)$ be the union of $ L_0(\delta)$ with 
\[ 
\{ e\ |\  (\exists \tau)(\forall \beta)(e(\tau) = e \wedge x(\beta) = x \wedge x < e \wedge \beta \prec \tau \prec \delta \implies \beta \ \hat{\empty} \ f \preccurlyeq \tau) \} 
\]
and let $L_1(\gamma)$ be the union of $L_0(\delta)$ with
\[ 
\{ \langle e,k \rangle\ |\  (\exists \tau)(\forall \beta)(e(\tau) = e \wedge x(\beta) = x \wedge x < e \wedge \beta \prec \tau \prec \delta \implies \beta \ \hat{\empty} \ f \preccurlyeq \tau), k \in \omega \}. 
\]
Also let $L_2(\gamma) = L_2(\delta) - \{ \langle x, m \rangle \}$.

Having defined the lists, we now assign requirements to nodes of the priority tree as follows. Let $\gamma \in$ PT and $i$ be the least element of $L_0(\gamma) \cup L_1(\gamma) \cup L_2(\gamma)$. If $i \in L_0(\gamma)$, let $\gamma$ be a mother node devoted to $\mathcal{N}_i$. If $i \in L_1(\gamma) - L_0(\gamma)$ and $i = \langle e,k \rangle$, let $\gamma$ be a worker node devoted to $\mathcal{N}_{e,k}$. Otherwise $i = \langle x,m \rangle \in L_2(\gamma)$ and we let $\gamma$ be worker node devoted to $\mathcal{H}_x$ with $m(\gamma) = m$.

\begin{lem}[Finite injury along any path lemma]\label{finiteinjury}
For every path $h \in [PT]$ and every $e,k \in \omega$,

\begin{enumerate}
\item
$(\exists^{< \infty} \alpha \prec h)(e(\alpha) = e \wedge h(|\alpha|) = g)$,
\item
$(\exists^{< \infty} \alpha \prec h)(\alpha  \mbox{ devoted to }  \mathcal{N}_e)$,
\item
$(\exists^{< \infty} \alpha \prec h)(\alpha  \mbox{ devoted to }  \mathcal{N}_{e,k})$.
\end{enumerate}

\end{lem}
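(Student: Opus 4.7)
The plan is to prove (1), (2), (3) simultaneously by induction on $e$, preceded by an auxiliary observation about the restart clause in Case 3b. The key auxiliary observation is that for each fixed $x \in \omega$, along the path $h$ at most one node of the form $\beta \ \hat{\empty} \ \infty$ with $x(\beta) = x$ can contribute any element to $L_0$ or $L_1$ via Case 3b: if $\beta_{x,m_0}$ is the first such $\beta$-node on $h$ with outcome $\infty$, then for any later $\beta_{x,m_1} \prec h$ with $m_1 > m_0$ (which necessarily has outcome $\infty$, its only available outcome), the restart clause demands the existence of a mother $\tau$ with $\beta \ \hat{\empty} \ f \preccurlyeq \tau$ for every $\beta$ with $x(\beta) = x$ lying above $\tau$. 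Since $\beta_{x,m_0}$ is such a $\beta$ and $h$ extends $\beta_{x,m_0} \ \hat{\empty} \ \infty$ rather than $\beta_{x,m_0} \ \hat{\empty} \ f$, no $\tau$ on $h$ can satisfy the clause and no element is added.

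Next I catalogue the ways an index may return to a list. An index $e$ re-enters $L_0$ only by Case 1a at a mother for some $\mathcal{N}_{e'}$ with outcome $f$ and $e' < e$, by Case 2a at a worker with $e(\delta) = e' < e$ and outcome $g$, or by Case 3b at a $\beta$-node with $x(\delta) < e$ and outcome $\infty$ satisfying the restart clause; the same catalogue describes re-entries of $\langle e, k \rangle$ into $L_1$. In the base case $e = 0$ none of these can occur, since each requires $e' < 0$ or $x < 0$. Consequently $0$ is removed from $L_0$ at the first mother devoted to $\mathcal{N}_0$ and never returns, proving (2) for $e = 0$; each $\langle 0,k \rangle$ is removed at its unique worker node and never returns, proving (3) for $e = 0$; and once a worker $\alpha$ with $e(\alpha) = 0$ takes outcome $g$, Case 2a strips every $\langle 0, k' \rangle$ from $L_1$ with no means of return, so at most one such $g$-outcome lies on $h$, proving (1) for $e = 0$.

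For the inductive step, assume (1), (2), (3) for all $e' < e$. The total number of re-entry events along $h$ that can restore $e$ to $L_0$ or any $\langle e, k \rangle$ to $L_1$ is at most
\[
\sum_{e' < e} \Bigl( \lvert \{\text{mothers for } \mathcal{N}_{e'} \text{ on } h\} \rvert + \lvert \{\alpha \prec h : e(\alpha) = e',\ h(|\alpha|) = g\} \rvert \Bigr) + e,
\]
which is finite: the inner cardinalities are finite by the inductive hypotheses (2) and (1) at $e'$, and the final $e$ bounds the Case 3b contributions via the auxiliary observation applied once for each $x < e$. Now (2) and (3) for $e$ are immediate since each node devoted to $\mathcal{N}_e$, respectively to $\mathcal{N}_{e,k}$, strips $e$, respectively $\langle e, k \rangle$, from the corresponding list; and for (1) at $e$, each worker $\alpha$ with $e(\alpha) = e$ and outcome $g$ strips every $\langle e, k' \rangle$ from $L_1$ via Case 2a, so any later worker with $e(\alpha') = e$ requires a fresh re-entry, bounding the $g$-outcomes by the finite count above. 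The main subtlety is the auxiliary observation: a naive count of $\beta_{x,m} \ \hat{\empty} \ \infty$ nodes on $h$ would be unbounded and would threaten infinitely many restarts of $\mathcal{N}_e$ for $e > x$, breaking the induction; the restart clause in Case 3b is precisely tuned so that only the very first $\infty$-outcome among $\beta$-nodes sharing a given $x$ contributes, and verifying this is the crucial step.
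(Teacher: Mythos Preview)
Your approach is essentially the same as the paper's, though considerably more detailed: the paper dismisses (1) and (3) as ``standard'' and gives only a one-sentence argument for (2), namely that a mother $\tau$ for $\mathcal{N}_e$ is restarted below $\beta\,\hat{}\,\infty$ only when $x(\beta)<e$ and $\tau$ has seen only $f$-outcomes of $\mathcal{H}_x$-nodes, so once restarted it cannot be restarted again by the same $x$. Your auxiliary observation is exactly this claim, and your simultaneous induction on $e$ with the catalogue of re-entry events is a reasonable way to make the ``standard'' parts explicit.

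One subtlety worth noting in your auxiliary observation: you argue that at a later $\beta_{x,m_1}$ no $\tau$ on $h$ can satisfy the restart clause because $\beta_{x,m_0}\prec\tau$ would force $\beta_{x,m_0}\,\hat{}\,f\preccurlyeq\tau$. But the existential in the clause as written also allows a witness $\tau$ with $\tau\prec\beta_{x,m_0}$, for which $\beta_{x,m_0}$ is not ``above $\tau$'' and so imposes no constraint. The intended reading (and what makes both the paper's and your argument go through) is that the witness is the \emph{most recent} mother for $\mathcal{N}_e$ above $\delta$; under that reading your observation is correct, since any mother created after $\beta_{x,m_0}\,\hat{}\,\infty$ fails the clause for this $x$. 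This is the same imprecision present in the paper's own terse proof, so you are in good company, but you may wish to make the ``last mother'' reading explicit.
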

\begin{proof}
(1) and (3) are standard. For (2), a node $\tau$ devoted to $\mathcal{N}_e$ is restarted below $\beta \ \hat{\empty} \ \infty$ if $x(\beta) < e$ and $\tau$ has been below only the $f$ outcomes of nodes devoted to $\mathcal{H}_x$. Once restarted, it can no longer be restarted below any other $\beta'$ with $x(\beta') = x(\beta)$. Thus $\tau$ is restarted finitely many times. 
\end{proof}

The construction below will proceed in substages. We will append a subscript $t$ to a parameter $G$, so that $G_t$ denotes the value of $G$ at substage $t$ of the construction. As usual all parameters hold their value unless they are initialized. When initialization occurs they become undefined, or are set to zero as the case may be. We will append a parameter $[s]$, when necessary, to denote stage $s$. We may write $(s, t)$ to denote substage $t$ of stage $s$.

If we visit a node $\nu$ at stage $(s, t)$ we will say that $(s, t)$ is a genuine $\nu$-stage. It might be that we do not visit $\nu$ at stage $(s, t)$, rather we visit some $\nu'$ extending $\nu$. In this case we say that $(s,t)$ is a $\nu$-stage, and hence a $\nu$-stage may not be genuine. In fact, should we put in place some permanent link $(\tau,\sigma)$ with $\tau \prec \nu \prec \sigma$, then $\nu$ might only ever be visited finitely often. However, this is when $\sigma \ \hat{\empty} \ g$ is the true outcome for some higher priority $\tau$, and we would claim that a new version of $\nu$ would live below outcome $g$ of $\sigma$. We will eventually define the genuine true path as those nodes that are on the leftmost path visited infinitely often, and for which there are infinitely many genuine stages.

We have for each node on the priority tree $\gamma$ and for all $x,m \in \omega$ a restraint $c_{\gamma}(x,m)$. These restraints will be initially set to zero in the construction, and will only be increased when $\gamma$ is a mother node devoted to some $\mathcal{N}_e$.
We say that a computation $\Xi^A(x)[s]$ is $\tau$-correct if 
$(\forall q \leqslant s)(\forall \beta)$ the condition
\[
(\beta \ \hat{\empty} \ \ \infty \preccurlyeq \tau \wedge x(\beta) = x \wedge m(\beta) = m \ \wedge 
\ \max_{\tau' \leqslant \beta} c_{\tau'}(x,m)[s] < \lambda(x,m,q) < \xi^A(x)[s])
\]
implies  $\lambda(x,m,q) \in A[s]$.
If $\tau$ is a mother node devoted to $\mathcal{N}_e$, then let 
\[ 
l_1(\tau)[s] = \max \{ x\ |\  (\forall y < x)(\Gamma_e^{\Delta_e^A}(y)[s] = A[s](y)) \ \mbox{via a $\tau$-correct computation}\}.
\] 
Let $m_1(\tau)[s] = \max \{ l_1(\tau)[s']\ |\  \mbox{ $s'<s$ is a genuine $\tau$-stage} \}$.
If $\tau$ is a node devoted to $\mathcal{H}_x$ with $m(\tau) = m$, then let $l_2(\tau)[s] = \max \{ y\ |\  [m,y] \subseteq W_x[s] \}$. Let 
\[
m_2(\tau)[s] = \max \{ l_2(\tau)[s']\ |\  \mbox{ $s'<s$ is a genuine $\tau$-stage} \}. 
\]
For $\tau$ a mother node devoted to some $\mathcal{N}_e$, let $d_{\tau}[s]$ denote the length of the longest string for which $m_{\tau}$ is defined by stage $s$.

\ \paragraph{{\bf Construction}}
At stage $0$ set $\Lambda^A(0,0,0) = 0$ with some large use $\lambda(0,0,0)$. Set $c_{\gamma}(x,m)[0] = 0$ for all $x, m \in \omega$ and all nodes $\gamma$ on the priority tree. Set $m_{\tau}(\lambda) = 1$ for all nodes $\tau$ devoted to some requirement $\mathcal{N}_e$.
Stage $s + 1$ will proceed in substages $t \leqslant s$. As usual, we will generate a set of accessible nodes, TP$[s + 1]_t$, and will automatically initialize nodes $\alpha$ to right of TP$[s + 1]_t$. A node is initialized by removing its location marker and removing any link to or from the node.

\emph{Substage 0}. Define TP$[s + 1]_0 = \lambda$, the empty string. Let $\Lambda^A(x,m,s+1) = 0$ with some large use $\lambda(x,m,s+1)$ for all $x,m \leqslant s+1$.

\emph{Substage $t + 1 \leqslant s+1$}. We will be given a string $\gamma =$ TP$[s + 1]_t$. Adopt the first case to pertain below.

\bigskip

\emph{Case 1}. $\gamma$ is a mother node devoted to $\mathcal{N}_e$.

\emph{Subcase 1a}. There is a link $(\gamma,\sigma)$ for some node $\sigma$. 
We perform the {\em scouting report} by computing the string $\gamma'$ that would be TP$[s+1]$ were there no link. If $\gamma' <_{L} \sigma$, remove the link $(\gamma,\sigma)$, let TP$[s+1]_{t+1} = \gamma' \upharpoonright (|\gamma| +1)$, and go to substage $t+2$. If $\gamma' \not<_{L} \sigma$, see whether $l_1(\gamma)[s+1] > p(\sigma), \delta_e(\gamma_e(p(\sigma)))[s+1]$ with $\gamma_e(p(\sigma))[s+1] > d_{\gamma}[s+1]$.


\emph{Subcase 1a.1}. No. Set TP$[s+1]_{t+1} = \tau \ \hat{\empty} \ f$ and go to substage $t+2$.

\emph{Subcase 1a.2}. Yes and for some node $\beta$ devoted to $\mathcal{H}_x$ with $m(\beta) = m$ and $\gamma \prec \beta \ \hat{\empty} \ \infty \preccurlyeq \sigma$, there is a marker $\lambda(x,m,q)$ with $p(\sigma)  \leqslant \lambda(x,m,q) \leqslant \delta_e(\gamma_e(l_1(\gamma)))[s+1]$. Our action is to set TP$[s+1]_{t+1} = \sigma$. We refer to this action as traveling the link. Go to substage $t+2$. 

\emph{Subcase 1a.3}. Otherwise, set TP$[s+1]_{t+1} = \sigma$ and go to substage $t+2$.

\emph{Subcase 1b}. There is no link from $\gamma$. Let TP$[s+1]_{t+1} = \gamma \ \hat{\empty} \ \infty$. 

\bigskip

\emph{Case 2}. $\gamma$ is a worker node devoted to $\mathcal{N}_{e,k}$.

\emph{Subcase 2a}. We were in subcase 1a.2 in the previous substage. Enumerate all markers 
as in the previous substage
into $A$. Let TP$[s+1]_{t+1} = \gamma \ \hat{\empty} \ g$.

\emph{Subcase 2b}. We were in subcase 1a.3 in the previous substage. Define $m_{\tau(\gamma)}$ to wager \$1 on $\Delta_e^A[s+1] \upharpoonright \gamma_e(l_1(\tau(\gamma)))[s+1]$ and bet neutrally on all other strings up to and including that length. For all $x,m$ such that there is $\beta \succ \gamma$ with $x(\beta) =x, m(\beta) = m$, let $c_{\tau(\gamma)}(x,m)[s+1] = \delta_e(\gamma_e(l_1(\tau(\gamma)))[s+1]$. Remove the link $(\tau(\gamma),\gamma)$. Let TP$[s+1]_{t+1} = \gamma \ \hat{\empty} \ d$. 

\emph{Subcase 2c}. We did not travel a link to arrive at $\gamma$. If $m_{\tau(\gamma)}(A_s \upharpoonright d_{\tau(\gamma)}[s+1]) \geqslant k$, let TP$[s+1]_{t+1} = \gamma \ \hat{\empty} \ d$, and go to substage $t+2$. If not, choose a fresh large follower $p(\gamma)$ for $\gamma$, place a link $(\tau(\gamma),\gamma)$, and go to stage $s+2$. 


\bigskip

\emph{Case 3}. $\gamma$ is a node devoted to $\mathcal{H}_x$ with $m(\gamma) = m$. Consider the immediate successors of $\gamma$ on the priority tree.

\emph{Case 3a}. The immediate successors of $\gamma$ are $\gamma \ \hat{\empty} \ \infty$ and $\gamma \ \hat{\empty} \ f$. See whether $l_2(\gamma)[s+1] > m_2(\gamma)[s+1]$.

\emph{Case 3a.1}.  Yes. For all $q \leqslant m_2(\gamma)[s+1]$, if $\lambda(x,m,q) > \max_{\tau \leqslant \gamma} c_{\tau}(x,m)[s+1]$ and $\Lambda^A(x,m,q) = 0$, enumerate $\lambda(x,m,q)$ into $A$ and define $\Lambda^A(x,m,q) = 1$ with use $-1$. Set TP$[s+1]_{t+1} = \gamma \ \hat{\empty} \ \infty$.

\emph{Case 3a.2}. No. Set TP$[s+1]_{t+1} = \gamma \ \hat{\empty} \ f$.

\emph{Case 3b}. The immediate successor of $\gamma$ is $\gamma \ \hat{\empty} \ \infty$. For all $q \leqslant m_2(\gamma)[s+1]$, if 
\[
\textrm{$\lambda(x,m,q) > \max_{\tau \leqslant \gamma} c_{\tau}(x,m)[s+1]$ and $\Lambda^A(x,m,q) = 0$}
\] 
enumerate $\lambda(x,m,q)$ into $A$ 
and define $\Lambda^A(x,m,q) = 1$ with use $-1$. Set TP$[s+1]_{t+1} = \beta \ \hat{\empty} \ \infty$.

\ \paragraph{{\bf Verification}}

We define TP, the {\em true path}, to be the leftmost path visited infinitely often. This clearly exists as the priority tree is finite-branching. We define GTP, the {\em genuine} true path, to be those $\alpha \prec$ TP such that there are infinitely many {\em genuine} $\alpha$-stages.

\begin{lem} 
For every $e \in \omega$ there is a node $\tau$ devoted to $\mathcal{N}_e$ on GTP.
\end{lem}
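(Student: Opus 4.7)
The plan is to proceed by strong induction on $e$. The induction hypothesis bundles the statement of this lemma for all $e' < e$ together with the companion facts (established by the subsequent lemmas) that for every $e' < e$ and every $k$ there is a worker node for $\mathcal{N}_{e',k}$ on GTP, and that for every $x < e$ and every $m$ there is a $\beta_{x,m}$ node handling $\mathcal{H}_x$ on GTP. These parallel statements are what will let us control the evolution of the lists $L_0$, $L_1$, $L_2$ along TP.

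The first step is to track the membership of $e$ in $L_0(\gamma)$ as $\gamma$ ranges down TP. The list rules let $e$ leave $L_0$ only when TP passes a node devoted to $\mathcal{N}_e$ or to some $\mathcal{N}_{e,k}$, and let $e$ re-enter only through an outcome $g$ at a worker node for some $\mathcal{N}_{e',k}$ with $e' \le e$ (Case~2a), or through the Case~3b restart clause beneath a node $\beta_{x,m} \, \hat{\empty} \, \infty$ with $x < e$. By Lemma~\ref{finiteinjury} together with the inductive hypothesis, all such events occur only finitely often along TP, so we may fix a prefix $\gamma_0 \prec $ TP past which they have ceased. A further finite descent from $\gamma_0$ then exhausts the indices below $e$ in $L_1 \cup L_2$, and at the first $\gamma_1 \succeq \gamma_0$ where $e = \min(L_0(\gamma_1) \cup L_1(\gamma_1) \cup L_2(\gamma_1))$, the construction assigns $\mathcal{N}_e$ to $\gamma_1$ as a mother node.

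The main obstacle is to verify $\gamma_1 \in $ GTP. If $\gamma_1$ has only finitely many genuine stages, then at cofinitely many $\gamma_1$-stages TP is bypassing $\gamma_1$ via some link $(\tau', \sigma')$ with $\tau' \prec \gamma_1 \prec \sigma'$. The scouting report in Subcase~1a dismantles any link whose presence would pull the accessible path strictly to the left of $\sigma'$, so a link that persists past almost every $\tau'$-stage must force TP through $\sigma' \, \hat{\empty} \, g$. Since $\tau' = \tau(\sigma')$ is a mother for $\mathcal{N}_{e(\sigma')}$ with $\tau' \prec \gamma_1$, the priority analysis of the first step forces $e(\sigma') < e$ (otherwise $\mathcal{N}_e$ would have taken precedence over $\mathcal{N}_{e(\tau')}$ higher up), and hence the Case~2a list update at $\sigma' \, \hat{\empty} \, g$ re-adds $e$ to $L_0$. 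Applying the first step again below $\sigma' \, \hat{\empty} \, g$ produces a new $\mathcal{N}_e$-mother $\gamma_2 \succ \sigma'$ on TP.

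Finally, iterating produces a sequence $\gamma_1, \gamma_2, \ldots$ of candidate $\mathcal{N}_e$-mothers on TP, each bypassed by a distinct permanent link, and each such link consumes an outcome $g$ at a worker node for some $\mathcal{N}_{e',k}$ with $e' < e$. By Lemma~\ref{finiteinjury}(1), only finitely many such $g$-outcomes occur along TP, so the iteration terminates at some $\gamma_n$ that is not skipped by any permanent link. This $\gamma_n$ then has infinitely many genuine stages and is the desired node on GTP devoted to $\mathcal{N}_e$, completing the induction.
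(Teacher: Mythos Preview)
Your argument is essentially correct, but it is considerably more elaborate than needed, and the extra apparatus introduces a structural weakness.

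The paper's proof is a one-paragraph contradiction argument: take the \emph{longest} node $\tau$ on TP devoted to $\mathcal{N}_e$ (this exists directly by part~(2) of the finite-injury-along-any-path lemma). If $\tau\notin$ GTP, some permanent link $(\tau',\sigma)$ straddles it, forcing $\sigma\,\hat{}\,g$ onto TP. Since $\tau'=\tau(\sigma)$ is a mother node strictly above $\tau$ and $\tau$ is itself a mother for $\mathcal{N}_e$, one gets $e(\tau')\ne e$; and since $\tau'$ is the mother of $\sigma$, the list rules at $\sigma\,\hat{}\,g$ (Case~2a, which copies Case~1a) re-insert $e$ into $L_0$, so another $\mathcal{N}_e$ mother appears below $\sigma\,\hat{}\,g$ on TP, contradicting maximality of $\tau$. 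No induction on $e$, no iteration, no appeal to later lemmas.

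Your version reaches the same contradiction but by building a sequence $\gamma_1,\gamma_2,\ldots$ and terminating it via Lemma~\ref{finiteinjury}(1). That works, but it is exactly the argument the paper compresses into ``take the longest''. More importantly, your inductive hypothesis bundles in facts about worker nodes for $\mathcal{N}_{e',k}$ and $\beta_{x,m}$ nodes lying on GTP, which you justify by pointing to ``subsequent lemmas''. That is circular unless you restructure all three lemmas as a single simultaneous induction, and in fact none of that is needed: the only ingredient the argument actually uses is the finite-injury lemma about TP (not GTP), which is already available. Drop the induction on $e$ and the references to later lemmas, take the longest $\mathcal{N}_e$ node on TP, and your second and third paragraphs already contain the contradiction.
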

\begin{proof}
Let $\tau$ be the longest node devoted to $\mathcal{N}_e$ on TP, which exists by the finite injury along any path lemma. We claim that $\tau$ is on GTP. Suppose otherwise. Then it must be the case that there are $\tau'$ and $\sigma$ on GTP such that $\tau' \prec \tau \prec \sigma$ and the link $(\tau', \sigma)$ is there at almost all stages. This implies that $\sigma \ \hat{\empty} \ g$ is on GTP. On the priority tree, if $\tau_1 \prec \tau_2$ then $e(\tau_1) < e(\tau_2)$. Now as $\sigma$ links to its mother node $\tau'$ and $\tau' \prec \tau$ we must have $e(\tau') < e(\tau) = e$. But then by the construction of the priority tree, there is an $\mathcal{N}_e$ node below $\sigma \ \hat{\empty} \ g$, contradicting the hypothesis that $\tau$ is the longest such.
\end{proof}

\begin{lem}
For every $e \in \omega$, $\mathcal{N}_e$ is satisfied.
\end{lem}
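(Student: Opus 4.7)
If the hypothesis of $\mathcal{N}_e$ fails, i.e.\ $\Gamma_e^{\Delta_e^A}\neq A$ or is not total, the requirement is vacuously met, so I assume $\Gamma_e^{\Delta_e^A}=A$ is total. By the previous lemma there is a mother node $\tau\prec\mathrm{GTP}$ devoted to $\mathcal{N}_e$. The plan is to show by induction on $k\geqslant 2$ that there is a worker node $\sigma_k\prec\mathrm{GTP}$ devoted to $\mathcal{N}_{e,k}$ with $\tau(\sigma_k)=\tau$ whose true outcome is $d$, and such that by the stage $\sigma_k \ \hat{\empty} \ d$ is permanently installed the capital of $m_\tau$ along the eventual $\Delta_e^A$ has reached at least \$$k$. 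A preliminary step is to verify that $l_1(\tau)[s]\to\infty$ along genuine $\tau$-stages: since $\Gamma_e^{\Delta_e^A}=A$, agreement at any length $n$ is eventually achieved, and the $\tau$-correctness condition is witnessed cofinally because each $\beta$ with $\beta \ \hat{\empty} \ \infty\preccurlyeq\tau$ is visited cofinally at its $\infty$ outcome on GTP and enumerates every relevant marker at each such visit, while markers $\lambda(x,m,t)$ defined later are chosen fresh and large, lying above any previously set restraint.

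For the inductive step, suppose $\sigma_{k-1} \ \hat{\empty} \ d\prec\mathrm{GTP}$ and that $m_\tau$ already holds \$$(k-1)$ on a stable prefix of $\Delta_e^A$. Take $\sigma_k\prec\mathrm{GTP}$ devoted to $\mathcal{N}_{e,k}$ with $\tau(\sigma_k)=\tau$. At its first access $\sigma_k$ enters Subcase 2c, picks a fresh follower $p(\sigma_k)$, and installs the link $(\tau,\sigma_k)$. By the preliminary step there is a later genuine $\tau$-stage at which $l_1(\tau)[s]>\delta_e(\gamma_e(p(\sigma_k)))[s]$ with $\gamma_e(p(\sigma_k))[s]>d_\tau[s]$, so at this stage Subcase 1a.2 or 1a.3 triggers. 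The key claim is that outcome $g$ cannot be the true outcome of $\sigma_k$: if Subcase 1a.2 fired cofinally, then at each such stage some marker $\lambda(x,m,q)$ with $p(\sigma_k)\leqslant\lambda(x,m,q)\leqslant\delta_e(\gamma_e(l_1(\tau)))[s]$ would be enumerated into $A$, forcing $A$ to change below $\delta_e(\gamma_e(p(\sigma_k)))$ infinitely often; then $\Delta_e^A$ cannot stabilize on the use of $\Gamma_e^{\Delta_e^A}(p(\sigma_k))$, contradicting $\Gamma_e^{\Delta_e^A}(p(\sigma_k))=A(p(\sigma_k))$ with fixed use in the limit. Hence at some genuine $\tau$-stage $s$ Subcase 1a.3 fires, and at the next substage $\sigma_k$ executes Subcase 2b, removing the link, taking outcome $d$, defining $m_\tau$ to wager \$1 on $\Delta_e^A[s]\upharpoonright\gamma_e(l_1(\tau))[s]$ (so $m_\tau$ reaches \$$k$ there), and setting $c_\tau(x,m)=\delta_e(\gamma_e(l_1(\tau)))[s]$ for every $(x,m)$ attached to some $\beta$ below $\sigma_k$.

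What remains, and what I expect to be the main obstacle, is to show that $A\upharpoonright\delta_e(\gamma_e(l_1(\tau)))[s]$ is preserved forever after stage $s$, so that \$$k$ on $\Delta_e^A[s]\upharpoonright\gamma_e(l_1(\tau))[s]$ translates to \$$k$ on the final $\Delta_e^A$, simultaneously validating the stability hypothesis used in the next inductive step. Several classes of nodes could in principle enumerate elements into $A$ below this threshold, and the tree organisation is precisely what rules them out: the $\tau$-correctness of $l_1(\tau)[s]$ ensures that every marker $\lambda(x,m,q)$ attached to a $\beta \ \hat{\empty} \ \infty$-ancestor of $\tau$ and lying below the use is already in $A[s]$, while future markers from the same $\beta$ are fresh and large; $\beta \ \hat{\empty} \ f$-ancestors of $\tau$ contribute only finitely many enumerations by the very meaning of the $f$ outcome; the restraint $c_\tau(x,m)$ blocks $\beta$-descendants of $\sigma_k$ from enumerating below the threshold, while still permitting them to enumerate their later (large) markers so that the relevant $\mathcal{H}$-requirements remain met in the limit; and nodes to the right of GTP are initialised by the visit of $\sigma_k$. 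Combined with the finite injury along any path lemma to control higher priority actions to the left of GTP, this yields the needed preservation and completes the induction, so $m_\tau$ succeeds on $\Delta_e^A$ and $\mathcal{N}_e$ is satisfied.
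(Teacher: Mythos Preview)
Your overall strategy matches the paper's, but the preservation step has a genuine gap. You enumerate four classes of nodes that might enumerate into $A$ below $\delta_e(\gamma_e(l_1(\tau)))[s]$: $\beta\,\hat{}\,\infty$-ancestors of $\tau$, $\beta\,\hat{}\,f$-ancestors of $\tau$, $\beta$-descendants of $\sigma_k$, and nodes to the right of GTP. You omit the most delicate class: $\beta$-nodes with $\tau\prec\beta$ and $\beta\,\hat{}\,f\preccurlyeq\sigma_k$. These lie strictly between $\tau$ and $\sigma_k$, so they are neither ancestors of $\tau$ nor descendants of $\sigma_k$, and they are on the path to $\sigma_k$ rather than to its right; the restraint $c_\tau(x,m)$ set in Subcase~2b does not cover them (it is set only for $\beta\succ\sigma_k$), and $\tau$-correctness says nothing about them either. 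Such a $\beta$ can later be visited at its $\infty$ outcome and enumerate a marker below the use, destroying the computation you have just protected. Hence your claim that $A\upharpoonright\delta_e(\gamma_e(l_1(\tau)))[s]$ is preserved \emph{forever} after the first time $d$ is played is simply false in general.

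The paper's argument for this case is the real content of the lemma. When $\sigma_k$ is first genuinely visited it passes through $\beta\,\hat{}\,f$, so at that stage some marker $\lambda(x(\beta),m(\beta),s_0)$ is still out of $A$ and is smaller than the freshly chosen $p(\sigma_k)$. If that marker is later enumerated, $A$ changes below $p(\sigma_k)$; since $\gamma_e(p(\sigma_k))>d_\tau$ at the time the bet was placed, the resulting $\Delta_e^A$ moves to a string on which $m_\tau$ was neutral, so the capital falls back to $k-1$ rather than dropping further. Simultaneously $\beta\,\hat{}\,\infty$ is visited, which lies to the left of $\sigma_k$ and so initialises $\sigma_k$. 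Because $\sigma_k\prec\mathrm{TP}$, this can happen only finitely often; after the last such initialisation your preservation argument does go through. So the right structure is not ``play $d$ once and preserve forever'' but ``each time $d$ is played the capital may be knocked back to $k-1$ by an intermediate $\beta$; this re-initialises $\sigma_k$; after finitely many rounds $\sigma_k$ is never again initialised and the \$$k$ sticks.'' Your inductive step needs both the non-loss-below-$k-1$ argument and the passage to the last initialisation of $\sigma_k$.
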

\begin{proof}
Let $\tau$ be the longest node on GTP devoted to $\mathcal{N}_e$. First suppose that $\tau \ \hat{\empty} \ f \prec$ GTP. Then $\Gamma_e^{\Delta_e^A} \ne A$ and $\mathcal{N}_e$ is vacuously satisfied. If there is a permanent link $(\tau,\sigma)$ for some node $\sigma$, then we claim that $\Gamma_e^{\Delta_e^A} \ne A$. For contradiction suppose $\Gamma_e^{\Delta_e^A} = A$, and suppose the link was placed at stage $s_0$. Then there is a stage $s > s_0$ and uses $\delta_e(\gamma_e(l_1(\tau))) = a_1, \gamma_e(l_1(\tau)) = a_2$, such that 
$\Gamma_e^{\Delta_e^{A \upharpoonright a_1} \upharpoonright a_2}[s] \preccurlyeq A \upharpoonright a_1$. However if this were the case, at the next genuine $\tau$-stage greater than $s$ we will see that these computations have converged, play the $d$ outcome, and remove the link. This contradicts the fact that the link is permanent. This establishes the claim.

Now suppose there is no such permanent link. We will show that $m_{\tau}$ succeeds on $\Delta_{e}^A$. 
We must ensure that $m_{\tau}$'s capital does not decrease along $\Delta_e^A$. Fix $k \in \omega$, and let $\sigma' \succ \tau$ be the first node devoted to $\mathcal{N}_{e,k}$ that is visited. Suppose we visit $\sigma'$ first at stage $s_0$. At stage $s_0$ we assign $\sigma'$ a fresh large location marker $p(\sigma')$ and link back to $\tau$. Let $s_1$ be the stage at which we first define $m_{\tau}$ to win \$$k$ on $\Delta_e^{A}[s_0]$. As we acted in case (2b) of the construction, we did not play the $g$ outcome at stage $s_1$ and so there were no markers belonging to any nodes $\beta$ such that $\tau \prec \beta \ \hat{\empty} \ \infty \preccurlyeq \sigma'$ below the use of our computations. The computations are $\tau$-correct at stage $s_1$ and restraint is imposed on requirements of weaker priority than $\sigma'$. Therefore the only markers which can be enumerated below the use are those belonging to nodes $\beta$ such that $\tau \prec \beta \ \hat{\empty} \ f \preccurlyeq \sigma'$. As $\beta \ \hat{\empty} \ f$ was visited at stage $s_0$, there is at least one marker, namely $\lambda(x,m,s_0)$ where $x = x(\beta)$ and $m = m(\beta)$, that has not been enumerated into $A$ by stage $s_0$. The location marker $p(\sigma')$ was chosen to be large at the substage when $\sigma'$ was visited, and so is larger than $\lambda(x,m,s_0)$. If at some later stage $t$ we enumerate $\lambda(x,m,s_0)$ into $A$, then $A$ changes below $p(\sigma')$. If we are to have $\Gamma_e^{\Delta_e^A} = A$, then $\Delta_e^A[t] \upharpoonright \gamma_e(p(\sigma'))[s_0]$ is incomparable with $\Delta_e^A[s_0] \upharpoonright \gamma_e(p(\sigma'))[s_0]$. Therefore $m_{\tau}(\Delta_e^A[t] \upharpoonright d_{\tau}[t]) = k-1$, and so $m_{\tau}$ has not decreased in capital along $\Delta_e^A$. 
If we later arrive at $\beta$, we will play $\beta \ \hat{\empty} \ \infty$ and initialize $\sigma'$ as it is to the right of $\beta \ \hat{\empty} \ \infty$. If we visit $\sigma'$ again, a new location marker will be chosen, which must be larger than at least one marker of any node $\beta$ such that $\tau \prec \beta \ \hat{\empty} \ f \preccurlyeq \sigma'$.

Let $s_0$ be the least genuine $\tau$-stage. As $\tau$ is genuinely visited at stage $s_0$ there can be no link $(\tau',\sigma')$ at stage $s_0$ with $\tau' \prec \tau \prec \sigma'$. Fix $k \in \omega$ and suppose for contradiction that $\sigma \prec$ TP devoted to $\mathcal{N}_{e,k}$ is never genuinely visited. Then there is some permanent link $(\tau'', \sigma'')$ with $\tau \prec \tau'' \prec \sigma \prec \sigma''$. Suppose the link $(\tau'', \sigma'')$ was placed at stage $s_1$. Then as $\sigma''$ is genuinely visited at stage $s_1$ and $\sigma \prec \sigma''$, $\sigma$ is genuinely visited at stage $s_1$. Contradiction. At stage $s_1$ we define the location marker $p(\sigma)$ and create a link $(\tau,\sigma)$. Let $s_2$ be the stage at which we travel the link to $\sigma$ and define $m_{\tau}$ to win $\$k$ on $\Delta_e^A[s_2]$. If we visit a node that is below $\tau$ but to the left of $\sigma$ then, as in the previous paragraph, $m_{\tau}$ will then have capital $k-1$. However $\sigma \prec$ TP, and so this will occur only finitely many times. Let $s_3$ be the greatest stage at which $\sigma$ is initialized. We remove any link over $\sigma$ as part of the initalization. At the next $\sigma$-stage after $s_3$ we will genuinely visit $\sigma$ and place a link if we see that $m_{\tau}(\Delta_e^A[s_3] \upharpoonright d_{\tau}[s_3]) < k$. Let $s'$ be the stage at which we travel the link to $\sigma$ and define more of $m_{\tau}$. As we never visit a node to the left of $\sigma$, no marker belonging to a node $\beta$ such that $\tau \prec \beta \ \hat{\empty} \ f \preccurlyeq \sigma$ will be enumerated after stage $s'$. As in the previous paragraph, $A$ cannot change below the use $\delta_e(\gamma_e(l(\tau))[s']$ and $m_{\tau}(\Delta_e^A) \geqslant k$. 
\end{proof}

\begin{lem}
For every $x \in \omega$, $\mathcal{H}_x$ is satisfied.
\end{lem}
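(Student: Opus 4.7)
My plan is to split on whether $\mathrm{Cof}(x)=1$ or $0$, and in each case show that for cofinitely many $m$, $\lim_t \Lambda^A(x,m,t)$ has the desired value.

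First consider the case $\mathrm{Cof}(x)=1$, and pick $m_0$ with $[m_0,\infty)\subseteq W_x$. For each $m\geq m_0$ let $\beta=\beta_{x,m}$ be the node devoted to $\mathcal{H}_x$ with $m(\beta)=m$ on the true path. Since $l_2(\beta)[s]\to\infty$ along genuine $\beta$-stages, Case~3a.1/3b is executed at cofinitely many genuine $\beta$-stages, so $\beta$'s true outcome is $\infty$. I then need to show that for each sufficiently large $q$, the marker $\lambda(x,m,q)$ is eventually enumerated into $A$ and $\Lambda^A(x,m,q)$ is redefined to $1$ with use $-1$. This reduces to showing that $\max_{\tau\preceq\beta} c_\tau(x,m)[s]$ reaches a finite limit: each such $c_\tau(x,m)$ is set only when a worker $\gamma$ with $\tau(\gamma)=\tau$ executes Subcase~2b, and by the verification of $\mathcal{N}_{e(\tau)}$ this occurs only finitely often (either because $\Gamma_{e(\tau)}^{\Delta_{e(\tau)}^A}\ne A$ and $l_1(\tau)[s]$ stays bounded, or because the martingale $m_\tau$ has stabilised the stable regime identified in the $\mathcal{N}$-lemma). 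Since markers are chosen fresh large, all but finitely many $\lambda(x,m,q)$ exceed this limit, hence get enumerated at the next $\beta$-stage with $m_2(\beta)\geq q$; consequently $\lim_t\Lambda^A(x,m,t)=1$.

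Now consider $\mathrm{Cof}(x)=0$. For every $m$, $l_2(\beta_{x,m})$ is bounded, so Case~3a.1/3b triggers at $\beta_{x,m}$ only finitely often and $\beta_{x,m}$ itself enumerates only finitely many of its markers. The only other source of enumerations of $\lambda(x,m,\cdot)$ into $A$ is Case~2a at some worker $\sigma$ with $\beta_{x,m}\ \hat{\empty}\ \infty\preceq\sigma$, which fires whenever the link $(\tau(\sigma),\sigma)$ is travelled via Subcase~1a.2. If such a link eventually becomes permanent, the verification of $\mathcal{N}_{e(\sigma)}$ forces $\sigma\ \hat{\empty}\ g$ onto the true path, and only the finitely many pairs $(x',m')$ with $\tau(\sigma)\prec\beta_{x',m'}\ \hat{\empty}\ \infty\preceq\sigma$ have their markers capriciously enumerated by this particular link.

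The main obstacle is therefore to bound, for our fixed $x$, the total number of $m$ affected by \emph{any} such permanent link. Here I would use Lemma~\ref{finiteinjury}: on the true path, for each $e$ there are only finitely many mother nodes devoted to $\mathcal{N}_e$, so there are only finitely many permanent links with $e(\tau)=e$; and the restart mechanism encoded in Case~3b of the priority-tree construction ensures that a permanent link $(\tau,\sigma)$ can contain a $\beta_{x,\cdot}$ between $\tau$ and $\sigma$ only if $e(\tau)\leq x$ or if $\tau$ lies above $\beta_{x,m}\ \hat{\empty}\ \infty$ for one of the (finitely many) $m$'s already lost. Induction along TP then gives that only finitely many $m$ are trapped between some permanent link pair for this $x$, so for cofinitely many $m$ we have $\lim_t\Lambda^A(x,m,t)=0$, hence $\lim_m\lim_t\Lambda^A(x,m,t)=0=\mathrm{Cof}(x)$. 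Combining the two cases completes the verification.
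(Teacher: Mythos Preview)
Your overall split into $\mathrm{Cof}(x)=1$ and $\mathrm{Cof}(x)=0$ matches the paper, and your idea of bounding $\max_{\tau\preceq\beta}c_\tau(x,m)$ in the first case and bounding the damage from links in the second case is the right shape. But there is a genuine gap in each half.

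\textbf{The $\mathrm{Cof}(x)=1$ case.} You assume that each $\beta=\beta_{x,m}$ on the true path has infinitely many \emph{genuine} $\beta$-stages, and from this conclude that Case~3a.1/3b fires cofinally and the markers get enumerated. But that is precisely what must be argued. A worker $\sigma$ with $\tau(\sigma)\prec\beta_{x,m}$ and $\beta_{x,m}\ \hat{\empty}\ f\preceq\sigma$ can place a link $(\tau(\sigma),\sigma)$ over $\beta_{x,m}$; while this link is in place, $\beta_{x,m}$ is skipped, and Case~2a does \emph{not} enumerate $\lambda(x,m,\cdot)$ because $\sigma$ lies under the $f$ outcome of $\beta_{x,m}$, not the $\infty$ outcome. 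Your bound on $c_\tau(x,m)$ is irrelevant if $\beta_{x,m}$ is never again genuinely visited to enumerate its markers. The missing ingredient is the scouting report in Subcase~1a: because $[m,\infty)\subseteq W_x$, at cofinally many stages the scouting report at $\tau(\sigma)$ shows that $\beta_{x,m}\ \hat{\empty}\ \infty$ would be played were there no link, and since that lies left of $\sigma$ the link is removed. This is what guarantees genuine $\beta_{x,m}$-stages; you do not mention it. You also need the companion observation that if some $\beta_{x,m_1}$ with $m_1>m$ (which has only the $\infty$ outcome) \emph{is} permanently linked over, then necessarily $\beta_{x,m_1}\ \hat{\empty}\ \infty\preceq\sigma$, so Case~2a itself enumerates the $\lambda(x,m_1,\cdot)$ markers.

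\textbf{The $\mathrm{Cof}(x)=0$ case.} Two points are underdeveloped. First, you only discuss \emph{permanent} links, but every traversal of a link in Subcase~1a.2 triggers a Case~2a enumeration; you need to argue that any such link (permanent or not) with $\beta_{x,m}\ \hat{\empty}\ \infty\preceq\sigma$ is traversed only finitely often. The paper's mechanism for this is the restart: any mother $\tau$ with $e(\tau)>x$ that could sit between $\beta_{x,m}$-nodes has been pushed below some $\beta_{x,n}\ \hat{\empty}\ \infty$, and since $\mathrm{Cof}(x)=0$ that outcome is played only finitely often, so $\tau$ (and hence any link from it) is reached only finitely often. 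Your formulation (``only if $e(\tau)\le x$ or $\tau$ lies above $\beta_{x,m}\ \hat{\empty}\ \infty$ for one of the finitely many $m$'s already lost'') is close to this but does not actually draw the finiteness conclusion. Second, the markers $\lambda(x,m,\cdot)$ are shared across all $\beta'_{x,m}$ nodes on the tree, not just the one on the true path; you must also rule out infinitely many enumerations triggered by $\beta'_{x,m}$ nodes off the true path. The paper handles this by noting such nodes are initialized infinitely often (removing any link over them) and that $\beta'_{x,m}\ \hat{\empty}\ \infty$ cannot be visited infinitely often when $\mathrm{Cof}(x)=0$.
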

\begin{proof}
Suppose $ x \not\in$ Cof. Let $\beta \prec$ TP be the node devoted to $\mathcal{H}_x$ with $m(\beta)$ least such that $\beta_{x,m}$ is not permanently linked over by $\tau$'s of stronger priority for all $m \geqslant m(\beta)$. Let $m(\beta) = m_0$. We show that $\lim_t \Lambda^A(x,m,t) = 0$ for all $m \geqslant m_0$. 
We will have $\Lambda^A(x,m,t) = 0$ unless the marker $\lambda(x,m,t)$ is enumerated into $A$. Thus we must show that we eventually stop enumerating the markers $\lambda(x,m,t)$ into $A$. As $[m_0, \infty) \not\subseteq W_x$, the $\Pi_2$ test which measures whether $[m,\infty) \subseteq W_x$ will eventually always say ``no''. So eventually the $\beta_{x,m}$ nodes will stop putting their markers into $A$ and redefining $\Lambda^A(x,m,t)$. Therefore the only way we will enumerate the markers is if there is a link $(\tau,\sigma)$ over $\beta_{x,m}$ with $\tau \prec \beta_{x,m} \ \hat{\empty} \ \infty \preccurlyeq \sigma$ and $\tau$ is accessible. We must show that if there is such a link then $\tau$ is accessible at only finitely many stages. 

For $\tau$ with $e(\tau) > x$, we will restart $\tau$ below $\beta_{x,m} \ \hat{\empty} \ \infty$. If $\beta \ \hat{\empty} \ \infty \preccurlyeq \tau \prec \beta_{x,m_1} \preccurlyeq \sigma$ and a link $(\tau,\sigma)$ is placed over $\beta_{x,m_1}$ for some $m_1 > m$, then as $\beta \ \hat{\empty} \ \infty$ can be visited at most finitely many times, only finitely many markers $\lambda(x,m_1,t)$ will be enumerated.

Now suppose $\tau$ is above $\beta$ and we place a link $(\tau,\sigma)$ over $\beta$. Subcase (2a) of the construction will enumerate markers $\lambda(x,m,t)$ only if $\beta_{x,m} \ \hat{\empty} \ \infty \preccurlyeq \sigma$. As $\sigma$ must be below $\beta \ \hat{\empty} \ f$, the markers $\lambda(x,m,t)$ will not be enumerated, and we will have $\lim_t \Lambda^A(x,m,t) = 0$.

Finally, if $\beta'_{x,m}$ is another 
node on another path of the priority tree which is visited infinitely often, we must ensure that it does not enumerate all of the markers $\lambda(x,m,t)$. The $\Pi_2$ test performed at $\beta'_{x,m}$ is the same test which is performed at $\beta_{x,m}$ and so will eventually always say ``no''. Therefore the marker $\lambda(x,m,t)$ will only be enumerated if there is a link $(\tau,\sigma)$ over $\beta'_{x,m}$ with $\beta'_{x,m} \ \hat{\empty} \ \infty \preccurlyeq \sigma$. As $\beta'_{x,m}$ is to the right of TP it will be initialized infinitely many times. Any link over $\beta'_{x,m}$ will be removed as part of the initialization, and so $\beta'_{x,m}$ is not permanently linked over. The marker $\lambda(x,m,t)$ is enumerated capriciously only if $\beta'_{x,m} \ \hat{\empty} \ \infty \preccurlyeq \sigma$. For the marker to be enumerated infinitely many times we must visit $\sigma$ below $\beta'_{x,m} \ \hat{\empty} \ \infty$ and place a link back to $\tau$. However if $\beta'_{x,m} \ \hat{\empty} \ \infty$ is visited infinitely many times, this contradicts $x \not\in$ Cof. Therefore $\beta'_{x,m}$ cannot enumerate infinitely many of its markers.

Now suppose that $x \in$ Cof and $[m_0,\infty) \subseteq W_x$. Let $\beta \prec$ TP be the node devoted to $\mathcal{H}_x$ with $m(\beta)$ least such that $\beta_{x,m}$ is not permanently linked over by $\tau$'s of stronger priority for all $m \geqslant m(\beta)$. Let $m' = \max \{ m(\beta), m_0 \}$. We show that $\lim_t \Lambda^A(x,m,t) = 1$ for all $m \geqslant m'$.  

We first show that $\lim_s c_{\tau}(x,m)[s]$ exists for all $\tau \leqslant \beta$ and so all markers $\lambda(x,m,q) > \max_{\tau \leqslant \beta} \lim_s c_{\tau}(x,m)[s]$ may be enumerated into $A$. The value of $c_{\tau}(x,m)$ can be increased only when the mother node $\tau$ 
links to some worker $\sigma$ with $\sigma \prec \beta'_{x,m}$ and subsequently defines more of the martingale $m_{\tau}$. The priority tree is finite-branching, and so there are only finitely many nodes $\beta_{x,m}$. Consider
 \[
 \langle e', k' \rangle = \max \cup_{\beta_{x,m}} \{ \langle e,k \rangle\ |\  \sigma \prec \beta_{x,m} \mbox{ has mother $\tau$ and is devoted to } \mathcal{N}_{e,k} \}.
 \] 
 Let $\sigma_0$ be the node on TP devoted to $\mathcal{N}_{e',k'}$, and suppose that no node to the left of $\sigma_0$ is visited after stage $s_0$. When we genuinely visit any node $\sigma \preccurlyeq \sigma_0$ with mother node $\tau$, the construction will check to see whether $m_{\tau}(\Delta_e^A[s] \upharpoonright d_{\tau}[s]) \geqslant k(\sigma)$. If so, we will play the $d$ outcome. As $\sigma \prec$ TP, the martingale will never decrease in capital and $c_{\tau}(x,m)$ cannot be increased again after travelling a link to some $\sigma'$ devoted to $\mathcal{N}_{e,k(\sigma)}$. If $m_{\tau}(\Delta_e^A \upharpoonright d_{\tau}[s]) < k(\sigma)$, then we will create a link from $\sigma$ back to $\tau$ at stage $s > s_0$. If the link is permanent then $c_{\tau}(x,m)$ will never be increased. If we later define more of $m_{\tau}$ we will then increase $c_{\tau}(x,m)$. Again, as $\sigma \prec$ TP, the martingale will never decrease in capital and $c_{\tau}(x,m)$ cannot be increased again after travelling a link to some $\sigma'$ devoted to $\mathcal{N}_{e,k(\sigma)}$.
There are only finitely many worker nodes $\sigma \preccurlyeq \sigma_0$ with mother node $\tau$. Therefore $\lim_s c_{\tau}(x,m)[s]$ exists for all $\tau \leqslant \beta$.

We will restart all $\tau$ nodes with $e(\tau) > x$ below $\beta_{x,m'} \ \hat{\empty} \ \infty$. As $[m', \infty) \subseteq W_x$, the $\Pi_2$ test which measures whether $[m,\infty) \subseteq W_x$ will say ``yes'' infinitely many times for all $m \geqslant m'$. If a link $(\tau,\sigma)$ is placed over $\beta_{x,m'}$, and so $\beta_{x,m'} \ \hat{\empty} \ f \preccurlyeq \sigma$, we will perform a scouting report when we arrive at $\tau$. Suppose the link $(\tau,\sigma)$ is placed at stage $s_0$ and that $s_1$ is the least stage greater than $s_0$ at which the $\Pi_2$ test says ``yes''. At the least $\tau$-stage after $s_1$, the scouting report will be successful, and $\beta_{x,m'} \ \hat{\empty} \ \infty$ will be accessible. We then will remove the link $(\tau,\sigma)$ and enumerate the marker $\lambda(x,m',t)$. In this way all of $\beta_{x,m'}$'s markers will eventually be enumerated.

Now suppose that $\beta_{x,m'} \ \hat{\empty} \ \infty \preccurlyeq \tau \prec \beta_{x,m_1} \prec \sigma$. The node $\beta_{x,m_1}$ has only the $\infty$ outcome, as it is below $\beta_{x,m'} \ \hat{\empty} \ \infty$ with $m' < m_1$. If $\beta_{x,m_1}$ is on GTP then it will enumerate its markers whenever it is accessible and define $\Lambda$ such that $\lim_t \Lambda^A(x,m_1,t) = 1$. If $\beta_{x,m_1}$ is not on GTP then the link $(\tau,\sigma)$ must be permanent. As in the previous paragraph, $\beta_{x,m'} \ \hat{\empty} \ \infty$ will be accessible at infinitely many stages. When we arrive at $\tau$ and see the link, we will enumerate markers of the form $\lambda(x,m_1,q)$. Therefore all of $\beta_{x,m_1}$'s markers will eventually be enumerated and $\lim_t \Lambda^A(x,m_1,t) = 1$.
\end{proof}
\noindent
This concludes the verification of the construction, and the proof of Theorem
\ref{th:high2cenoivr}.

\end{document}